\pgfplotsset{compat=1.17}
\newtheorem{theorem}{Theorem}[section]
\newtheorem{proposition}[theorem]{Proposition}
\newtheorem{remark}{Remark}[section]
\newtheorem{algo}{Procedure}[section]
\title{Simultaneous directional inference}
\date{August 2023}
\newcommand{\mI}{\ensuremath{\mathbb I}}
\newcommand{\mE}{\ensuremath{\mathbb E}}
\newcommand{\mP}{\ensuremath{\mathbb P}}
\newcommand{\pl}{\tilde{\ell}}
\begin{document}

 \author{Ruth Heller\thanks{
    The authors gratefully acknowledge the \textit{Rita Levi-Montalcini prize for  Scientific Cooperation between
Italy and Israel.}}\hspace{.2cm}\\Department of Statistics and Operations Research, Tel-Aviv University\\{ruheller@gmail.com}
    \\
    Aldo Solari\hspace{.2cm}\\Department of Economics, Management and Statistics, University of Milano-Bicocca\\{solari.aldo@gmail.com}}

\maketitle
\begin{abstract}
We consider the problem of inference on the signs of $n>1$ parameters. We aim to provide $1-\alpha$ post-hoc confidence bounds on the number of positive and negative (or non-positive) parameters. The guarantee is simultaneous, for all subsets of parameters. 
Our suggestion is as follows: start by using the data to select the direction of the hypothesis test for each parameter; then, adjust the  $p$-values of the one-sided hypotheses for the selection, and use the adjusted $p$-values for simultaneous inference on the selected $n$ one-sided hypotheses.
The adjustment is straightforward assuming that the  $p$-values of one-sided hypotheses have densities with monotone likelihood ratio,  and are mutually independent. 
We show that the bounds we provide are tighter (often by a great margin) than existing alternatives, and that they can be obtained by  at most a polynomial time. We demonstrate the usefulness of our simultaneous post-hoc bounds  in the evaluation of treatment effects across studies or subgroups. Specifically, we provide a tight lower bound on the number of studies which are beneficial, as well as on the number of studies which are harmful (or non-beneficial), and in addition conclude on the effect direction of individual studies, while guaranteeing that the probability of at least one wrong inference is at most 0.05. 
\end{abstract}

{\it Keywords:}  Conditional inference; Directional decisions; Meta-analysis; Multiple  testing; Partitioning principle;  Simultaneous confidence bounds.
\section{Introduction}

 Let  $\theta = (\theta_1,\ldots,\theta_n)$ be a vector of $n$ unknown real-valued parameters. A conventional analysis is two-sided multiple testing, e.g., testing the family of point null hypotheses $H_1:\theta_1=0, \ldots, H_n:\theta_n=0$ with a procedure that guarantees familywise error rate (FWER) control at a pre-specified level $\alpha$. If the $i$th null hypothesis is rejected, the conclusion is that $\theta _i \neq 0$. 
 \cite{Tukey91} argued that such a conclusion is unsatisfactory, and the analysis should  instead conclude  on the sign of the parameter, 
 i.e., that $\theta_i>0$ or that $\theta_i<0$.  
 Following rejection, it is tempting  to conclude  on the sign of the parameter based on  the data. For example, for a rejected point null hypothesis, conclude that it is positive if its point estimate is positive; otherwise conclude that it is negative. However, a directional error, referred to often as a  
 type III error may occur if we decide after rejection of the point null hypothesis $\theta_i=0$ that $\theta_i>0$ when in fact $\theta_i< 0$ (or that $\theta_i<0$ when in fact $\theta_i> 0$). Therefore, the probability of making at least one type I or type III error, henceforth referred to as the directional FWER (dFWER), 
may be larger than $\alpha$ even though $\textrm{FWER} \leq \alpha$.  See \cite{shaffer1980control} for an example of  lack of control using Holm's procedure \citep{holm1979simple} for the family of point null hypotheses when the test statistics are independent but Cauchy distributed.

 For independent test statistics 
 satisfying monotone likelihood ratio (MLR) for $\theta_i$ \citep{karlin1956theory}, existing two-sided multiple testing procedures   have been shown to control the dFWER: Holm's procedure
 \citep{shaffer1980control},
Hochberg's procedure \citep{finner1994,liu1997control}
and, more generally, any closed testing procedure 
 \citep{finner1999stepwise}. 
For example, for continuous exponential families (where the $p$-value is a monotone transformation of the sufficient statistic),  the  family of $p$-value densities satisfies the MLR property. MLR is also satisfied when the observations are normally distributed  with unknown variance when testing a single mean, or assuming equal unknown variance when comparing two means \citep{lehmann2005testing}.  

Rather than adopting existing procedures for two-sided tests for the purpose of conclusion  on the sign of each parameter, one can consider the problem of testing $2n$  one-sided hypotheses given by the following $n$ pairs: 
 \begin{eqnarray}\label{eq-hypotheses}
 H_i^-:\theta_i \leq 0,  \quad H_i^+: \theta_i \geq 0, \quad i=1, \ldots, n.
 \end{eqnarray}
The directional decisions are built-in in the  rejections of multiple testing procedures for the family of $2n$ hypotheses in \eqref{eq-hypotheses}, so the conventional error rate (FWER) coincides with the directional error rather (dFWER). 
Let $p_i$ and $q_i$ be the  $p$-values for $H_i^-$ and $H_i^+$, respectively, $i=1,\ldots,n$. 

\cite{shaffer1974bidirectional}
considered (\ref{eq-hypotheses}) for $n=1$, referring to it as a directional hypothesis-pair.
Multiple directional hypothesis-pairs were considered in 
\cite{holm1979simple,
shaffer1980control, 
shaffer2002multiplicity,
bauer1986multiple, finner1999stepwise,
guo2015stepwise}, among others.
\cite{bauer1986multiple} showed that without additional distributional assumptions, it is not possible to improve much over Holm's procedure on the $2n$  $p$-values  of the family of hypotheses in \eqref{eq-hypotheses}, for level $\alpha$ FWER control. Unfortunately, the number of rejections can be considerably smaller than Holm's procedure on two sided $p$-values (which requires more distributional assumptions for validity), see details in \S~\ref{sec-background-gr}.

In this work, we provide powerful procedures not only for making positive and negative discoveries, but also for bounding the total number of positive and negative parameters, as well as for bounding the number of positive and negative parameters in  
 any index set $\mathcal{I}\subseteq \{1,\ldots,n\}$.   Let $n^+(\mathcal{I})$ and  $n^-(\mathcal{I})$ be the number of parameters in $\mathcal I$ with positive values and negative values, respectively:
\begin{eqnarray*}\label{eq-n_plus_minus}
n^+(\mathcal{I}) = |\{ i \in \mathcal{I} : \theta_i > 0 \}|, \quad n^-(\mathcal{I}) = |\{ i \in \mathcal{I} : \theta_i < 0 \}|.
\end{eqnarray*}
For simplicity of notation, we use $n^+$ and $n^-$ instead of $n^+([n])$ and  $n^-([n])$, where $[n]$ is the index set $\{1,\ldots,n\}$. 
Our first goal is to provide tight  lower bounds for  $n^+(\mathcal{I})$ and  $n^-(\mathcal{I})$, for any $\mathcal I$. More formally, we aim to 
 provide functionals $\ell_{\alpha}^+(\cdot)$, $\ell_{\alpha}^-(\cdot)$, so that the following  guarantee holds: 
\begin{eqnarray}\label{eq-coverage_plus_minus-unconditional}
\mP_{\theta}\Big(
n^{+}(\mathcal{I}) 
\geq 
\ell_{\alpha}^{+}(\mathcal{I}),\,\,
n^{-}(\mathcal{I}) 
\geq 
\ell_{\alpha}^{-}(\mathcal{I}),
 \mathrm{\,\,for\,\,all\,\,}\mathcal{I}  \Big) \geq 1- \alpha.
\end{eqnarray}
 We shall demonstrate in this work that much tighter  lower bounds  are possible than the bounds based on  the number of discoveries in each direction in $\mathcal I$ (i.e., $|i\in \mathcal{I}: H_i^- \textrm{ is rejected}|$ for the lower bound for $n^+(\mathcal{I})$; $|i\in \mathcal{I}: H_i^+ \textrm{ is rejected} |$ for the lower bound for $n^-(\mathcal{I})$) . 
 
Tight bounds are important in various applications. 
Consider, for example, the case of  evaluating a treatment effect  on multiple subgroups (or  cohorts). So $\theta_i$ is  the average treatment effect for subgroup $i$, and a  positive lower bound on both the number of subgroups for which $\theta_i>0$ ($n^+$), and on the number of subgroups for which $\theta_i<0$ ($n^-$),  conveys the heterogeneity of the treatment effect.  This can be of major clinical importance, since it provides the researcher with the knowledge that at least for some subgroups the treatment is harmful rather than  effective (i.e., that there is a qualitative interaction, \citealt{gail1985testing, Zhao2019}).

In some settings it is enough to infer on positive and non-positive findings, i.e., the interest is only in lower bounds on $n^+(\mathcal I)$ and on $n^-(\mathcal I)+n^0(\mathcal I)$, where $n^0(\mathcal I) = |\mathcal I|-n^+(\mathcal I)-n^-(\mathcal I) = |i\in \mathcal I: \theta_i = 0|$. 
For example,    in systematic reviews in clinical trials, when it is important to evaluate the bounds on the number of studies with a positive treatment effect (i.e., $\theta_i>0$), and a positive lower bound on $n^-+n^0$ is of concern since it suggests that the treatment may have no effect or may even be  harmful \citep{IntHout16}. More generally, in replicability analysis \citep{Bogomolov22}, if interest lies in establishing that there is a positive association (i.e., $\theta_i>0$) in more than one study, then providing with $1-\alpha$ confidence the lower bound on $n^+(\mathcal I)$ as well as on $n^-(\mathcal I)+n^0(\mathcal I)$ is of great interest: the lower bound on $n^+(\mathcal I)$ provides the minimal extent of replicability   in the desired positive direction that can be stated with confidence; the lower bound on $n^-(\mathcal I)+n^0(\mathcal I)$ provides the evidence of the limit on replicability -- a  positive lower bound means that at least some studies do not replicate. 

Thus we would like to provide functionals $\pl_\alpha^+(\cdot)$, $\pl_\alpha^-(\cdot)$ so that the following guarantee holds: 
\begin{equation}\label{eq-positiveonly-simultguarantee}
  \mP_{\theta}\Big(
n^{+}(\mathcal{I}) 
\geq 
\pl_{\alpha}^{+}(\mathcal{I}),\,\,
n^{-}(\mathcal{I})+n^{0}(\mathcal{I}) 
\geq 
\pl_{\alpha}^{-}(\mathcal{I}),
 \mathrm{\,\,for\,\,all\,\,}\mathcal{I}  \Big) \geq 1- \alpha.
\end{equation}
Since $n^{-}_\alpha(\mathcal{I})\leq n^{-}_\alpha(\mathcal{I})+n^0(\mathcal I),$ the functionals $\ell_\alpha^+(\mathcal{I})$ and $\ell_\alpha^-(\mathcal{I})$ that satisfy \eqref{eq-coverage_plus_minus-unconditional} also satisfy \eqref{eq-positiveonly-simultguarantee}. So it should be possible to improve the lower bounds when aiming at the more relaxed requirement \eqref{eq-positiveonly-simultguarantee}.  
For this purpose, we modify $H_i^+$ in \eqref{eq-hypotheses} by removing the equality sign. So the hypotheses considered for $i=1,\ldots,n$ are:
\begin{eqnarray}\label{eq-setup}
 H_i^-: \theta_i \leq 0, \quad K_i: \theta_i > 0, \quad i=1,\ldots,n.
 \end{eqnarray}
Exactly one of the hypotheses pairs is true, i.e., in the family of $2n$ null hypotheses defined in \eqref{eq-setup},  there are exactly $n$ true one-sided null hypotheses. Therefore, for making positive and non-positive discoveries
 more powerful procedures than those that are typically used with two-sided testing can be devised \citep{bauer1986multiple,  guo2015stepwise}.   \cite{bauer1986multiple} showed that for testing the $2n$ hypotheses defined in \eqref{eq-setup} using Bonferroni, discoveries are made with  the Bonferroni cut-off $\alpha/n$, which is twice as large as as the Bonferroni cut-off for the family of hypotheses \eqref{eq-hypotheses}. This can be understood from the work of \cite{shaffer1986modified}, who showed that when performing a Bonferroni procedure, the proper correction factor for FWER control is not the number of hypotheses tested, but the maximum number of null hypotheses that can simultaneously be true \citep{goeman2010}.  In \S~\ref{sec-background-gr} we review additional FWER and FDR controlling procedures for \eqref{eq-setup}.

The bounds we develop are uniformly tighter when considering the pairs of hypotheses in \eqref{eq-setup} rather than in \eqref{eq-hypotheses}. Therefore,  the approach based on the pairs of hypotheses in \eqref{eq-setup} should always be preferred -- unless it is important to infer on the positive as well as the negative parameters. 
Support for testing \eqref{eq-setup} is also given by \cite{Tukey91}. 
With regard to a single pair of hypotheses, \cite{Tukey91} argued as follows: 
``Statisticians classically asked the wrong question -- and were willing to answer with a lie, one that was often a downright lie. They asked ``Are the effects of A and B different?'' and they were willing to answer ``no.''
All we know about the world teaches us that the effects of A and B are always different--in some decimal place--for any A and B. Thus asking ``Are the effects different?'' is foolish.
What we should be answering first is ``Can we tell the direction in which the effects of A differ from the effects of B?'' In other words, can we be confident about the direction from A to B? Is it ``up,'' ``down'' or ``uncertain''?". So in cases where the point null hypotheses are never true, the hypotheses in \eqref{eq-setup} should clearly  be preferred over the hypotheses in \eqref{eq-hypotheses}. According to \cite{Tukey91, Jones2000}, this is often the case.

\subsection{FWER and FDR controlling procedures for testing \eqref{eq-hypotheses} and \eqref{eq-setup}}\label{sec-background-gr}

Let $x_i=\min(p_i,q_i), i=1, \ldots, n$ be the $n$ smallest  $p$-values of the $2n$ one-sided hypotheses. Let $x_{(1)}\leq \ldots \leq x_{(n)}$ be their ordered values. 
For critical values $\alpha_1\leq \ldots \leq \alpha_n$, the procedure for directional inference is as follows: 
\begin{enumerate}
    \item Let 
\begin{eqnarray}\label{eq-stepwiseR}
R =
\begin{cases}
\max \lbrace r: x_{(r)} \leq \alpha_i, \  \forall i\leq r\rbrace & \text{for a step-down procedure } , \\
\max \lbrace r: x_{(r)} \leq \alpha_r\rbrace & \text{for a step-up procedure }.
\end{cases}
\end{eqnarray} 
    \item For $r\leq R$, if  $x_{(r)}$ is the $p$-value for $H_i^-$, declare $\theta_i>0$, otherwise declare $\theta_i< 0$ if testing \eqref{eq-hypotheses} or $\theta_i\leq 0$ if testing \eqref{eq-setup}.
\end{enumerate}

For testing \eqref{eq-hypotheses} with FWER control, a valid procedure for any dependence structure among the  test statistics is Holm's procedure on the $2n$ $p$-values, i.e., the step-down procedure with critical values   $\alpha_i= \alpha/(2n-i+1), i=1,\ldots,n$. 

The next procedures were shown to provide FWER control under the additional distributional  assumption of uniform validity
\citep{guo2015stepwise, Zhao2019, ellis2020gaining}. A valid $p$-value $x_i$ is {\it uniformly valid} if for all $\tau \in (0,1)$, $x_i/\tau$ given $x_i\leq \tau$ is valid, i.e., if the null hypothesis is true then 
$$ \mP(x_i \leq t \mid x_i\leq \tau) \leq \frac t\tau \ \  \forall \ \ 0\leq t\leq \tau \leq 1.$$
A sufficient condition for uniform validity is the MLR property of the test statistic for $\theta_i$.
It is in fact sufficient to require that the inequality above be satisfied for $\tau=1/2$ (this is the weaker condition of conditional validity, defined in $(A1)$ of \S~\ref{sec-setup}). 
These procedures are detailed in the first two rows of Table \ref{tab:procedures_overview}.

Note that the procedure in row 1 has larger critical values than Holm's procedure on the $2n$ $p$-values (which has critical values $\alpha_i = \alpha/(2n-i+1), \ i=1,\ldots,n$),  and hence can lead to considerably more discoveries.  Moreover,  the procedure in row 2 has critical values that are almost twice as large as the procedure in row 1, so it should be preferred if it is enough to discover positive and non-positive (rather than negative) parameters. 

The procedures in the last two rows of Table \ref{tab:procedures_overview} provide FDR control on positive and negative discoveries (row 5) and on positive and non-positive discoveries (row 6). 
The procedure in row 6 provides a uniform improvement, in terms of the number of discoveries, over the procedure in row 5, since the critical values are twice as large. 

\begin{table}
\caption{\label{tab:procedures_overview}Overview of existing and proposed procedures for error criterion and the type of inference considered. The critical values refer to the step-wise procedure outlined in Equation (\ref{eq-stepwiseR}). The procedures in the first four rows require conditional validity and independence of the $n$ test statistics, i.e., conditions (A1)-(A2) in \S~\ref{sec-setup}. The procedures in the last two rows require only independence of the $n$ test statistics.}
\centering
\scalebox{0.9}{
\begin{tabular}{llccl}
\hline
Criterion & Procedure & Type of inference  & Critical values $\alpha_i$\\
 \hline
\multirow{2}{*}{FWER}  & Holm $n$ two-sided \citep{shaffer1980control} & positive/negative  & $\alpha/\{2(n-i+1)\}$\\
  & Guo-Romano \citep{guo2015stepwise} & positive/non-positive  & $\alpha/(n-i+1 + \alpha)$\\
   \hline
\multirow{2}{*}{FDP} & Directional closed testing (\S~\ref{sec-setup})  & positive/negative &  - \\
 & Partitioning with adaptive tests (\S~\ref{sec-partitioning}) & positive/non-positive & - \\
 \hline
\multirow{2}{*}{FDR} & Benjamini-Hochberg \citep{Benjamini05}& positive/negative  & $i\alpha/2n$\\
 & Guo-Romano \citep{guo2015stepwise} & positive/non-positive  & $i\alpha/n$\\
\hline
\end{tabular}
}
\end{table}

We conclude this section by a brief overview of our contributions, which include the two procedures in rows 3 and 4 of Table \ref{tab:procedures_overview}.

\subsection{A brief overview of our contributions}\label{sec-Introduction-overview}

Closed testing procedures  
are  useful not only for making discoveries, but also in order to provide  simultaneous confidence lower bounds on the number of true discoveries in $\mathcal I$, or equivalently simultaneous confidence upper bounds on the false discovery proportion (FDP)  in $\mathcal I$ (\citealt{goeman2011, blanchard20}, reviewed in \S~\ref{sec-background}). These bounds are often referred to as post-hoc confidence bounds. 
When testing point null hypotheses with a closed testing procedure, the number of true discoveries are the number of parameters which are not zero. A lower bound on the number of parameters which are not zero does not reveal information about the number of positive and negative parameters in the set, and therefore may not be satisfactory. More specifically, from having a functional $\ell_{\alpha}(\cdot)$
that satisfies 
\begin{eqnarray}\label{eqn-undirectional-lb-guarantee}
\mP_{\theta}\Big(
n^{+}(\mathcal{I}) +n^{-}(\mathcal{I}) 
\geq 
\ell_{\alpha}(\mathcal{I}),
 \mathrm{\,\,for\,\,all\,\,}\mathcal{I}  \Big) \geq 1- \alpha.
\end{eqnarray}
it is not clear how to extract the lower bounds to satisfy \eqref{eq-coverage_plus_minus-unconditional}.

For finding $\ell_{\alpha}^+(\cdot)$ and $\ell_{\alpha}^-(\cdot)$ that satisfy \eqref{eq-coverage_plus_minus-unconditional}, we suggest the following two step approach, which we refer to henceforth as {\em directional closed testing} (DCT): first, 
select from each pair the hypothesis to test (based on the data); second, on the selected $n$ one-sided hypotheses, apply an $\alpha$ level closed testing procedure (reviewed in \S~\ref{sec-background}). Of course, the second step has  to adjust for the first step of selection from the same data. With this approach, we can provide simultaneous confidence bounds on $n^+(\mathcal I)$ and $n^-(\mathcal I)$ for any $\mathcal I$, building upon the work of \cite{goeman2011} that showed how to obtain simultaneous bounds for a closed testing procedure. 
In particular, we can provide tight lower bounds on $n^+$ and $n^-$, as well as positive and negative discoveries, with the guarantee that  the probability of any wrong inference 
is at most $\alpha$. 

Interestingly, a closed testing procedure on two-sided $p$-values (suggested by \cite{finner1999stepwise} for dFWER control) implicitly starts by the first step in the DCT procedure.  However, viewing closed testing on two-sided $p$-values as a DCT procedure has several advantages. First, by realizing that the first step is the selection of the one-sided hypotheses to test, the follow-up inference is explicit on the selected one-sided hypotheses and therefore controls the usual error rate (FWER), so there is no need for a special definition of directional error rate (dFWER), see \cite{Benjamini10} for a discussion of the potential harms in having too many concepts of error rates.   
 Second, in addition to 
 concluding  on the signs of   individual parameters DCT provides simultaneous lower bounds on $n^+(\mathcal I)$ and $n^-(\mathcal I)$ as described in \eqref{eq-bounds_I_plus_minus}.  
Third,  the sufficient assumption $(A1)$ in \S~\ref{sec-setup} is more general than the distributional assumption  
for dFWER control considered in \cite{shaffer1980control, 
finner1999stepwise, sarkar2004two, guo2015stepwise}. 

Another advantage of DCT is that the second step in the DCT procedure can start by testing for qualitative interactions (i.e., testing the null hypothesis that all parameters are non-positive or all parameters are non-negative) using the state-of-art approach  suggested in \cite{Zhao2019}.  The  power to detect qualitative interactions will thus be as in \cite{Zhao2019}, but in addition DCT provides (at no additional cost for multiplicity) discoveries  and confidence lower bounds on the number of positive and negative parameters for any subset of parameters of interest, see \S~\ref{subsec - QI} for details.

It is worth noting that once we have obtained the confidence lower bounds on the number of positive and negative parameters, we can automatically derive the corresponding upper bounds, i.e. if $n^{+}(\mathcal{I}) \geq \ell_\alpha^+(\mathcal{I}), n^{-}(\mathcal{I}) \geq \ell_\alpha^-(\mathcal{I})$ holds, then also $n^{+}(\mathcal{I}) \leq  |\mathcal{I}| - \ell^-_\alpha(\mathcal{I}), n^{-}_\alpha(\mathcal{I}) \leq |\mathcal{I}| - \ell^{+}_\alpha(\mathcal{I})$ holds.

For finding $\pl_{\alpha}^+(\cdot)$ and $\pl_{\alpha}^-(\cdot)$ that satisfy \eqref{eq-positiveonly-simultguarantee}, we suggest applying the partitioning principle (reviewed in \S~\ref{sec-background}) to the $2^n$ orthants defined by $\{(-\infty,0],(0, \infty)\}^n $.
 Each orthant defines the {\it orthant hypothesis} that $\theta$ is in the orthant. There are $2^n$ orthant hypotheses, but  exactly one orthant  hypothesis is true.  Therefore, it is enough to test each at level $\alpha$. This is in contrast to DCT, where multiple testing adjustments are made (adjustments are necessary, since for the zero vector $\theta = 0$ the $2n$ hypotheses in \eqref{eq-hypotheses} are true).
  
We  show the general approach for extracting the bounds,  as well as the specific polynomial time algorithm that uniformly improves over the DCT bounds $\ell_{\alpha}^+(\cdot), \ell_{\alpha}^-(\cdot)$, so that   $\pl_{\alpha}^+(\cdot)\geq \ell_{\alpha}^+(\cdot)$, $\pl_{\alpha}^-(\cdot)\geq \ell_{\alpha}^-(\cdot)$. 

In \S~\ref{sec-background}, we review the necessary statistical background and some of the notation and definitions we are going to use throughout this work. 
 Then, we explain in detail  DCT  in  \S~\ref{sec-setup}, and partitioning  in  \S~\ref{sec-partitioning}. 
In \S~\ref{sec - Zhao} we provide usages of our suggested methods: for subgroup analysis,  for providing inference following the test of qualitative interactions, and for enhancing meta-analysis. In \S~\ref{sec - sim} we compare and contrast the methods in simulations, and provide some practical recommendations.  
Moreover, we quantify the improvement upon the methods targeting FWER or FDR control in Table \ref{tab:procedures_overview} (their lower bounds are computed merely by counting the number of rejections in each direction).
Final remarks are provided in \S~\ref{sec - disc}. All proofs are in  \S~\ref{Appendix-proofs} of the Supplementary Material (SM). R code reproducing the examples is available at \verb"https://aldosolari.github.io/directionalinference/".

\section{Background on simultaneous inference}\label{sec-background}

Let $H_1,\ldots, H_n$ be $n$ base null hypotheses, or base hypotheses, which form the basic units for the inference. The corresponding $p$-values are $x_1, \ldots, x_n$. Each $x_i$ is a {\it valid $p$-value} for its respective null hypothesis $H_i$, i.e., its distribution  is uniform or stochastically larger than uniform when $H_i$ is true. 
For $\mathcal I \subseteq \{1,\ldots,n\}$, let $H_{\mathcal I}: \bigcap_{i\in \mathcal I}H_i$
be the \emph{intersection hypothesis} of the $|\mathcal I|$  base hypotheses in $\mathcal I$. This hypothesis is true if $\forall i\in \mathcal I, \ H_i$ are true. 
The  level $\alpha$ test of $H_{\mathcal I}$, denoted by $\phi_{\mathcal I}\in \{0,1\}, $ is called the  \emph{local test}. It satisfies that  
 $\mP_{\theta}(\phi_{\mathcal I}=1)\leq \alpha$  when $\theta\in H_{\mathcal I}$.

 \begin{paragraph}{\bf The closed testing procedure} 
The closed testing procedure corrects the local tests for multiple testing by rejecting $H_{\mathcal I}$ only if it is rejected by all intersection hypotheses that include $\mathcal I$, i.e., if  $\bar{\phi}_{\mathcal{I}}=1$, where
$\bar{\phi}_{\mathcal{I}} = \min\{\phi_{\mathcal{J}}: \mathcal{J} \supseteq \mathcal{I}\}.$
\cite{marcus1976} showed that the adjusted tests $\bar{\phi}_{\mathcal{I}}$ have FWER control:
$
\mP_{\theta}\Big(\bar{\phi}_{\mathcal{I}}=0 \mathrm{\,\,for\,\,all\,\,}\mathcal{I} \subseteq \mathcal{T} \Big) \geq 1- \alpha,
$
where $\mathcal T$ is the (unknown) index set of the true base hypotheses. The \emph{discoveries} are the base hypotheses rejected, 
$\mathcal{D}_{\alpha} =  \{i \in [n]: \bar{\phi}_{i}=1\}  
$. 
\end{paragraph}

\begin{paragraph}
{\bf Simultaneous post-hoc bounds}
\cite{goeman2011} showed that the lower bound for the number of true discoveries in $\mathcal{I}$, $|\mathcal I\setminus  \mathcal{T}|$,  is
\begin{equation}\label{eq-undirectional-lb}
\ell_{\alpha}(\mathcal{I}) = \min(|\mathcal{I} \setminus \mathcal{J}|: \mathcal{J} \subseteq \mathcal{I}, \bar{\phi}_{\mathcal{J}}=0), 
\end{equation} with the guarantee that 
$
 \mP_{\theta}\Big(   |\mathcal{I} \setminus \mathcal{T}| \geq  \ell_{\alpha}(\mathcal I), \mathrm{\,\,for\,\,all\,\,}\mathcal{I} \Big) \geq 1- \alpha.
 $ For two-sided hypotheses, this is the guarantee in \eqref{eqn-undirectional-lb-guarantee}. 
Note that  $\ell_{\alpha}(\cdot)$ can be used to upper bound the FDP, which is $|\mathcal I\cap \mathcal{T}|/|\mathcal I|$:  with at least $(1-\alpha)$ confidence,  $|\mathcal I\cap \mathcal{T}|/|\mathcal I|$ is at most $(|\mathcal{I}| -\ell_{\alpha}(\mathcal I))/|\mathcal I|$. 
 \end{paragraph}

\begin{paragraph}{\bf The combining function for testing an intersection hypothesis}
 The {\it combining function} $f: [0,1]^{|\mathcal I|} \longrightarrow [0,1]$, maps the $|\mathcal I|$  $p$-values for $H_i, i\in \mathcal I$, into a valid $p$-value for testing $H_{\mathcal I}$. The local test is 
 $$\phi_{\mathcal{I}} = \mathds{1}\{ f(\{x_i, i\in \mathcal I\}) \leq \alpha \}.$$
For mutually independent $p$-values, a popular combining function is Fisher, which has excellent power properties for a wide range
of signals \citep{benjamini2008screening, hoang2021combining} is $f_{\mathrm{Fisher}}(x_1,\ldots,x_d) = \mP(\chi^2_{2d}\geq -2\sum_{i=1}^d \log x_i)$, where $\chi^2_{2d}$ is a chi-squared random variable with $2d$ degrees of freedom.  Let $x_{(1)}\leq \ldots \leq x_{(d)}$ be the ordered values. Another popular combining function is Simes\footnote{This combining method  produces a valid $p$-value for the intersection hypothesis also if the $p$-values being combined are positive dependent, \cite{goeman2017}.}, $f_{\mathrm{Simes}}(x_1,\ldots,x_d) =\min_{1\leq i\leq d}\left\lbrace \frac{d}{i}x_{(i)}\right\rbrace$.  
It  is essentially based on a single quantile, and thus does not take full advantage of the pooled evidence when the number of positive (or negative) parameters is large. \cite{Bogomolov21} suggested the   the following combination function,
$\min_{1\leq i\leq d}\left\lbrace \frac{2(\sum_{l=1}^d\mathds{1}(x_{l}>0.5)+1)}{i}x_{(i)}\right\rbrace$, 
which uses the plug-in estimator for the number of null hypotheses combined. 
It may be much smaller than Simes, except in the case where  the intersection is only of two hypotheses. Therefore, it makes sense to switch to the Simes combining method in that case. So the modified Simes (mSimes) combining function we shall consider is
: 
\begin{equation}\label{eq-modified-adaptive-Simes}
 f_{\mathrm{mSimes}}(x_{1}, \ldots, x_{d})  =
\begin{cases}
\min_{1\leq i\leq d}\left\lbrace \frac{2(\sum_{l=1}^d\mathds{1}(x_{l}>0.5)+1)}{i}x_{(i)}\right\rbrace
& \text{if } d > 2, \\
 \min_{1\leq i\leq d}\left\lbrace \frac{d}{i}x_{(i)}\right\rbrace & \text{otherwise}.
\end{cases}
\end{equation}

Our assumption regarding  $f(\cdot)$, satisfied by all the combining functions above, is:

\begin{description}
\item[$(A0)$] 
The combining function  
satisfies monotonicity:
$f(x_1, \ldots, x_d) \leq f((x'_1, \ldots, x'_{d}))$
for $x_i \leq x_i'$ for all $i =1,\ldots,d$; and symmetry:
$f(x_1, \ldots, x_d) = f( x_{j_1}, \ldots, x_{j_d})$
for any permutation $(j_1,\ldots,j_{d})$ of $(1,\ldots,d)$.
\end{description}
\end{paragraph}

\begin{paragraph}{\bf The computational complexity for evaluating post-hoc bounds}
 Computing $\ell_{\alpha}(\mathcal I)$ involves the evaluation of exponentially many tests, which hinders its practical application. For specific combining functions, however, there exist shortcuts to derive the closed testing results in polynomial time. Quadratic time shortcuts for combining functions satisfying $(A0)$ have been developed
for FWER control \citep{dobriban2020fast} and simultaneous FDP control \citep{goeman2011, goeman2021}.  
\citet{tian2021large} showed that if the combining function satisfies also the separability condition (see Appendix B of \citet{tian2021large} for the definition), 
both
FDP and FWER shortcuts can be reduced to linear time, after an initial sorting of the $p$-values.  Separability is satisfied by the Fisher's combining function, as well by the combining function based on generalized means of $p$-values \citep{vovk2020combining}. For the Simes' combining function, the shortcut described in \citet{goeman2017} allows calculation of $\ell_\alpha(\mathcal{I})$ in linear time, after an initial preparatory step that takes
linearithmic time. 
\end{paragraph}

\begin{paragraph}{\bf The partitioning principle}
Closed testing is a fundamental principle for controlling familywise error, but there is another equally fundamental principle known as \emph{partitioning principle} \citep{finner2002partitioning}. The key idea  is to partition the parameter space into disjoint regions and simultaneously test the hypotheses that the true parameter value lies within each region. Since the true value lies within exactly one of these regions, there is only one true null hypothesis, and all the others are false. Therefore, if the test for the true null hypothesis controls the Type I error probability, then the FWER is also controlled.
The partition procedure for the base hypotheses $H_1,\ldots, H_n$ defines for every $\mathcal{I} \subseteq [n]$ the \emph{partitioning hypothesis}
$$J_{\mathcal{I}} = \big\{\bigcap_{i\in \mathcal{I}} H_i \big\} \cap \big\{ \bigcap_{j\in [n] \setminus \mathcal{I}  }H_j^c \big\},$$ where $H_j^c$ is the 
$j$th alternative hypothesis, i.e. the complement of $H_j$. The partitioning hypothesis $J_\mathcal{I}$ is true if and only if all $H_i$, $i \in \mathcal{I}$, are true and all $H_i$, $i \in [n] \setminus\mathcal{I}$, are false. The discoveries are all indices $i\in [n]$ for which 
all $J_{\mathcal{I}}$ with $\mathcal{I} \ni i$ are rejected at level $\alpha$. 

Closed testing and partitioning are equivalent principles for FWER control: for every closed testing procedure, there exists a partitioning procedure that rejects exactly the same hypotheses. Likewise, for every partitioning procedure, there exists a closed testing procedure that rejects exactly the same hypotheses \citep{goeman2021}. The partitioning principle has been argued to be  more convenient than closed testing  for obtaining confidence intervals and sets in \cite{stefansson1988confidence, finner2021partitioning}. Since the family of hypotheses in \eqref{eq-setup} has $n$ true null hypotheses which are in exactly one of the $2^n$ possible orthants, we suggest in \S~\ref{sec-partitioning} to use the partitioning principle in order to obtain confidence sets for $n^+(\mathcal I)$ and $n^0(\mathcal I)+n^-(\mathcal I)$. 
\end{paragraph}

\section{Directional closed testing}\label{sec-setup}

Let $p = (p_1,\ldots, p_n)$ denote the $p$-value vector for $(H_1^-, \ldots, H_n^-)$, and  $q = (q_1,\ldots, q_n)$ denote the $p$-value vector for $(H_1^+, \ldots, H_n^+)$. For simplicity, we assume that the test statistics are continuous and $q_i=1-p_i$, $i\in[n]$. Our assumptions for the $p$-values are: 
\begin{description}
 \item[$(A1)$] Each $p$-value is conditionally valid:
\begin{eqnarray*}
&& \sup_{\theta \in H^-_i} \mP_\theta(2p_i \leq x \mid p_i\leq 1/2)\leq x \quad \forall x \in [0,1],\\
&&\sup_{\theta \in H^+_i} \mP_\theta(2q_i \leq x \mid p_i> 1/2)\leq x \quad \forall x \in [0,1].
\end{eqnarray*}
\item[$(A2)$] $p_1,\ldots,p_n$ are mutually independent.
\end{description}

We are interested in simultaneous inference on the selected family of $n$ directional hypotheses $\{H_i^-: p_i\leq 1/2 \}\cup \{H_i^+: p_i> 1/2 \}.$  
We propose the following procedure. 

\begin{algo}[Directional closed testing]\label{algo-conditional closed testing} $ $
 \begin{enumerate}
        \item[Step 1] Select the $n$ one-sided hypotheses  for testing: $\{H_i^-: p_i\leq 1/2 \}\cup \{H_i^+: p_i> 1/2 \}$. Let $\mathcal S^- = \{i: p_i\leq 1/2  \}$ and  $\mathcal S^+ = \{i: p_i > 1/2  \}$ be, respectively, the indices for which we test  $H_i^-$ and $H_i^+$. 
        \item[Step 2] Apply a level $\alpha$ closed testing procedure on the family of $n$ one-sided selected hypotheses
        ,  using the conditional $p$-values $\{2p_i, i\in \mathcal S^- \}\cup \{2q_j, j\in \mathcal S^+ \}.$  
        The intersection hypothesis $$\displaystyle H_{\mathcal{I}} : \big\{\bigcap_{i \in \mathcal{I} \cap \mathcal{S}^-} H_i^- \big\}\cap  \big\{\bigcap_{i \in \mathcal{I} \cap \mathcal{S}^+} H_i^+ \big\}$$    is true if and only if all selected hypotheses with $i \in \mathcal{I}$ are true. For all $\mathcal I \subseteq [n]$, testing $H_{\mathcal{I}}$ at level $\alpha$ is done by the local test
\begin{eqnarray}\label{eq-combining_f}
\phi_{\mathcal{I}} = \mathds{1}\{ f(\{2p_i, i\in \mathcal I \cap \mathcal S^- \}\cup \{2q_j, j\in \mathcal I \cap \mathcal S^+ \}) \leq \alpha \}
\end{eqnarray}
using a combining function $f(\cdot)$. 
 \end{enumerate}
 \end{algo}

Note that $f(\{2p_i, i\in \mathcal I \cap \mathcal S^- \}\cup \{2q_j, j\in \mathcal I \cap \mathcal S^+ \}  = f(2\min(p_i,q_i), i \in \mathcal I)$,  so the test of $H_{\mathcal I}$ coincides with the test of the intersection of two-sided null hypotheses in $\mathcal I$ using  two-sided $p$-values. 
We denote by $\mathcal S$ the vector of signs that we condition on in Step 1 of the DCT Procedure \ref{algo-conditional closed testing}: $$\mathcal S = (sign(p_1-1/2), \ldots, sign(p_n-1/2)), \quad
 sign(p_i-1/2)  =
\begin{cases}
-1 & \text{if } p_i \leq 1/2, \\
1 & \text{if } p_i > 1/2.
\end{cases}
$$
Conditional on the vector of signs  $\mathcal S \in \{-1,1\}^n, $ the probability of rejecting the intersection hypotheses of the true nulls among the selected is at most $\alpha$, since in the local test in \eqref{eq-combining_f}  we combine (conditional) $p$-values that have each a distribution that is  at least as large as uniform given $\mathcal S$. We formalize this in the next proposition.

\begin{proposition}\label{prop-conditionalCT}
Let $p_1,\ldots,p_n$ fulfil conditions $(A1)-(A2)$, and let $\theta$ be the true unknown vector of parameters. So among the $n$ selected for testing in Step 1 of Procedure \ref{algo-conditional closed testing}, $\mathcal T = \lbrace \{i: \theta_i \in H_i^- \}\cap \mathcal S^-\rbrace\cup \lbrace \{i: \theta_i \in H_i^+ \}\cap \mathcal S^+\rbrace$ are the indices of true one sided null hypotheses. 
Then the test of the intersection of the hypotheses in $\mathcal T$ satisfies 
conditional type I error control:
\begin{equation}\label{eq-conditionalguarantee}
\sup_{\theta' \in H_{\mathcal{T}}}  \mP_{\theta'}\Big(  \phi_{\mathcal{T}} = 1 \mid \mathcal S \Big) \leq \alpha.
\end{equation}

\end{proposition}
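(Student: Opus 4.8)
The plan is to work entirely within the conditional probability measure $\mP_{\theta'}(\cdot \mid \mathcal{S} = s)$ for a fixed realized sign vector $s \in \{-1,1\}^n$, and to verify, within this measure, the two ingredients that make the combining function $f$ into a valid level-$\alpha$ test of $H_{\mathcal{T}}$: that each $p$-value fed into $f$ is (conditionally) stochastically larger than uniform, and that these $p$-values are (conditionally) mutually independent. Once both hold, the defining property of a combining function (\S~\ref{sec-background}) --- that $f$ maps valid independent $p$-values to a valid $p$-value for the intersection --- yields $\mP_{\theta'}(f(\cdots) \leq \alpha \mid \mathcal{S} = s) \leq \alpha$ directly; taking the supremum over $\theta' \in H_{\mathcal{T}}$ and noting the bound is uniform in $s$ then gives \eqref{eq-conditionalguarantee}.

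First I would make explicit the statistic underlying $\phi_{\mathcal{T}}$. For each $i \in \mathcal{T}$ the conditional $p$-value entering $f$ is $Y_i = 2 p_i$ when $i \in \mathcal{S}^-$ and $Y_i = 2 q_i$ when $i \in \mathcal{S}^+$. By the definition of $\mathcal{T}$, for every $\theta' \in H_{\mathcal{T}}$ the selected one-sided null at each such $i$ is true --- that is, $\theta_i' \leq 0$ for $i \in \mathcal{T} \cap \mathcal{S}^-$ and $\theta_i' \geq 0$ for $i \in \mathcal{T} \cap \mathcal{S}^+$ --- so the relevant half of assumption $(A1)$ applies coordinatewise to each $Y_i$.

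The key step, which I expect to be the main obstacle, is upgrading the single-coordinate conditioning built into $(A1)$ to conditioning on the whole sign vector $\mathcal{S}$, while simultaneously extracting conditional independence. Here I would lean on $(A2)$: since $p_1, \ldots, p_n$ are mutually independent, the event $\{\mathcal{S} = s\}$ factorizes as $\bigcap_{i=1}^n \{\mathrm{sign}(p_i - 1/2) = s_i\}$, each factor depending on $p_i$ alone. It follows that the conditional law of $Y_i$ given $\{\mathcal{S} = s\}$ coincides with its law given only $\{\mathrm{sign}(p_i - 1/2) = s_i\}$, which is precisely the conditioning event in $(A1)$; hence $\mP_{\theta'}(Y_i \leq x \mid \mathcal{S} = s) \leq x$ for every $i \in \mathcal{T}$ and all $x \in [0,1]$. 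The same factorization shows that, conditional on $\{\mathcal{S} = s\}$, the family $\{Y_i : i \in \mathcal{T}\}$ remains mutually independent, since it is a coordinatewise transformation of independent quantities restricted to a product event.

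Finally I would close by invoking the combining-function validity within the conditional measure: given $\mathcal{S} = s$, the $\{Y_i : i \in \mathcal{T}\}$ are independent and each conditionally valid, so $f$ returns a conditionally valid $p$-value for $H_{\mathcal{T}}$, giving $\mP_{\theta'}(\phi_{\mathcal{T}} = 1 \mid \mathcal{S} = s) = \mP_{\theta'}(f(\{Y_i : i \in \mathcal{T}\}) \leq \alpha \mid \mathcal{S} = s) \leq \alpha$. As this holds for every $s$ and uniformly over $\theta' \in H_{\mathcal{T}}$, the supremum in \eqref{eq-conditionalguarantee} is at most $\alpha$. The only point demanding care is that the combining-function validity, stated for unconditionally valid independent inputs, is applied inside the conditional probability space $\mP_{\theta'}(\cdot \mid \mathcal{S} = s)$; this is legitimate precisely because the two preceding steps establish conditional validity and conditional independence there.
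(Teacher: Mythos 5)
Your proposal is correct and follows essentially the same route as the paper's proof: both arguments combine the coordinatewise conditional validity from $(A1)$ with the product structure of the event $\{\mathcal S = s\}$ supplied by $(A2)$ to reduce conditioning on the full sign vector to conditioning on the individual signs of the coordinates in $\mathcal T$, and then invoke the validity of the combining function on independent valid $p$-values. The only presentational difference is that the paper first establishes validity conditional on $\{\mathrm{sign}(p_i-1/2): i\in\mathcal T\}$ and then equates this with conditioning on $\mathcal S$, whereas you work directly under $\mP_{\theta'}(\cdot\mid\mathcal S=s)$ from the start; this is not a substantive divergence.
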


Proposition \ref{prop-conditionalCT} implies that with probability at least $1-\alpha$ all the rejections of intersection hypotheses in Procedure \ref{algo-conditional closed testing}  are correct, conditionally on $\mathcal S$ and therefore also unconditionally:  
$\mP_{\theta}\left(\bar{\phi}_{\mathcal{I}} =0 \ \forall \ \mathcal I\subseteq \mathcal T \mid \mathcal S\right)\geq 1-\alpha$, $\mP_{\theta}\left(\bar{\phi}_{\mathcal{I}} =0 \ \forall \ \mathcal I\subseteq \mathcal T \right)\geq 1-\alpha$,
where 
$\bar{\phi}_{\mathcal{I}} = \min\{\phi_{\mathcal{J}}: \mathcal{J} \supseteq \mathcal{I}\}$ for $\phi_{\mathcal{I}}$ in \eqref{eq-combining_f}.

We now show explicitly how to obtain discoveries and confidence bounds on $n^+(\mathcal I)$ and $n^-(\mathcal I)$ from Procedure \ref{algo-conditional closed testing}. 
Let 
$\mathcal{D}_{\alpha}^{+}  = \{i \in \mathcal{S}^{-}: \bar{\phi}_{i}=1\}$ and $ \mathcal{D}_{\alpha}^{-}  = \{i \in \mathcal{S}^{+}: \bar{\phi}_{i}=1\}
$
   be the index sets of the positive and negative discoveries, respectively, 
 where $\bar{\phi}_{i} = \min\{\phi_{\mathcal{J}}:  \mathcal{J}\ni i\}$. 
The procedure concludes $\theta_i >0$ for all 
$i \in \mathcal{D}_{\alpha}^{+}$ and $\theta_i <0$ for all $i \in \mathcal{D}_{\alpha}^{-}$ while guaranteeing that the probability of at least one false conclusion  on the sign of the parameter, given the vector of signs $\mathcal S$, is at most $\alpha$.

More generally, for any $\mathcal I\subseteq [n]$, we can use  \eqref{eq-undirectional-lb} as follows.  For $\mathcal I \cap \mathcal S^-$, $\ell_{\alpha}(\mathcal{I}\cap \mathcal S^-)$ is the lower bound on the number of positive parameters: $n^+(\mathcal{I}\cap \mathcal S^-)\geq \ell_{\alpha}(\mathcal{I}\cap \mathcal S^-) $. Similarly,  for $\mathcal I \cap \mathcal S^+$: $
\ n^-(\mathcal{I}\cap \mathcal S^+)\geq \ell_{\alpha}(\mathcal{I}\cap \mathcal S^+).  $ From \eqref{eq-undirectional-lb} and \eqref{eq-conditionalguarantee} it follows that the  bounds 
\begin{eqnarray}\label{eq-bounds_I_plus_minus}
\ell_{\alpha}^{+}(\mathcal{I}) =  \ell_{\alpha}( \mathcal{I} \cap \mathcal{S}^{-}), \quad \ell_{\alpha}^{-}(\mathcal{I}) = \ell_{\alpha}( \mathcal{I} \cap \mathcal{S}^{+}) 
\end{eqnarray}
are such that
\begin{eqnarray}\label{eq-coverage_plus_minus}
\mP_{\theta}\Big(
n^{+}(\mathcal{I})\geq \ell_{\alpha}^{+}(\mathcal{I}), 
\,\,
n^{-}(\mathcal{I})\geq \ell_{\alpha}^{-}(\mathcal{I})
 \mathrm{\,\,for\,\,all\,\,}\mathcal{I} \mid \mathcal S \Big) \geq 1- \alpha.
\end{eqnarray}
In particular for $\mathcal I = [n]$, the $1-\alpha$ confidence bounds for $n^+$ and $n^-$ are $ \ell_{\alpha}^{+} = \ell_{\alpha}(\mathcal{S}^{-})$ and $ \ell_{\alpha}^{-} = \ell_{\alpha}(\mathcal{S}^{+})$, respectively. 
Note that the positive and negative discoveries are given by 
$
\mathcal{D}^{+}_{\alpha} = \{i: \ell_{\alpha}^{+}(i)=1\},
\mathcal{D}^{-}_{\alpha} = \{i: \ell_{\alpha}^{-}(i)=1\}, 
$
and $\ell_\alpha^+(\mathcal{I})\geq |\mathcal{D}^{+}_{\alpha}|$, $\ell_\alpha^-(\mathcal{I})\geq |\mathcal{D}^{-}_{\alpha}|$ always holds. In \S~\ref{sec - Zhao} - \S~\ref{sec - sim} we show that the gap between the lower bound and the number of discoveries can be large for $\mathcal I = [n]$, and it is larger for Fisher's combining function than for mSimes and Simes. 
Finally, note that 
$\ell_\alpha(\mathcal{I})$ serves as a lower bound for the number of non-zero parameters $n^+(\mathcal{I}) + n^-(\mathcal{I})$, and  $\ell_\alpha(\mathcal{I}) \geq \ell_\alpha^+(\mathcal{I}) + \ell_\alpha^-(\mathcal{I})$ always holds. 

Computing (\ref{eq-bounds_I_plus_minus}) involves at most the evaluation of exponentially many local tests \eqref{eq-combining_f}, but 
for specific  combining functions  there exist shortcuts to
derive the closed testing results in polynomial time, as detailed in \S~\ref{sec-background}. Therefore, the computational time is reasonable even for moderately large $n$.

\subsection{Dependent $p$-values}\label{subsec-dep}
DCT is valid if the local test satisfies \eqref{eq-conditionalguarantee}. Sufficient conditions are (A0)-(A2). 
However, for specific dependencies among the  $p$-values or test statistics, it may be that \eqref{eq-conditionalguarantee} is satisfied even if (A2) is not satisfied.

    A specific setting with dependent $p$-values for which we have a local test that satisfies \eqref{eq-conditionalguarantee} is the setting of testing $n$ population normal means, where the population variance $\sigma^2$ is unknown, but identical across all populations. This is the common one-way analysis of variance setting.  We shall describe this setting using the notation $N(\mu,\tau)$, $\chi^2_{\nu}$, $t_{\nu}$, and $F_{k,\nu}$ to denote the normal, chi-squared, central t, and central F distributions, respectively 
    (with their respective parameters or degrees of freedom), and $t_{\nu,x}$, $F_{k,\nu,x}$ denote the $x$-quantile of $t_{\nu}$ and $F_{k,\nu}$, respectively. 
    
        The test statistic for the mean $\theta_i$ is $y_i/(k_is)$, where $y_i\sim N(\theta_i, k_i\sigma)$ and $\nu s^2/\sigma^2\sim \chi^2_{\nu}$ for the appropriate $k_i$ and $\nu$ that depend on the sample sizes, $i=1,\ldots,n$. So $y_i/(k_is)\sim t_{\nu}$ and  $\frac 1{|\mathcal I|}\sum_{i\in \mathcal I}\frac{y_i^2}{k_i^2s^2}\sim F_{|\mathcal I|, \nu}$  for the zero vector $\theta=0$, for all $\mathcal I \subseteq [n]$. Although $y_1,\ldots,y_n$ are independent, the test statistics $y_i/(k_i s), i=1,\ldots,n$ are dependent since $\sigma$ is estimated using all observations.  The $p$-values are $p_i = \mP(t_{\nu}\geq y_i/(k_i s))$ and $q_i =\mP(t_{\nu}\leq y_i/(k_i s))$. 

A natural combining function is  the one for the F-test  $$f(\{2p_i, i\in \mathcal I \cap \mathcal S^- \}\cup \{2q_j, j\in \mathcal I \cap \mathcal S^+ \}  = f(2\min(p_i,q_i), i \in \mathcal I) = 
\mP\left(F_{|\mathcal I|, \nu}\geq \frac 1{|\mathcal I|}\sum_{i\in \mathcal I}\frac{y_i^2}{k_i^2s^2}\right). 
$$
To see that this is a valid combining function, note that for any $\theta \in H_{\mathcal I}$ (defined in Step 2 of the DCT procedure \ref{algo-conditional closed testing}):  
\begin{eqnarray}
    && \mP_{\theta }\left(\frac 1{|\mathcal I|}\sum_{i\in \mathcal I}\frac{y_i^2}{k_i^2s^2}\geq F_{|\mathcal I|, \nu, 1-x}\geq \mid \mathcal S\right) = 
    \mE\left(\mP_{\theta }\left(\frac 1{|\mathcal I|}\sum_{i\in \mathcal I}\frac{y_i^2}{k_i^2s^2}\geq F_{|\mathcal I|, \nu, 1-x} \mid \mathcal S, s \right) \right) \nonumber \\ 
    && \leq \mE\left( \mP_{0}\left(\frac 1{|\mathcal I|}\sum_{i\in \mathcal I}\frac{y_i^2}{k_i^2s^2}\geq F_{|\mathcal I|, \nu, 1-x} \mid \mathcal S,s\right) \right)
    = \mP_{0}\left(\frac 1{|\mathcal I|}\sum_{i\in \mathcal I}\frac{y_i^2}{k_i^2s^2}\geq F_{|\mathcal I|, \nu, 1-x}\geq \mid \mathcal S\right)\leq x, \nonumber 
\end{eqnarray}
where the expectation is over the distribution of $s$ (which is independent of $\mathcal S$), the first inequality in the second row follows from Lemma 1  of \cite{Mohamad2020}, and the last inequality follows  from the validity of the F test when the vector of means $\theta = 0$. The DCT procedures coincides with the closed F-test procedure on two-sided $p$-values suggested by  
\cite{finner1999stepwise}  for dFWER control. It complements the analysis for dFWER control suggested by \cite{finner1999stepwise} by providing also simultaneous post-hoc bounds for $n^+(\mathcal I)$ and $n^-(\mathcal I)$ for all $\mathcal I \subseteq [n].$

\section{Simultaneous inference on positive and non-positive parameters}\label{sec-partitioning}
 
\subsection{The general recipe for the simultaneous confidence lower bounds}\label{subsec-partitioning-cb-general}

We derive simultaneous $(1-\alpha)$-confidence lower bounds $\pl^+_\alpha(\mathcal{I})$ and $\pl_\alpha(\mathcal{I})$ for $n^+(\mathcal{I})$ and $n^-(\mathcal{I})+n^0(\mathcal I)$  by using the partitioning principle, reviewed in \S~\ref{sec-background}.  
The main idea 
is to partition the parameter space $\mathbb{R}^n$ into disjoint subspaces: consider the $2^n$ orthants
\begin{eqnarray*}\label{orthant_V}
\Theta_\mathcal{K} &=& \{ \theta \in \mathbb{R}^n: \theta_i > 0 \mathrm{\,\,for\,\,all\,\,} i \in \mathcal{K}, \theta_j \leq 0 \mathrm{\,\,for\,\,all\,\,} j \notin  \mathcal{K} \}
\end{eqnarray*}
for all $\mathcal{K}$, so that exactly one $\Theta_{\mathcal{K}}$ contains the true parameter $\theta$.

Each orthant $\Theta_\mathcal{K}$ has a corresponding null hypothesis $J_\mathcal{K}: \theta \in \Theta_\mathcal{K}$ that the orthant includes the true parameter. This hypothesis is true if and only if all  $K_i$, $i \in \mathcal{K}$ are true and all $H^-_j$, $j \notin \mathcal{K}$ are true:
\begin{eqnarray*}\label{eq-J_K}
J_{\mathcal{K}} : \{\bigcap_{i \in \mathcal{K}} K_i\} \cap \{\bigcap_{i \notin \mathcal{K}}H^-_i\}.
\end{eqnarray*}
Let $\psi_\mathcal{K}\in\{0,1\}$ be the level $\alpha$   
local test for hypotheses $J_{\mathcal{K}}$, for all $\mathcal{K}$.  The hypotheses $J_{\mathcal{K}}$ are all disjoint, and only one of them is true. 
For the true $J_{\mathcal{K}}$, $\displaystyle \sup_{\theta \in \Theta_\mathcal{K}} \mP_\theta (\psi_\mathcal{K} = 1)\leq \alpha$.

Consider the hypothesis $J_v(\mathcal{I}): n^+(\mathcal{I})=v$, which can be equivalently expressed as $J_v(\mathcal{I}):\bigcup_{\mathcal{K}:  |\mathcal{K} \cap \mathcal{I}|=v} J_{\mathcal{K}}$. Then $J_v(\mathcal{I})$ is rejected if and only if all $J_{\mathcal{K}}$ with $|\mathcal{K} \cap \mathcal{I}|=v$ are rejected, and  denote by
 \begin{eqnarray}\label{eq-psi_v_I}
 \psi_v(\mathcal{I}) = \min\{\psi_{\mathcal{K}}: \mathcal{K} \subseteq [n] \textrm{ and } |\mathcal{K} \cap \mathcal{I}|=v\}
\end{eqnarray} 
the statistical test for $J_v(\mathcal{I})$. The collection of values $v$ for which we failed to reject
$J_v(\mathcal{I})$ at level $\alpha$ constitutes a $(1-\alpha)$ confidence set for $n^+(\mathcal{I})$.

\begin{proposition}\label{prop-partitioning}
Let 
$\mathcal{N}^+(\mathcal{I})_\alpha = \{v \in \{0,\ldots, |\mathcal{I}|\}: \psi_{v}(\mathcal{I})=0 \}.
$
Then
$\mP_\theta ( n^+(\mathcal{I}) \in \mathcal{N}^+(\mathcal{I})_\alpha \mathrm{\,\,for\,\,all\,\,} \mathcal{I} ) \geq 1-\alpha$ for any $\theta \in \mathbb{R}^n$.
Furthermore, the lower  bounds given by
$\pl^+_{\alpha}(\mathcal{I}) = \min(\mathcal{N}^+(\mathcal{I})_\alpha), \quad 
\pl^-_{\alpha}(\mathcal{I}) =|\mathcal I|- \max(\mathcal{N}^+(\mathcal{I})_\alpha)
$
satisfy \eqref{eq-positiveonly-simultguarantee}.
\end{proposition}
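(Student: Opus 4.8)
The plan is to exploit the defining feature of the partitioning principle, namely that exactly one orthant hypothesis is true, so that the entire probabilistic content collapses onto the single level-$\alpha$ test of that true hypothesis and no union bound over $\mathcal{I}$ is ever needed. First I would fix an arbitrary $\theta \in \mR^n$ and let $\mathcal{K}^\star$ be the unique index set with $\theta \in \Theta_{\mathcal{K}^\star}$. By construction the orthant hypothesis $J_{\mathcal{K}^\star}$ is true, so by the stated local validity $\mP_\theta(\psi_{\mathcal{K}^\star} = 1) \leq \alpha$, equivalently $\mP_\theta(\psi_{\mathcal{K}^\star} = 0) \geq 1-\alpha$. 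The purely deterministic observation that makes everything work is that, for every $\mathcal{I}$, the true count is $n^+(\mathcal{I}) = |\mathcal{K}^\star \cap \mathcal{I}|$, since the positive coordinates of $\theta$ lying in $\mathcal{I}$ are precisely the indices of $\mathcal{I}$ that belong to $\mathcal{K}^\star$.

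The key step is to compare $\psi_{n^+(\mathcal{I})}(\mathcal{I})$ with $\psi_{\mathcal{K}^\star}$. Writing $v^\star = |\mathcal{K}^\star \cap \mathcal{I}| = n^+(\mathcal{I})$, the set $\mathcal{K}^\star$ is one of the index sets over which the minimum in \eqref{eq-psi_v_I} is taken, so $\psi_{v^\star}(\mathcal{I}) \leq \psi_{\mathcal{K}^\star}$ holds pointwise, and this holds simultaneously for every $\mathcal{I}$ because the dominating quantity $\psi_{\mathcal{K}^\star}$ does not depend on $\mathcal{I}$. Consequently, on the event $\{\psi_{\mathcal{K}^\star} = 0\}$ we obtain $\psi_{n^+(\mathcal{I})}(\mathcal{I}) = 0$ for all $\mathcal{I}$, which is exactly the assertion $n^+(\mathcal{I}) \in \mathcal{N}^+(\mathcal{I})_\alpha$ for all $\mathcal{I}$. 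Thus the single event $\{\psi_{\mathcal{K}^\star} = 0\}$ implies the whole simultaneous statement, and the first displayed bound follows immediately from $\mP_\theta(\psi_{\mathcal{K}^\star} = 0) \geq 1-\alpha$.

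For the \emph{furthermore} part I would argue pathwise on this same good event. Whenever $n^+(\mathcal{I}) \in \mathcal{N}^+(\mathcal{I})_\alpha$, membership gives both $n^+(\mathcal{I}) \geq \min(\mathcal{N}^+(\mathcal{I})_\alpha) = \pl^+_\alpha(\mathcal{I})$ and $n^+(\mathcal{I}) \leq \max(\mathcal{N}^+(\mathcal{I})_\alpha)$; substituting the latter into the identity $n^-(\mathcal{I}) + n^0(\mathcal{I}) = |\mathcal{I}| - n^+(\mathcal{I})$ yields $n^-(\mathcal{I}) + n^0(\mathcal{I}) \geq |\mathcal{I}| - \max(\mathcal{N}^+(\mathcal{I})_\alpha) = \pl^-_\alpha(\mathcal{I})$. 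Both inequalities then hold simultaneously for all $\mathcal{I}$ on the event of probability at least $1-\alpha$, which is precisely \eqref{eq-positiveonly-simultguarantee}.

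I do not expect a genuine obstacle here, as the argument is essentially a one-line domination plus the single-orthant coverage; the only point requiring care is the simultaneity, where one must resist the temptation to union-bound over $\mathcal{I}$ and instead notice that a single realization of $\psi_{\mathcal{K}^\star}$ controls every $\mathcal{I}$ at once because $\mathcal{K}^\star$ always participates in the relevant minimum. I would also verify the innocuous edge case that $\mathcal{N}^+(\mathcal{I})_\alpha$ is nonempty on the good event, so that $\min$ and $\max$ are well defined there: indeed the argument shows $v^\star \in \mathcal{N}^+(\mathcal{I})_\alpha$ on $\{\psi_{\mathcal{K}^\star}=0\}$, and one should fix a convention for $\pl^\pm_\alpha$ off this event so the probability statement is unambiguous.
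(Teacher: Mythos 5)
Your proof is correct and follows essentially the same route as the paper's: identify the unique true orthant $\mathcal{K}^\star$, use the pointwise domination $\psi_{n^+(\mathcal{I})}(\mathcal{I}) \leq \psi_{\mathcal{K}^\star}$ to get simultaneity from the single event $\{\psi_{\mathcal{K}^\star}=0\}$, and read off the bounds from membership in $\mathcal{N}^+(\mathcal{I})_\alpha$ together with $n^-(\mathcal{I})+n^0(\mathcal{I}) = |\mathcal{I}|-n^+(\mathcal{I})$. Your additional remark about the non-emptiness of $\mathcal{N}^+(\mathcal{I})_\alpha$ on the good event is a minor point the paper leaves implicit, but does not change the argument.
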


The construction described above  is very general, for any valid local test. In particular, this means that if the $p$-values are dependent,  the confidence set can be computed as long as the local test used is valid for the dependency of the $n$ $p$-values combined for each orthant hypothesis. See  \S~\ref{sec-unconditionalParitioning} for an example with dependent $p$-values. 

Next, we show that by mapping the local tests used in DCT to local tests for testing orthant hypotheses, we can get a uniform improvement over DCT. 

\subsection{A uniform improvement over DCT for the bounds on $n^{+}(\mathcal{I}) $ and $n^-(\mathcal{I})+n^0(\mathcal I)$ }\label{subsec-simultaneous bounds}

We can use the DCT local tests, henceforth referred to as the {\it adaptive local tests}, for testing orthant hypothesis $J_{\mathcal{K}}$, as follows. Instead of testing this intersection hypothesis,  test the larger intersection hypothesis that only combines $p$-values from $\mathcal S^-$ and $\mathcal S^+$, i.e., test only the intersection of one-sided hypotheses selected in the first step of the DCT procedure. The larger intersection hypothesis has  indices in 
$\{\mathcal{K}^c \cap \mathcal{S}^{-}\} \cup \{\mathcal{K} \cap \mathcal{S}^{+}\}$. The adaptive local test is
\begin{eqnarray}\label{eq-psi_K}
\psi_\mathcal{K} &=& \phi_{\{\mathcal{K}^c \cap \mathcal{S}^-\} \cup \{\mathcal{K} \cap \mathcal{S}^+\}}\\
\nonumber
&=&\mathds{1}\{  f(\{2p_i, i\in \mathcal K^c \cap \mathcal S^- \}\cup \{2q_j, j\in \mathcal K \cap \mathcal S^+ \}) \leq \alpha \}.
\end{eqnarray}
This test is a valid local test for $J_{\mathcal{K}}$ since $$\big\{ \bigcap_{i\in \mathcal K^c}H_i^- \big\}  \cap \big\{\bigcap_{j\in \mathcal K}K_j \big\}\ \ \subseteq \ \ \big\{\bigcap_{i\in \mathcal K^c\cap \mathcal S^-}H_i^- \big\} \cap \big\{\bigcap_{j\in \mathcal K\cap \mathcal S^+}H^+_j \big\}.$$

Is this a good local test for an orthant? 
\cite{Zhao2019} argued that the power of a test of an intersection hypothesis that combines only the conditional $p$-values of $p$-values below a threshold has excellent power properties. Since for testing each orthant, we combine only conditional $p$-values for which the data favours the one-sided alternative, we expect  the adaptive local test to have the power advantage observed in \cite{Zhao2019}. Specifically, we expect  to have superior power over the non-adaptive local test that combines all $p$-values in the intersection when the parameter vector has mixed signs, and  not lose much power if the parameter vector does not have mixed signs. 

However, the computational problem remains: direct application of (\ref{eq-psi_v_I}) takes exponential time since there are $\binom{|\mathcal I|}{v}$ orthants with exactly $v$ positive parameters. For a combining function $f(\cdot)$ satisfying condition $(A0)$, the reduction to polynomial computation time is possible by first sorting the conditional $p$-values in each of the following sets: $\mathcal S^-\cap \mathcal I$, $\mathcal S^+\cap \mathcal I$, $\mathcal S^-\cap \mathcal I^c$, $\mathcal S^+\cap \mathcal I^c$. 
Consider the problem of  testing that $n^+(\mathcal I) = v$ and exactly  $k$ positive parameters are in $\mathcal S^+\cap \mathcal I$. Since $f(\cdot)$ is monotone, the orthant with the largest $p$-value is the one that combines: the $|\mathcal I\cap \mathcal S^-|-(v-k)$ largest $2p_i$'s in $\mathcal I\cap \mathcal S^-$; 
the $k$ largest $2q_i$'s in $\mathcal I\cap \mathcal S^+$; 
  between $0$ and $|\mathcal I^c\cap \mathcal S^-|$ largest $2p_i$'s in  $\mathcal I^c\cap \mathcal S^-$; 
between $0$ and $|\mathcal I^c\cap \mathcal S^+|$ largest  $2q_i$'s in  $\mathcal I^c\cap \mathcal S^+$.
So we can find the largest combination $p$-value by considering $|\mathcal S^-\cap \mathcal I^c|\times|\mathcal S^+\cap \mathcal I^c|$ possible combinations. Considering all possible values of $k$ and $v$, we see that the complexity is $\mathcal O(|\mathcal{I}|^2| \mathcal{I}^c|^2)$. See \S~\ref{sec-sm-shortcut} for the detailed algorithm. 

\begin{proposition}\label{prop-partitioning-properties}
Let $p_1,\ldots,p_n$ fulfil conditions $(A1)-(A2)$, and the combining function $f(\cdot)$ fulfils condition $(A0)$. 
Then using the adaptive local tests \eqref{eq-psi_K},  the partitioning procedure (AP)  returns 
the lower bounds $\pl^+_\alpha(\mathcal{I})$ and $\pl^-_\alpha(\mathcal{I})$ for $n^+(\mathcal{I})$ and $n^-(\mathcal{I})+n^0(\mathcal{I})$, respectively, with at most $\mathcal{O}(|\mathcal{I}|^2 \cdot \max(1,|\mathcal{I}^c|^2))$ computation. The bounds satisfy the following conditional (on the vector of signs, $\mathcal S$) and unconditional coverage guarantees, respectively: 
\begin{equation}\label{eq-positiveonly-simultguarantee-conditional}
  \mP_{\theta}\Big(
n^{+}(\mathcal{I}) 
\geq 
\pl_{\alpha}^{+}(\mathcal{I}),\,\,
n^{-}(\mathcal{I})+n^{0}(\mathcal{I}) 
\geq 
\pl_{\alpha}^{-}(\mathcal{I}),
 \mathrm{\,\,for\,\,all\,\,}\mathcal{I}  \mid \mathcal S \Big) \geq 1- \alpha,
\end{equation}
\begin{eqnarray}\label{eq-unconditionalcoverage}
    \mP_{\theta}\Big(
n^{+}(\mathcal{I}) 
\geq 
\pl_{\alpha}^{+}(\mathcal{I}),\,\,
n^{-}(\mathcal{I})+n^{0}(\mathcal{I}) 
\geq 
\pl_{\alpha}^{-}(\mathcal{I}),
 \mathrm{\,\,for\,\,all\,\,}\mathcal{I} \Big) \geq 1- (1-2^{-n}) \alpha > 1-\alpha.
\end{eqnarray}
\end{proposition}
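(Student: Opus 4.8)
The plan is to treat the three assertions of the proposition — the running-time bound, the conditional guarantee \eqref{eq-positiveonly-simultguarantee-conditional}, and the strict unconditional improvement \eqref{eq-unconditionalcoverage} — one at a time, as each rests on a different idea. The complexity claim is essentially settled by the discussion preceding the statement: after an initial sort of the conditional $p$-values within each of $\mathcal S^-\cap\mathcal I$, $\mathcal S^+\cap\mathcal I$, $\mathcal S^-\cap\mathcal I^c$, $\mathcal S^+\cap\mathcal I^c$, monotonicity of $f$ (condition $(A0)$) pins down, for each pair $(v,k)$ with $n^+(\mathcal I)=v$ and $k$ positive coordinates in $\mathcal S^+\cap\mathcal I$, the unique orthant maximizing the combined $p$-value; enumerating the number of included coordinates from $\mathcal I^c$ costs $|\mathcal S^-\cap\mathcal I^c|\times|\mathcal S^+\cap\mathcal I^c|$, and summing over $(v,k)$ gives $\mathcal O(|\mathcal I|^2\max(1,|\mathcal I^c|^2))$, where $\max(1,\cdot)$ absorbs the degenerate case $\mathcal I=[n]$. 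I would simply reference the explicit algorithm in \S~\ref{sec-sm-shortcut} and verify this count.

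For \eqref{eq-positiveonly-simultguarantee-conditional} the only thing to establish is that each adaptive local test $\psi_{\mathcal K}$ of \eqref{eq-psi_K} is a conditionally valid level-$\alpha$ test of $J_{\mathcal K}$ given $\mathcal S$. This follows from the inclusion displayed just before the statement: when $J_{\mathcal K}$ holds, the enlarged intersection $\{\bigcap_{i\in\mathcal K^c\cap\mathcal S^-}H_i^-\}\cap\{\bigcap_{j\in\mathcal K\cap\mathcal S^+}H_j^+\}$ also holds, so $\psi_{\mathcal K}$ combines conditional $p$-values $\{2p_i\}\cup\{2q_j\}$ that are each valid given $\mathcal S$ under $(A1)$–$(A2)$; together with $(A0)$ and the argument of Proposition~\ref{prop-conditionalCT} this yields $\mP_\theta(\psi_{\mathcal K}=1\mid\mathcal S)\le\alpha$. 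The conclusion then follows by running the partitioning recipe of Proposition~\ref{prop-partitioning} conditionally on $\mathcal S$: the true $\theta$ lies in exactly one orthant, and retaining that single orthant's null is enough for coverage.

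The crux, and where I expect the genuine work to lie, is the strict improvement \eqref{eq-unconditionalcoverage}, since merely integrating the conditional bound over $\mathcal S$ returns only $1-\alpha$. Write $\mathcal K^{\star}=\{i:\theta_i>0\}$ for the index set of the unique true orthant; by the partitioning principle the coverage event contains $\{\psi_{\mathcal K^{\star}}=0\}$, so it suffices to bound $\mP_\theta(\psi_{\mathcal K^{\star}}=1)$. The key observation is that the index set combined by $\psi_{\mathcal K^{\star}}$, namely $\{\mathcal K^{\star c}\cap\mathcal S^-\}\cup\{\mathcal K^{\star}\cap\mathcal S^+\}$, is empty exactly on the event $\mathcal S=\mathcal S^{\star}$ that every sign is estimated correctly ($p_i\le 1/2$ iff $i\in\mathcal K^{\star}$); on that event $\psi_{\mathcal K^{\star}}$ applies $f$ to no $p$-values and cannot reject (with the convention $f(\emptyset)=1>\alpha$). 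Conditioning on $\mathcal S$ and using conditional validity on each $s\ne\mathcal S^{\star}$,
\begin{equation*}
\mP_\theta(\psi_{\mathcal K^{\star}}=1)=\sum_{s\ne\mathcal S^{\star}}\mP_\theta(\psi_{\mathcal K^{\star}}=1\mid\mathcal S=s)\,\mP_\theta(\mathcal S=s)\le\alpha\bigl(1-\mP_\theta(\mathcal S=\mathcal S^{\star})\bigr).
\end{equation*}

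It remains to show $\mP_\theta(\mathcal S=\mathcal S^{\star})\ge 2^{-n}$. By independence $(A2)$ this probability factorizes across coordinates, and each factor is at least $1/2$: for $\theta_i>0$ the correct sign is $p_i\le 1/2$, and validity of $q_i=1-p_i$ for the true hypothesis $H_i^+$ gives $\mP_\theta(p_i\le 1/2)=1-\mP_\theta(q_i<1/2)\ge 1/2$; for $\theta_i\le 0$ the correct sign is $p_i>1/2$, and validity of $p_i$ for the true hypothesis $H_i^-$ gives $\mP_\theta(p_i>1/2)\ge 1/2$ (coordinates with $\theta_i=0$ fall in this second case, consistent with placing them in $\mathcal K^{\star c}$). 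Hence $\mP_\theta(\mathcal S=\mathcal S^{\star})\ge 2^{-n}$, so $\mP_\theta(\psi_{\mathcal K^{\star}}=1)\le(1-2^{-n})\alpha$ and $\mP_\theta(\text{coverage})\ge 1-(1-2^{-n})\alpha$, as claimed. The only delicate points to nail down are the empty-combination convention (and its compatibility with the shortcut) and the correct-sign accounting for null coordinates.
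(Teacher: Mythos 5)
Your proposal is correct and follows essentially the same route as the paper's proof: the complexity claim is delegated to the shortcut algorithm, conditional coverage comes from the inclusion $J_{\mathcal K}\subseteq H_{\tilde{\mathcal T}}$ with $\tilde{\mathcal T}=\{\mathcal K^c\cap\mathcal S^-\}\cup\{\mathcal K\cap\mathcal S^+\}$ together with Proposition~\ref{prop-conditionalCT}, and the unconditional gain comes from conditioning on $\mathcal S$ and noting that the combined index set is empty (so the test cannot reject) with probability at least $2^{-n}$. The only cosmetic difference is that you bound $\mP_\theta(\mathcal S=\mathcal S^\star)\geq 2^{-n}$ coordinatewise directly from $p$-value validity, whereas the paper compares with the probability under $\theta=0$; both give the same bound.
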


The conditional $(1-\alpha)$  confidence bounds, i.e., the bounds  that satisfy \eqref{eq-positiveonly-simultguarantee-conditional},  have an unconditional $1-\alpha(1-2^{-n})$ confidence guarantee, i.e., they satisfy \eqref{eq-unconditionalcoverage}. Therefore, adaptive local tests with partitioning can be carried out at level $\tilde{\alpha}= \alpha/(1-2^{-n})$ if the conditional guarantee is not necessary. Using $\tilde \alpha$  falls within the framework of the holistic approach of \cite{goeman-arxiv}, that view selection and conditioning as means for useful inferences with unconditional error guarantees. 

With $(1-\alpha)$ confidence,  $\pl_{\tilde{\alpha}}^{+}(\mathcal{I})\leq n^+(\mathcal I)\leq |\mathcal I| - \pl_{\tilde{\alpha}}^{-}(\mathcal{I})$.  These are tighter bounds than with $(1-\tilde{\alpha})$ confidence. Specifically, the overall lower bounds for $n^+$ or $n^-+n^0$ are positive  if  
$f(\{2p_i, i\in  \mathcal S^- \}) \leq \tilde{\alpha}$ or $f( \{2q_j, j\in  \mathcal S^+ \}) \leq \tilde{\alpha}$, respectively.  In contrast, with DCT more testing takes place to correct for multiplicity, and moreover the local tests are carried out at level $\alpha<\tilde{\alpha}$. Therefore, there is less power to detect positive lower bounds for $n^+$ and $n^-+n^0$ with DCT. 
The number of discoveries for FWER control at level   $\alpha$ with partitioning may also be greater than with DCT, but only because it is carried out at level $\tilde \alpha>\alpha$. The multiplicity correction is essentially the same, since a discovery is made for $i\in [n]$ using partitioning only if all adaptive local tests that include it are rejected, and these are precisely  the local tests for all $\mathcal I\ni i$. By this same reasoning, the lower bounds  are the same when using partitioning  and DCT, with the same level $\alpha$ local tests, for $n^+(\mathcal I)$ if $\mathcal I\subseteq \mathcal S^-$,  and for $n^-(\mathcal I)+n^0(\mathcal I)$ if $\mathcal I\subseteq \mathcal S^+$.

\begin{proposition}\label{prop-AP-vs-DCT}
For a given combining function satisfying $(A0)$, 
let $\ell_{\alpha}^+(\cdot), \ell_{\alpha}^-(\cdot)$
be the DCT lower bounds that satisfy \eqref{eq-bounds_I_plus_minus}, 
and $\pl_{\alpha}^+(\cdot), \pl_{\alpha}^-(\cdot)$  be the bounds extracted from applying the level $\alpha$ adaptive local tests in \eqref{eq-psi_K}. 
Then 
$\pl_{\alpha}^+(\cdot)\geq \ell_{\alpha}^+(\cdot)$,  and  $\pl_{\alpha}^-(\cdot)\geq \ell_{\alpha}^-(\cdot)$. Moreover, if $\mathcal{I} \subseteq \mathcal{S}^{-}$ then $\pl_{\alpha}^+(\mathcal{I}) =  \ell_{\alpha}^+(\mathcal{I})$ and $\pl_{\alpha}^-(\mathcal{I})= \ell_{\alpha}^-(\mathcal{I})=0$;  if $\mathcal{I} \subseteq \mathcal{S}^{+}$ then $\pl_{\alpha}^-(\mathcal{I})= \ell_{\alpha}^-(\mathcal{I})$ and $\pl_{\alpha}^+(\mathcal{I}) =  \ell_{\alpha}^+(\mathcal{I})=0$.
\end{proposition}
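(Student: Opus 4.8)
The plan is to reduce both the partitioning bounds and the DCT bounds to minimizations over the \emph{same} collection of non-rejected intersection hypotheses, after which the comparison becomes elementary set algebra. The structural fact I would record first is that the map $\mathcal{K}\mapsto \mathcal{M}(\mathcal{K}) := \{\mathcal{K}^c\cap\mathcal S^-\}\cup\{\mathcal{K}\cap\mathcal S^+\}$ is a bijection on $2^{[n]}$: for $i\in\mathcal S^-$ one has $i\in\mathcal{M}(\mathcal{K})\Leftrightarrow i\notin\mathcal{K}$, and for $i\in\mathcal S^+$ one has $i\in\mathcal{M}(\mathcal{K})\Leftrightarrow i\in\mathcal{K}$. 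By \eqref{eq-psi_K}, $\psi_{\mathcal{K}}=\phi_{\mathcal{M}(\mathcal{K})}$ with $\phi$ the DCT local test \eqref{eq-combining_f}. Applying the same correspondence to the indices of $\mathcal I$, a direct count gives, for every $\mathcal{K}$,
\[
|\mathcal{K}\cap\mathcal{I}| \;=\; |(\mathcal{I}\cap\mathcal{S}^-)\setminus\mathcal{M}(\mathcal{K})| \;+\; |\mathcal{I}\cap\mathcal{S}^+\cap\mathcal{M}(\mathcal{K})|.
\]
Since $\psi_v(\mathcal I)=0$ iff some orthant with $|\mathcal{K}\cap\mathcal I|=v$ is not rejected, letting $\mathcal{K}$ range over orthants with $\psi_{\mathcal{K}}=0$ is the same, via the bijection, as letting $\mathcal{M}$ range over intersection hypotheses with $\phi_{\mathcal{M}}=0$. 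This yields the reformulation
\[
\pl_\alpha^+(\mathcal{I}) \;=\; \min_{\mathcal{M}:\,\phi_{\mathcal{M}}=0}\Big(|(\mathcal{I}\cap\mathcal{S}^-)\setminus\mathcal{M}| + |\mathcal{I}\cap\mathcal{S}^+\cap\mathcal{M}|\Big),
\]
and, using $\pl_\alpha^-(\mathcal I)=|\mathcal I|-\max(\mathcal N^+(\mathcal I)_\alpha)$, the analogous expression for $\pl_\alpha^-(\mathcal I)$ with the roles of $\mathcal S^-$ and $\mathcal S^+$ interchanged, namely the minimum of $|(\mathcal I\cap\mathcal S^-)\cap\mathcal{M}| + |(\mathcal I\cap\mathcal S^+)\setminus\mathcal{M}|$ over the same family.

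Next I would prove $\pl_\alpha^+(\mathcal I)\geq\ell_\alpha^+(\mathcal I)$. Fix a minimizing $\mathcal{M}^\ast$ with $\phi_{\mathcal{M}^\ast}=0$ and set $\mathcal J := (\mathcal I\cap\mathcal S^-)\cap\mathcal{M}^\ast$. Then $\mathcal J\subseteq\mathcal{M}^\ast$ forces $\bar{\phi}_{\mathcal J}\leq\phi_{\mathcal{M}^\ast}=0$, and $\mathcal J\subseteq\mathcal I\cap\mathcal S^-$, so by \eqref{eq-undirectional-lb}, $\ell_\alpha(\mathcal I\cap\mathcal S^-)\leq|(\mathcal I\cap\mathcal S^-)\setminus\mathcal J| = |(\mathcal I\cap\mathcal S^-)\setminus\mathcal{M}^\ast|\leq\pl_\alpha^+(\mathcal I)$, where the last step drops the non-negative term $|\mathcal I\cap\mathcal S^+\cap\mathcal{M}^\ast|$. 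Since $\ell_\alpha^+(\mathcal I)=\ell_\alpha(\mathcal I\cap\mathcal S^-)$ by \eqref{eq-bounds_I_plus_minus}, this is the claim; the inequality $\pl_\alpha^-(\mathcal I)\geq\ell_\alpha^-(\mathcal I)$ follows from the symmetric argument with $\mathcal J:=(\mathcal I\cap\mathcal S^+)\cap\mathcal{M}^\ast$.

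Finally, for the equalities when $\mathcal I\subseteq\mathcal S^-$ I would exhibit optimal certificates. Here $\mathcal I\cap\mathcal S^+=\emptyset$, so the reformulations collapse to $\pl_\alpha^+(\mathcal I)=\min\{|\mathcal I\setminus\mathcal{M}|:\phi_{\mathcal{M}}=0\}$ and $\pl_\alpha^-(\mathcal I)=\min\{|\mathcal I\cap\mathcal{M}|:\phi_{\mathcal{M}}=0\}$. Taking $\mathcal{M}=\emptyset$ (an empty combination is never rejected, $\phi_\emptyset=0$) gives $|\mathcal I\cap\emptyset|=0$, so $\pl_\alpha^-(\mathcal I)=0$, matching $\ell_\alpha^-(\mathcal I)=\ell_\alpha(\mathcal I\cap\mathcal S^+)=\ell_\alpha(\emptyset)=0$. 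For $\pl_\alpha^+$, let $\mathcal J^\ast\subseteq\mathcal I$ attain $\ell_\alpha(\mathcal I)$ in \eqref{eq-undirectional-lb}, with witness $\mathcal J'\supseteq\mathcal J^\ast$ such that $\phi_{\mathcal J'}=0$. Choosing $\mathcal{M}=\mathcal J'$ yields $\pl_\alpha^+(\mathcal I)\leq|\mathcal I\setminus\mathcal J'|\leq|\mathcal I\setminus\mathcal J^\ast|=\ell_\alpha(\mathcal I)=\ell_\alpha^+(\mathcal I)$, which together with the reverse inequality already proved gives equality. The case $\mathcal I\subseteq\mathcal S^+$ is symmetric.

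The main obstacle is entirely in the setup: establishing the bijection $\mathcal{K}\leftrightarrow\mathcal{M}(\mathcal{K})$ and using it to rewrite the partitioning minimizations over the very family $\{\mathcal{M}:\phi_{\mathcal{M}}=0\}$ that defines the DCT bound through $\bar{\phi}$. Once this is in place, each inequality reduces to either discarding a non-negative term or restricting to/enlarging a set, and each equality to exhibiting a single optimal $\mathcal{M}$. The one point I would be careful to state explicitly is the convention $\phi_\emptyset=0$, which is precisely what forces $\pl_\alpha^-(\mathcal I)=0$ on $\mathcal I\subseteq\mathcal S^-$.
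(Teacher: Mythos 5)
Your proof is correct and follows essentially the same route as the paper's: both arguments hinge on the bijection $\mathcal{K}\leftrightarrow(\mathcal{K}^c\cap\mathcal{S}^-)\cup(\mathcal{K}\cap\mathcal{S}^+)$ identifying the adaptive orthant tests with DCT local tests, the resulting representation of $\pl_\alpha^{+}(\mathcal I)$ as $\min_{\mathcal K:\psi_{\mathcal K}=0}|\mathcal K\cap\mathcal I|$, and the expression of the closed-testing bound through non-rejected local tests. The only differences are presentational — you argue directly (using that $\mathcal J\subseteq\mathcal M$ with $\phi_{\mathcal M}=0$ forces $\bar\phi_{\mathcal J}=0$) where the paper argues by contradiction via Lemma 1 of Goeman et al.\ (2021), and you make explicit the convention $\phi_\emptyset=0$ that the paper leaves implicit.
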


\subsection{Inference for $n=2$}\label{sec-partitioning-example}

To enhance our understanding of the partitioning procedure with adaptive local tests, we examine  the basic case of $n=2$.  
Table \ref{tab:ppn2} presents the adaptive local tests used as a function of the selection event, while the top-left plot in Figure \ref{Figure:simesregions} displays the regions that provide inference about $n^+$ using Simes' combining function $f_{\mathrm{Simes}}$.

\begin{table}
\caption{\label{tab:ppn2} For each orthant hypothesis (rows), the adaptive local tests utilized by the partitioning procedure for the case of $n=2$ if the realized vector  of signs is $\mathcal S = (sign(p_1-1/2), sign(p_2-1/2))$ (columns).}
\centering
\begin{tabular}{c|c|c|c|c}
 & & & & \\
& $\mathcal S  = (-1,-1)$ & $\mathcal S  = (-1,1)$ & $\mathcal S  = (1,-1)$ & $\mathcal S  = (1,1)$\\
\hline
 & & & & \\
$H^-_1 \cap H^-_2$  & $f(2p_1,2p_2)$ & $2p_1$ & $2p_2$ & 1\\
$H^-_1 \cap K_2$  & $2p_1$ & $f(2p_1,2q_2)$ & 1 & $2q_2$\\
$K_1 \cap H^-_2$  & $2p_2$ & 1 & $f(2q_1,2p_2)$ & $2q_1 $\\
$K_1 \cap K_2$  & 1 & $2q_2$ & $2q_1$ & $f(2q_1,2q_2)$ \\
\end{tabular}
\end{table}

By examining the top row of Figure \ref{Figure:simesregions}, we can clearly observe the power gain 
for the partitioning procedure with adaptive local tests (top-left plot) compared to DCT (top-right plot).
One notable aspect is that the partitioning region expands significantly at the top-left and bottom-right quadrants, thereby enhancing the partitioning procedure's capability to detect both positive and non-positive effects. This improved performance is accompanied by the advantage of operating at a higher significance level, as determined by $\tilde\alpha = (1-(1-2^{-n}))\alpha = 4\alpha/3$.
This increase in power comes with the restriction that the partitioning procedure can only draw inferences about non-positive effects. For instance, if $p_1 > 1-\frac{\alpha}{4}$, the partitioning procedure would infer that $\theta_1 \leq 0$. In contrast, the DCT procedure can provide a more conclusive statement by inferring that $\theta_1 < 0$.

\begin{figure}
\centering
\begin{tikzpicture}[scale=5.5]


\def\alpgr{0.06666667}
\def\gam{0.1333333}
\def\alp{0.1040607}

\begin{scope}[yshift = -1.4 cm]

\draw (0.5,1) node[above]{Partitioning with adaptive tests};

\fill[fill=gray] (0,0) rectangle (1,\alpgr);
\fill[fill=gray] (0,0) rectangle (\alpgr,1);
\fill[fill=gray!30] (0,1) rectangle (1,1-\alpgr);
\fill[fill=gray!30] (1,1) rectangle (1-\alpgr,0);

\fill[fill=black] (0,0) rectangle (\gam,\gam);
\fill[fill=gray!10] (1,1) rectangle (1-\gam,1-\gam);
\fill[fill=gray!60] (0,1) rectangle (\gam,1-\gam);
\fill[fill=gray!60] (0,1) rectangle (\gam,1-\gam);
\fill[fill=gray!60] (1-\gam,0) rectangle (1,\gam);
\fill[fill=gray!60] (1-\gam,0) rectangle (1,\gam);

\fill[fill=gray] (\alpgr,1-\gam) rectangle (\gam,0.5);
\fill[fill=gray] (0.5,\alpgr) rectangle (1-\gam,\gam);
\fill[fill=gray!30] (\gam,1-\alpgr) rectangle (0.5,1-\gam);
\fill[fill=gray!30] (1-\gam,\gam) rectangle (1-\alpgr,0.5);

\draw (1,\gam) -- (1.01,\gam) node[right]{$2\alpha/3$};
\draw (1,\alpgr) -- (1.01,\alpgr) node[right]{$\alpha/3$};

\draw (0,0) -- (1,0) node[midway, below]{$p_1$};
\draw (0,0) -- (0,1) node[midway, left]{$p_2$};
\draw (0,0) rectangle (1,1);

\end{scope}


\def\alpgr{0.05}
\def\gam{0.1}
\def\alp{0.07804555}

\begin{scope}[yshift = -1.4 cm, xshift=1.4 cm]

\draw (0.5,1) node[above]{Directional closed testing};

\fill[fill=gray] (0,0) rectangle (1,\alpgr);
\fill[fill=gray] (0,0) rectangle (\alpgr,1);
\fill[fill=gray!30] (0,1) rectangle (1,1-\alpgr);
\fill[fill=gray!30] (1,1) rectangle (1-\alpgr,0);

\fill[fill=black] (0,0) rectangle (\gam,\gam);
\fill[fill=gray!10] (1,1) rectangle (1-\gam,1-\gam);

\fill[fill=gray!60] (0,1) rectangle (\gam,1-\gam);

\fill[fill=gray!60] (1-\gam,0) rectangle (1,\gam);

\draw (1,\gam) -- (1.01,\gam) node[right]{$\alpha/2$};
\draw (1,\alpgr) -- (1.01,\alpgr) node[right]{$\alpha/4$};

\draw (0,0) -- (1,0) node[midway, below]{$p_1$};
\draw (0,0) -- (0,1) node[midway, left]{$p_2$};

\draw (0,0) rectangle (1,1);


\end{scope}

\begin{scope}[yshift = -1.65 cm, xshift = 0cm]

\draw[fill=black] (0,0.05) rectangle (0.1,0);
\draw (0.25, 0.025) node{$n^+=2$};

\draw[fill=gray] (0.5,0.05) rectangle (0.6,0);
\draw (0.75, 0.025) node{$n^+\geq 1$};

\draw[fill=gray!60] (1,0.05) rectangle (1.1,0);
\draw (1.25, 0.025) node{$n^+= 1$};

\draw[fill=gray!30] (1.5,0.05) rectangle (1.6,0);
\draw (1.75, 0.025) node{$n^+\leq 1$};

\draw[fill=gray!10] (2,0.05) rectangle (2.1,0);
\draw (2.25, 0.025) node{$n^+ = 0$};
\end{scope}

\end{tikzpicture}
\def\alp{0.07804555}
\def\alpgr{0.05}
\def\gam{0.1}

\begin{tikzpicture}[scale=5.5]

\begin{scope}[yshift = -1.4 cm, xshift=1.4 cm]

\def\alpgr{0.05}
\def\gam{0.1}

\draw (0.5,1) node[above]{Directional closed testing};

\draw[pattern=dots] (\gam,0) rectangle (1-\gam,\alpgr);
\draw[pattern=dots] (0,\gam) rectangle (\alpgr,1-\gam);
\draw[pattern=dots] (\gam,1) rectangle (1-\gam,1-\alpgr);
\draw[pattern=dots] (1,1-\gam) rectangle (1-\alpgr,\gam);

 \draw[pattern=north east lines] (0,0) rectangle (\gam,\gam);
\draw[pattern=north east lines] (1,1) rectangle (1-\gam,1-\gam);
\draw[pattern=north east lines] (0,1) rectangle (\gam,1-\gam);
\draw[pattern=north east lines] (1-\gam,0) rectangle (1,\gam);

\draw (1,\gam) -- (1.01,\gam) node[right]{$\alpha/2$};
\draw (1,\alpgr) -- (1.01,\alpgr) node[right]{$\alpha/4$};

\draw (0,0) -- (1,0) node[midway, below]{$p_1$};
\draw (0,0) -- (0,1) node[midway, left]{$p_2$};


 \draw (\gam/2, \gam/2) node[pin=45:{$H_1^-,H_2^-$}]{};
 \draw (\gam/2, 1-\gam/2) node[pin=315:{$H_1^-,H_2^+$}]{};
 \draw (1-\gam/2, 1-\gam/2) node[pin=225:{$H_1^+,H_2^+$}]{};
 \draw (1-\gam/2, \gam/2) node[pin=135:{$H_1^+,H_2^-$}]{};
 \draw (0, 0.5) node[pin=right:{$H_1^-$}]{};
 \draw (1, 0.5) node[pin=left:{$H_1^+$}]{};
 \draw (0.5, 0) node[pin=above:{$H_2^-$}]{};
 \draw (0.5, 1) node[pin=below:{$H_2^+$}]{};

\end{scope}


\begin{scope}[yshift = -1.4 cm]

\def\alpgr{0.06666667}
\def\gam{0.1333333}

\draw (0.5,1) node[above]{Partitioning with adaptive tests};

\draw[pattern=dots] (\gam,0) rectangle (1-\gam,\alpgr);
\draw[pattern=dots] (0,\gam) rectangle (\alpgr,1-\gam);
\draw[pattern=dots] (\gam,1) rectangle (1-\gam,1-\alpgr);
\draw[pattern=dots] (1,1-\gam) rectangle (1-\alpgr,\gam);

 \draw[pattern=north east lines] (0,0) rectangle (\gam,\gam);
\draw[pattern=north east lines] (1,1) rectangle (1-\gam,1-\gam);
\draw[pattern=north east lines] (0,1) rectangle (\gam,1-\gam);
\draw[pattern=north east lines] (1-\gam,0) rectangle (1,\gam);

\draw (1,\gam) -- (1.01,\gam) node[right]{$2\alpha/3$};
\draw (1,\alpgr) -- (1.01,\alpgr) node[right]{$\alpha/3$};

\draw (0,0) -- (1,0) node[midway, below]{$p_1$};
\draw (0,0) -- (0,1) node[midway, left]{$p_2$};


 \draw (\gam/2, \gam/2) node[pin=45:{$H_1^-,H_2^-$}]{};
 \draw (\gam/2, 1-\gam/2) node[pin=315:{$H_1^-,K_2$}]{};
 \draw (1-\gam/2, 1-\gam/2) node[pin=225:{$K_1,K_2$}]{};
 \draw (1-\gam/2, \gam/2) node[pin=135:{$K_1,H_2^-$}]{};
 \draw (0, 0.5) node[pin=right:{$H_1^-$}]{};
 \draw (1, 0.5) node[pin=left:{$K_1$}]{};
 \draw (0.5, 0) node[pin=above:{$H_2^-$}]{};
 \draw (0.5, 1) node[pin=below:{$K_2$}]{};

\end{scope}

\end{tikzpicture}
\caption{
The top row illustrates regions leading to inference about $n^+$ with $(1-\alpha)$ confidence for the partitioning procedure with adaptive local tests (top-left plot) and the directional closed testing procedure  (top-right plot) with Simes' combining function. The gray scale indicates the type of inference. 
The bottom row illustrates regions leading to the rejection of base hypotheses for partitioning with adaptive local tests (bottom left plot) and directional closed testing (bottom right plot). 
The pattern indicates the number of hypotheses rejected at level $\alpha$: one (dots) or two (diagonal lines).
Each plot is based on $\alpha=0.2$.}
\label{Figure:simesregions}
\end{figure}
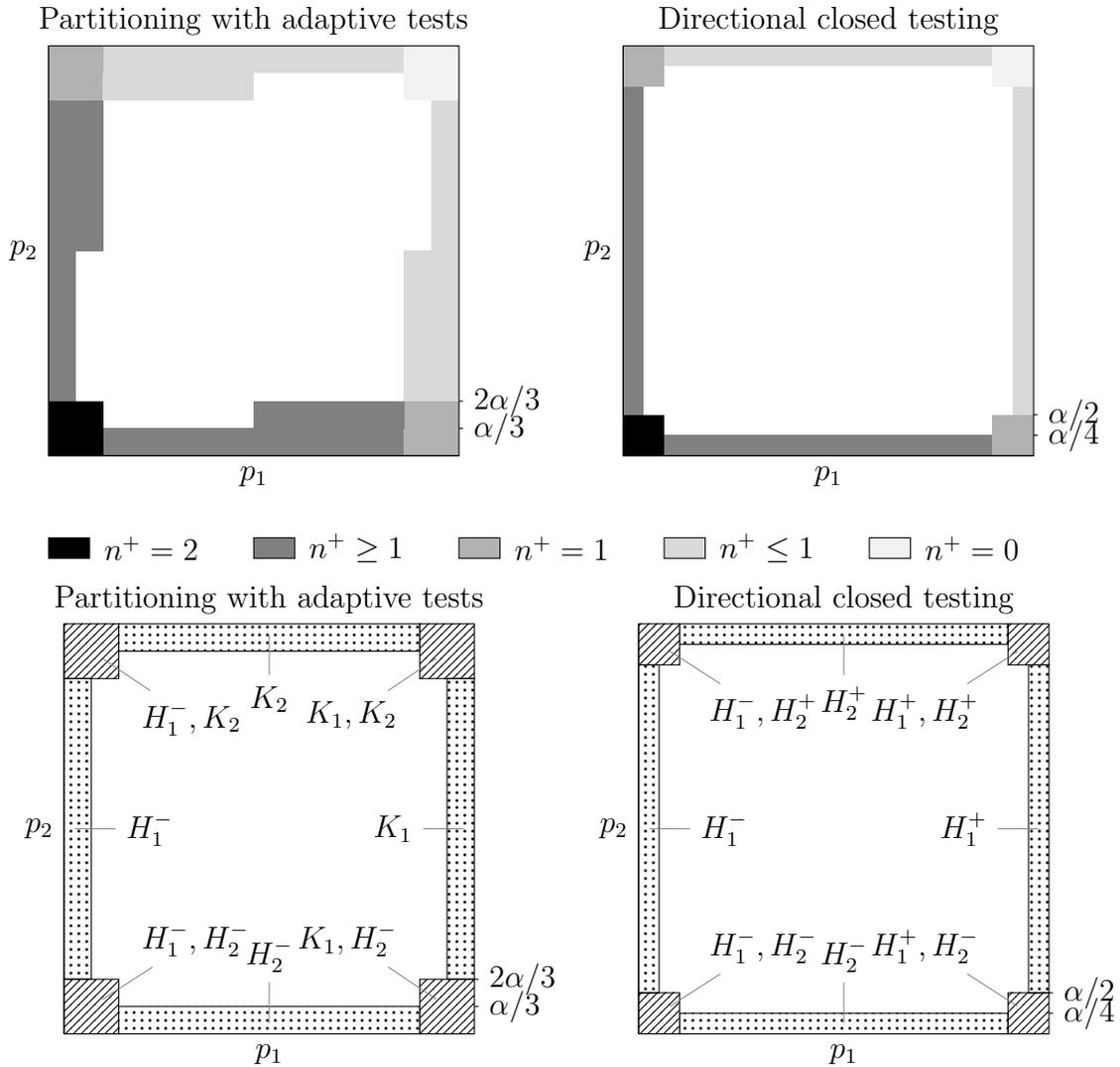

The bottom-left plot of Figure \ref{Figure:simesregions} displays the rejection region for the base hypotheses: $H_1^-$, $H_2^-$, $K_1$, and $K_2$. Notably, if we observe $(p_1,p_2)$ such that $1/2<p_1<1-2\alpha/3$ and $\alpha/3<p_2<2\alpha/3$, we refrain from rejecting any base hypothesis, even if we can infer that $n^+ \geq 1$.
In contrast, in DCT using Simes' combining function (bottom-right plot), whenever there is an informative inference about $n^+$, it is accompanied by the rejection of base hypotheses $H_1^-$, $H_2^-$, $K_1$, and $K_2$. This is not the case for $n>2$ or by using Fisher's combining function for $n=2$ (see \S~\ref{Appendix-moren2} for this and the comparison with other 
FWER controlling procedures). 
The shape of the region in the left plot of Figure \ref{Figure:simesregions} is non-monotone, in the sense that a larger pair of $p$-values may result in a larger lower bound on $n^+$. 
For example, if we observe $(p_1,p_2)$ such that $1/2<p_1<1-2\alpha/3$ and $\alpha/3<p_2<2\alpha/3$, we would conclude that $n^+\geq 1$ by rejecting $H_1^- \cap H_2^-$ because $2p_2 \leq \tilde\alpha$ (see column 3 in Table \ref{tab:ppn2}). However, if $p_1$ decreases to $\alpha/3< p_1< 1/2$, we would not draw any inference. This outcome is rather counter-intuitive since a smaller value of $p_1$ corresponds to stronger evidence supporting $\theta_1 > 0$. 
This is a result of employing adaptive tests rather than non-adaptive ones, i.e. of performing selection. The procedure is nevertheless attractive, since  conditional on the selection (of which $p$-values of one-sided hypotheses to combine), the rejection region is monotone.

Figure \ref{Figure:power2} shows the power advantage of  partitioning over DCT, 
for a range of parameters $(\theta_1,\theta_2)$ and $\alpha=5\%$. 
The $p$-values are derived from independent test statistics $\hat{\theta}_i$ following a Gaussian distribution $N(\theta_i, 1)$, $i = 1, 2$.
The non-trivial bounds probability (left panel) refers  to the probability that $(p_1, p_2)$ lies in the non-white area of the top plots in Figure \ref{Figure:simesregions}. The expected number of discoveries (right panel) is the weighted average of the number of rejections displayed in the bottom plots of Figure \ref{Figure:simesregions}, with weights determined by their respective probabilities. .
The gain is substantial, particularly in terms of the non-trivial bounds probability when the parameters have opposite signs. For example, with $(\theta_1,\theta_2) = (2,-1.5)$, the DCT has a probability of 54.1\%, while partitioning has 71.6\%, resulting in a difference of 17.5\%. This difference is nearly triple the value of $61.7\% - 55.1\% = 6.6\%$ observed when both effects are positive, with $(\theta_1,\theta_2) = (2,1.5)$.

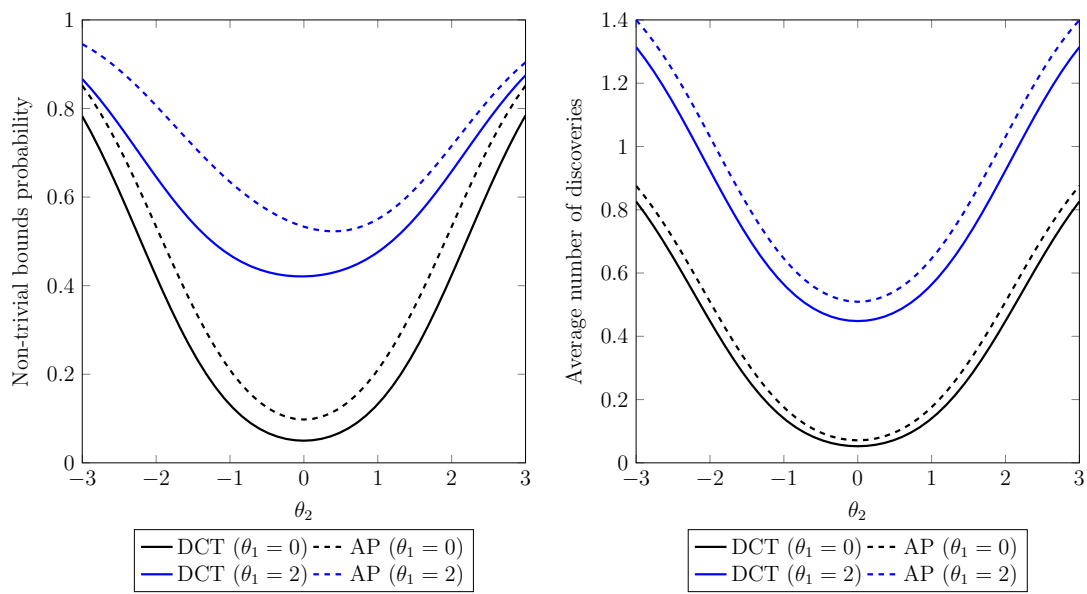
\begin{figure}
\centering
\begin{tikzpicture}[scale=0.7]
	\begin{axis}[
	xmin = -3,
	xmax = 3,
	ymin = 0,
	ymax = 1,
	ylabel=Non-trivial bounds probability,
	xlabel=$\theta_2$,
        height=10cm,
	width=10cm,
		legend style={at={(0.5,-0.15)},
		anchor=north,legend columns=2}
	]
	\addplot[color=black, very thick] coordinates {
( -3 , 0.7825 ) ( -2.88 , 0.746 ) ( -2.76 , 0.7067 ) ( -2.64 , 0.6647 ) ( -2.52 , 0.6208 ) ( -2.4 , 0.5753 ) ( -2.28 , 0.5289 ) ( -2.16 , 0.4823 ) ( -2.04 , 0.4361 ) ( -1.92 , 0.391 ) ( -1.8 , 0.3476 ) ( -1.68 , 0.3064 ) ( -1.56 , 0.2679 ) ( -1.44 , 0.2324 ) ( -1.32 , 0.2002 ) ( -1.2 , 0.1714 ) ( -1.08 , 0.146 ) ( -0.96 , 0.124 ) ( -0.84 , 0.1052 ) ( -0.72 , 0.0896 ) ( -0.6 , 0.0769 ) ( -0.48 , 0.0668 ) ( -0.36 , 0.0593 ) ( -0.24 , 0.054 ) ( -0.12 , 0.051 ) ( 0 , 0.05 ) ( 0.12 , 0.0511 ) ( 0.24 , 0.0543 ) ( 0.36 , 0.0597 ) ( 0.48 , 0.0675 ) ( 0.6 , 0.0777 ) ( 0.72 , 0.0906 ) ( 0.84 , 0.1065 ) ( 0.96 , 0.1254 ) ( 1.08 , 0.1476 ) ( 1.2 , 0.1732 ) ( 1.32 , 0.2022 ) ( 1.44 , 0.2347 ) ( 1.56 , 0.2703 ) ( 1.68 , 0.309 ) ( 1.8 , 0.3503 ) ( 1.92 , 0.3937 ) ( 2.04 , 0.4389 ) ( 2.16 , 0.4851 ) ( 2.28 , 0.5316 ) ( 2.4 , 0.5779 ) ( 2.52 , 0.6233 ) ( 2.64 , 0.6672 ) ( 2.76 , 0.7089 ) ( 2.88 , 0.7481 ) ( 3 , 0.7844 )
};
 \addlegendentry{DCT ($\theta_1=0$)}
\addplot[color=black, dashed, very thick] coordinates {
( -3 , 0.8517 ) ( -2.88 , 0.8231 ) ( -2.76 , 0.7913 ) ( -2.64 , 0.7564 ) ( -2.52 , 0.7188 ) ( -2.4 , 0.6786 ) ( -2.28 , 0.6364 ) ( -2.16 , 0.5927 ) ( -2.04 , 0.5481 ) ( -1.92 , 0.5032 ) ( -1.8 , 0.4586 ) ( -1.68 , 0.4149 ) ( -1.56 , 0.3727 ) ( -1.44 , 0.3326 ) ( -1.32 , 0.2951 ) ( -1.2 , 0.2604 ) ( -1.08 , 0.2289 ) ( -0.96 , 0.2007 ) ( -0.84 , 0.176 ) ( -0.72 , 0.1548 ) ( -0.6 , 0.1371 ) ( -0.48 , 0.1228 ) ( -0.36 , 0.1118 ) ( -0.24 , 0.104 ) ( -0.12 , 0.0993 ) ( 0 , 0.0978 ) ( 0.12 , 0.0993 ) ( 0.24 , 0.104 ) ( 0.36 , 0.1118 ) ( 0.48 , 0.1228 ) ( 0.6 , 0.1371 ) ( 0.72 , 0.1548 ) ( 0.84 , 0.176 ) ( 0.96 , 0.2007 ) ( 1.08 , 0.2289 ) ( 1.2 , 0.2604 ) ( 1.32 , 0.2951 ) ( 1.44 , 0.3326 ) ( 1.56 , 0.3727 ) ( 1.68 , 0.4149 ) ( 1.8 , 0.4586 ) ( 1.92 , 0.5032 ) ( 2.04 , 0.5481 ) ( 2.16 , 0.5927 ) ( 2.28 , 0.6364 ) ( 2.4 , 0.6786 ) ( 2.52 , 0.7188 ) ( 2.64 , 0.7564 ) ( 2.76 , 0.7913 ) ( 2.88 , 0.8231 ) ( 3 , 0.8517 )
};
 \addlegendentry{AP ($\theta_1=0$)}
	\addplot[color=blue, very thick] coordinates {
( -3 , 0.8666 ) ( -2.88 , 0.8443 ) ( -2.76 , 0.8202 ) ( -2.64 , 0.7945 ) ( -2.52 , 0.7676 ) ( -2.4 , 0.7398 ) ( -2.28 , 0.7115 ) ( -2.16 , 0.683 ) ( -2.04 , 0.6548 ) ( -1.92 , 0.6273 ) ( -1.8 , 0.6008 ) ( -1.68 , 0.5757 ) ( -1.56 , 0.5522 ) ( -1.44 , 0.5306 ) ( -1.32 , 0.511 ) ( -1.2 , 0.4935 ) ( -1.08 , 0.4781 ) ( -0.96 , 0.4647 ) ( -0.84 , 0.4534 ) ( -0.72 , 0.4439 ) ( -0.6 , 0.4363 ) ( -0.48 , 0.4303 ) ( -0.36 , 0.4258 ) ( -0.24 , 0.4228 ) ( -0.12 , 0.4212 ) ( 0 , 0.4209 ) ( 0.12 , 0.4219 ) ( 0.24 , 0.4243 ) ( 0.36 , 0.428 ) ( 0.48 , 0.4332 ) ( 0.6 , 0.44 ) ( 0.72 , 0.4485 ) ( 0.84 , 0.4589 ) ( 0.96 , 0.4711 ) ( 1.08 , 0.4854 ) ( 1.2 , 0.5018 ) ( 1.32 , 0.5202 ) ( 1.44 , 0.5406 ) ( 1.56 , 0.563 ) ( 1.68 , 0.5871 ) ( 1.8 , 0.6127 ) ( 1.92 , 0.6395 ) ( 2.04 , 0.6673 ) ( 2.16 , 0.6955 ) ( 2.28 , 0.7238 ) ( 2.4 , 0.7518 ) ( 2.52 , 0.7791 ) ( 2.64 , 0.8053 ) ( 2.76 , 0.8302 ) ( 2.88 , 0.8535 ) ( 3 , 0.875 )};
\addlegendentry{DCT ($\theta_1=2$)}
 \addplot[color=blue, dashed, very thick] coordinates {
( -3 , 0.9454 ) ( -2.88 , 0.9339 ) ( -2.76 , 0.9208 ) ( -2.64 , 0.9062 ) ( -2.52 , 0.8901 ) ( -2.4 , 0.8726 ) ( -2.28 , 0.8539 ) ( -2.16 , 0.8341 ) ( -2.04 , 0.8134 ) ( -1.92 , 0.792 ) ( -1.8 , 0.7703 ) ( -1.68 , 0.7485 ) ( -1.56 , 0.7269 ) ( -1.44 , 0.7056 ) ( -1.32 , 0.6849 ) ( -1.2 , 0.6651 ) ( -1.08 , 0.6462 ) ( -0.96 , 0.6284 ) ( -0.84 , 0.6118 ) ( -0.72 , 0.5964 ) ( -0.6 , 0.5824 ) ( -0.48 , 0.5697 ) ( -0.36 , 0.5584 ) ( -0.24 , 0.5485 ) ( -0.12 , 0.54 ) ( 0 , 0.5331 ) ( 0.12 , 0.5279 ) ( 0.24 , 0.5244 ) ( 0.36 , 0.5227 ) ( 0.48 , 0.5231 ) ( 0.6 , 0.5255 ) ( 0.72 , 0.5301 ) ( 0.84 , 0.5371 ) ( 0.96 , 0.5465 ) ( 1.08 , 0.5583 ) ( 1.2 , 0.5725 ) ( 1.32 , 0.5889 ) ( 1.44 , 0.6076 ) ( 1.56 , 0.6282 ) ( 1.68 , 0.6504 ) ( 1.8 , 0.6741 ) ( 1.92 , 0.6987 ) ( 2.04 , 0.724 ) ( 2.16 , 0.7495 ) ( 2.28 , 0.7748 ) ( 2.4 , 0.7995 ) ( 2.52 , 0.8234 ) ( 2.64 , 0.846 ) ( 2.76 , 0.8671 ) ( 2.88 , 0.8866 ) ( 3 , 0.9043 )
};
 \addlegendentry{AP ($\theta_1=2$)}

	\end{axis}
\end{tikzpicture}
\begin{tikzpicture}[scale=0.7]
	\begin{axis}[
	xmin = -3,
	xmax = 3,
	ymin = 0,
	ymax = 1.4,
	ylabel=Average number of discoveries,
	xlabel=$\theta_2$,
        height=10cm,
	width=10cm,
		legend style={at={(0.5,-0.15)},
		anchor=north,legend columns=2}
	]
	\addplot[color=black, very thick] coordinates {
( -3 , 0.826 ) ( -2.88 , 0.7881 ) ( -2.76 , 0.7472 ) ( -2.64 , 0.7035 ) ( -2.52 , 0.6577 ) ( -2.4 , 0.6101 ) ( -2.28 , 0.5615 ) ( -2.16 , 0.5126 ) ( -2.04 , 0.4641 ) ( -1.92 , 0.4166 ) ( -1.8 , 0.3708 ) ( -1.68 , 0.3272 ) ( -1.56 , 0.2863 ) ( -1.44 , 0.2486 ) ( -1.32 , 0.2143 ) ( -1.2 , 0.1835 ) ( -1.08 , 0.1563 ) ( -0.96 , 0.1327 ) ( -0.84 , 0.1125 ) ( -0.72 , 0.0957 ) ( -0.6 , 0.0819 ) ( -0.48 , 0.071 ) ( -0.36 , 0.0627 ) ( -0.24 , 0.057 ) ( -0.12 , 0.0536 ) ( 0 , 0.0525 ) ( 0.12 , 0.0536 ) ( 0.24 , 0.057 ) ( 0.36 , 0.0627 ) ( 0.48 , 0.071 ) ( 0.6 , 0.0819 ) ( 0.72 , 0.0957 ) ( 0.84 , 0.1125 ) ( 0.96 , 0.1327 ) ( 1.08 , 0.1563 ) ( 1.2 , 0.1835 ) ( 1.32 , 0.2143 ) ( 1.44 , 0.2486 ) ( 1.56 , 0.2863 ) ( 1.68 , 0.3272 ) ( 1.8 , 0.3708 ) ( 1.92 , 0.4166 ) ( 2.04 , 0.4641 ) ( 2.16 , 0.5126 ) ( 2.28 , 0.5615 ) ( 2.4 , 0.6101 ) ( 2.52 , 0.6577 ) ( 2.64 , 0.7035 ) ( 2.76 , 0.7472 ) ( 2.88 , 0.7881 ) ( 3 , 0.826 )
};
 \addlegendentry{DCT ($\theta_1=0$)}
\addplot[color=black, dashed, very thick] coordinates {
( -3 , 0.8756 ) ( -2.88 , 0.8409 ) ( -2.76 , 0.8028 ) ( -2.64 , 0.7616 ) ( -2.52 , 0.7177 ) ( -2.4 , 0.6715 ) ( -2.28 , 0.6236 ) ( -2.16 , 0.5747 ) ( -2.04 , 0.5254 ) ( -1.92 , 0.4766 ) ( -1.8 , 0.4287 ) ( -1.68 , 0.3826 ) ( -1.56 , 0.3387 ) ( -1.44 , 0.2976 ) ( -1.32 , 0.2596 ) ( -1.2 , 0.2251 ) ( -1.08 , 0.1942 ) ( -0.96 , 0.1669 ) ( -0.84 , 0.1434 ) ( -0.72 , 0.1234 ) ( -0.6 , 0.1069 ) ( -0.48 , 0.0938 ) ( -0.36 , 0.0837 ) ( -0.24 , 0.0767 ) ( -0.12 , 0.0725 ) ( 0 , 0.0711 ) ( 0.12 , 0.0725 ) ( 0.24 , 0.0767 ) ( 0.36 , 0.0837 ) ( 0.48 , 0.0938 ) ( 0.6 , 0.1069 ) ( 0.72 , 0.1234 ) ( 0.84 , 0.1434 ) ( 0.96 , 0.1669 ) ( 1.08 , 0.1942 ) ( 1.2 , 0.2251 ) ( 1.32 , 0.2596 ) ( 1.44 , 0.2976 ) ( 1.56 , 0.3387 ) ( 1.68 , 0.3826 ) ( 1.8 , 0.4287 ) ( 1.92 , 0.4766 ) ( 2.04 , 0.5254 ) ( 2.16 , 0.5747 ) ( 2.28 , 0.6236 ) ( 2.4 , 0.6715 ) ( 2.52 , 0.7177 ) ( 2.64 , 0.7616 ) ( 2.76 , 0.8028 ) ( 2.88 , 0.8409 ) ( 3 , 0.8756 )
};
 \addlegendentry{AP ($\theta_1=0$)}
	\addplot[color=blue, very thick] coordinates {
( -3 , 1.314 ) ( -2.88 , 1.2773 ) ( -2.76 , 1.2369 ) ( -2.64 , 1.1932 ) ( -2.52 , 1.1466 ) ( -2.4 , 1.0975 ) ( -2.28 , 1.0465 ) ( -2.16 , 0.9944 ) ( -2.04 , 0.9417 ) ( -1.92 , 0.8894 ) ( -1.8 , 0.838 ) ( -1.68 , 0.7883 ) ( -1.56 , 0.7409 ) ( -1.44 , 0.6964 ) ( -1.32 , 0.6552 ) ( -1.2 , 0.6176 ) ( -1.08 , 0.5838 ) ( -0.96 , 0.554 ) ( -0.84 , 0.5281 ) ( -0.72 , 0.5061 ) ( -0.6 , 0.4879 ) ( -0.48 , 0.4733 ) ( -0.36 , 0.4621 ) ( -0.24 , 0.4543 ) ( -0.12 , 0.4496 ) ( 0 , 0.4481 ) ( 0.12 , 0.4496 ) ( 0.24 , 0.4543 ) ( 0.36 , 0.4621 ) ( 0.48 , 0.4733 ) ( 0.6 , 0.4879 ) ( 0.72 , 0.5061 ) ( 0.84 , 0.5281 ) ( 0.96 , 0.554 ) ( 1.08 , 0.5838 ) ( 1.2 , 0.6176 ) ( 1.32 , 0.6552 ) ( 1.44 , 0.6964 ) ( 1.56 , 0.7409 ) ( 1.68 , 0.7883 ) ( 1.8 , 0.838 ) ( 1.92 , 0.8894 ) ( 2.04 , 0.9417 ) ( 2.16 , 0.9944 ) ( 2.28 , 1.0465 ) ( 2.4 , 1.0975 ) ( 2.52 , 1.1466 ) ( 2.64 , 1.1932 ) ( 2.76 , 1.2369 ) ( 2.88 , 1.2773 ) ( 3 , 1.314 )
};
\addlegendentry{DCT ($\theta_1=2$)}
 \addplot[color=blue, dashed, very thick] coordinates {
( -3 , 1.3997 ) ( -2.88 , 1.367 ) ( -2.76 , 1.3306 ) ( -2.64 , 1.2905 ) ( -2.52 , 1.247 ) ( -2.4 , 1.2004 ) ( -2.28 , 1.1514 ) ( -2.16 , 1.1004 ) ( -2.04 , 1.0481 ) ( -1.92 , 0.9953 ) ( -1.8 , 0.9426 ) ( -1.68 , 0.8909 ) ( -1.56 , 0.8408 ) ( -1.44 , 0.793 ) ( -1.32 , 0.7481 ) ( -1.2 , 0.7064 ) ( -1.08 , 0.6685 ) ( -0.96 , 0.6345 ) ( -0.84 , 0.6046 ) ( -0.72 , 0.5788 ) ( -0.6 , 0.5572 ) ( -0.48 , 0.5397 ) ( -0.36 , 0.5262 ) ( -0.24 , 0.5167 ) ( -0.12 , 0.511 ) ( 0 , 0.5091 ) ( 0.12 , 0.511 ) ( 0.24 , 0.5167 ) ( 0.36 , 0.5262 ) ( 0.48 , 0.5397 ) ( 0.6 , 0.5572 ) ( 0.72 , 0.5788 ) ( 0.84 , 0.6046 ) ( 0.96 , 0.6345 ) ( 1.08 , 0.6685 ) ( 1.2 , 0.7064 ) ( 1.32 , 0.7481 ) ( 1.44 , 0.793 ) ( 1.56 , 0.8408 ) ( 1.68 , 0.8909 ) ( 1.8 , 0.9426 ) ( 1.92 , 0.9953 ) ( 2.04 , 1.0481 ) ( 2.16 , 1.1004 ) ( 2.28 , 1.1514 ) ( 2.4 , 1.2004 ) ( 2.52 , 1.247 ) ( 2.64 , 1.2905 ) ( 2.76 , 1.3306 ) ( 2.88 , 1.367 ) ( 3 , 1.3997 )
};
 \addlegendentry{AP ($\theta_1=2$)}

	\end{axis}
\end{tikzpicture}
\caption{Probability of observing non-trivial bounds (left plot) and 
the expected number of discoveries (right plot)  versus $\theta_2$ with directional closed testing (DCT, solid line) and  adaptive local tests with partitioning (AP, dashed line) at the confidence level of 95\%. Black curves correspond to $\theta_1 = 0$, and blue curves to $\theta_1 = 2$. 
}
\label{Figure:power2}
\end{figure}

\section{Applications}\label{sec - Zhao}

\subsection{Subgroup Analysis}\label{subsec-smallexample}
In subgroup analysis, it is common to apply two-sided tests in order to conclude about the  intervention effect in each subgroup. An easier, yet informative goal, is to provide lower bounds on $n^+$ and either $n^-$ or $n^-+n^0$ . 
The data of \cite{Fisher83}, re-analysed in \cite{gail1985testing}, consists of the difference in disease-free survival probabilities at 3 years for breast cancer patients who underwent a PFT treatment (L-Phenylalanine mustard, 5-Fluorouracil, and Tamoxifen) compared to those who received a PF treatment (L-Phenylalanine mustard and 5-Fluorouracil). The study includes a total of 1260 patients divided in $n=4$ subgroups, classified based on age and progesterone receptor levels. Let $m_i^j$ be the sample size for treatment $j \in \{\mathrm{PFT},\mathrm{PF}\}$ in the $i$th subgroup. 
Table \ref{tab:Gail} provides $\hat{\pi}_i^{j}$, the Kaplan-Meier estimate  of $\pi_i^{j}$, the disease-free survival probability at 3 years for patients who underwent treatment $j$ and belong to subgroup $i$. 
The corresponding standard error $\mathrm{SE}(\hat{\pi}^j_{i})$
is calculated using Greenwood's formula \citep{greenwood1926report}. 

However, the variable $\sqrt{m^{j}_i}(\hat{\pi}^j_{i} - \pi^j_{i})$ converges in distribution to a Normal variable with a mean of 0 and a variance that depends on the (unknown) survival probability $\pi^j_{i}$, which simplifies to $\pi_i^{j}(1-\pi_i^{j})$ if there is no censoring.  For censored data, \cite{cutler1958maximum} suggested replacing $m^j_i$ with the effective sample size $\tilde{m}^j_i = \hat{\pi}^j_i (1-\hat{\pi}^j_i)/ \mathrm{SE}^2(\hat{\pi}^j_{i})$. \cite{anderson1982approximate} showed that $\sqrt{\tilde{m}^j_i}(\hat{\pi}^j_i - \pi^j_i)$ converges in distribution to $N(0, \pi^j_i(1-\pi^j_i))$. 
This motivates the arcsine-square root transformation $g(\hat{\pi}^j_i) = \arcsin{ \sqrt{ \hat{\pi}^j_i } }$: by the delta method, the asymptotic variance of $\sqrt{\tilde{m}^{j}_i} \{g(\hat{\pi}^j_i) - g(\pi^j_i)\}$ is  
$1/4$, making it an effective variance-stabilizing transformation.  
Then, the large-sample normal approximation for the standardized difference $\hat{\theta}_i=  \{ g(\hat{\pi}^{\mathrm{PFT}}_{i}) - g(\hat{\pi}^{\mathrm{PF}}_{i}) \} /\sqrt{ 1/(4 \tilde{m}^{\mathrm{PFT}}_i) + 1/(4 \tilde{m}^{\mathrm{PF}}_i) }$  can be used to calculate the $p$-value $p_i = 1-\Phi(\hat{\theta}_i)$ for $H_i^-$, where $\Phi(\cdot)$ is the $N(0,1)$ CDF \citep{klein2007analyzing}.

 \begin{table}
     \caption{\label{tab:Gail} Analysis of  disease-free survival probabilities at 3 years $\hat{\pi}_{i}^j$ for breast cancer patients who underwent treatment $j$th treatment ($j=$PFT, PF) and belong to subgroup $i$ defined by age and progesterone receptor (PR) levels ($i=1$: Age $< 50$, PR $<10$; $i=2$: Age $\geq 50$, PR $<10$; $i=3$: Age $< 50$, PR $\geq 10$; $i=4$: Age $\geq  50$, PR $\geq 10$). 
 Reproduced from Table 2 in \cite{gail1985testing}.
 The last two rows are the (standardized) differences between  PFT and PF disease-free survival probabilities at 3 years (arcsin-square root transformed) and the  $p$-value for $H_i^-$ computed using the large sample normal approximation.}
 \begin{tabular}{c cc|cc|cc|cc}
 \hline
 \\
Subgroup $i$ & \multicolumn{2}{c}{1} &  \multicolumn{2}{c}{2} &  \multicolumn{2}{c}{3} &  \multicolumn{2}{c}{4} \\
 \\
Treatment $j$ & PFT & PF & PFT & PF & PFT & PF & PFT & PF \\
$\hat{\pi}_{i}^j$ & .599 & .436 & .526 & .639 & .651 & .698 & .639 & .790 \\
$\mathrm{SE}(\hat{\pi}^j_{i})$ &.0542 & .0572 &  .0510 & .0463 & .0431 & .0438 & .0386 & .0387 \\
\\
$\hat{\theta}_i$ & \multicolumn{2}{c}{2.051} & \multicolumn{2}{c}{-1.635} & \multicolumn{2}{c}{-0.764} & \multicolumn{2}{c}{-2.708} \\
$p_i$ & \multicolumn{2}{c}{.0202} & \multicolumn{2}{c}{.9490} & \multicolumn{2}{c}{.7774} & \multicolumn{2}{c}{.9966} \\
  \\
  
  \hline
 \end{tabular}
\end{table}

 Applying the first step of hypothesis selection based on the direction favored by the data,  the DCT procedure selects $H_1^-, H_2^+, H_3^+, H_4^+$. The $p$-values used for testing in the second step, are therefore: $0.0202, 1-0.9490, 1-0.7774$, and $1-0.9966$, for groups 1, 2, 3, and 4, respectively.  The procedure is 
  illustrated in  the top panel of Figure \ref{Figure:Gail_CT}. 
 
 Applying DCT at level $\alpha = 0.05$, we have  $0 \leq n^+$ and $1 \leq n^- $ with 95\% confidence. Furthermore, we conclude that $\theta_4 < 0$, since $H^+_4$ is rejected by closed testing. We obtain a trivial lower bound for $n^+$ because $H^-_1$ is not rejected by closed testing, and a non-trivial lower bound for $n^-$ because $H^+_2 \cap H^+_3 \cap H_4^+$ is rejected by closed testing.

 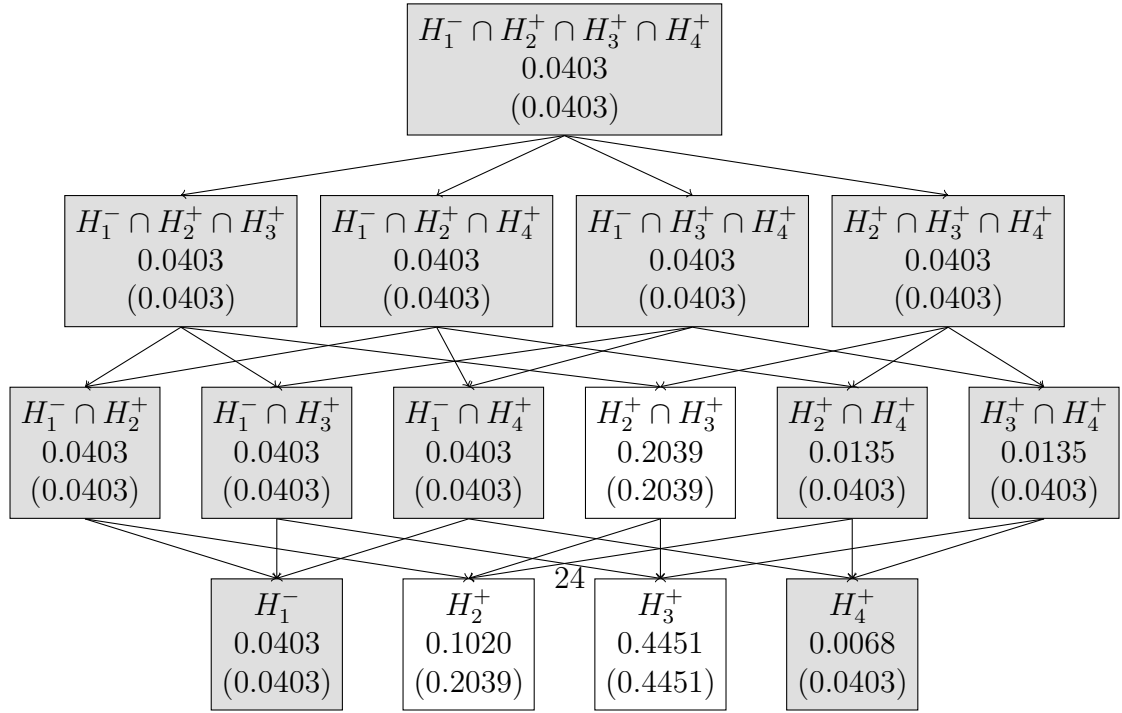
\begin{figure}
 \centering
 \begin{tikzpicture}[scale=.85]
\path (7,25.5) node {Procedure 3.1 (Directional closed testing)};
  \path (7,24) node[draw, fill= gray!25, align=center] (1234) {  $H_{1}^-\cap H_{2}^+ \cap H_{3}^+ \cap H_{4}^+$ \\ $0.0271$ \\  $(0.0271)$};
  \path (1,21) node[draw, align=center] (123) {  $H_{1}^-\cap H_{2}^+ \cap H_{3}^+$ \\ $0.1209$ \\ $(0.1158)$};
  \path (5,21) node[draw, align=center, fill= gray!25] (124) {   $H_{1}^-\cap H_{2}^+  \cap H_{4}^+$ \\ $0.0203$ \\ $(0.0271)$};
  \path (9,21) node[draw, align=center, fill= gray!25] (134) {   $H_{1}^-\cap  H_{3}^+ \cap H_{4}^+$ \\ $0.0203$ \\ $(0.0271 )$};
  \path (13,21) node[draw, align=center, fill= gray!25] (234) { $H_{2}^+ \cap H_{3}^+ \cap H_{4}^+$ \\ $0.0203$\\ $(0.0271)$};
  \path (-0.5,18) node[draw, align=center] (12) { $H_{1}^-\cap H_{2}^+$ \\ $0.0806$\\ $(0.1209)$};
  \path (2.5,18) node[draw, align=center] (13) {  $H_{1}^-\cap H_{3}^+$ \\ $0.0806$ \\ $(0.1209)$};
  \path (5.5,18) node[draw, align=center, fill= gray!25] (14) {$H_{1}^-\cap H_{4}^+$ \\ $0.0135$ \\ $(0.0271)$};
  \path (8.5,18) node[draw, align=center] (23) { $H_{2}^+\cap H_{3}^+$ \\ $0.2039$ \\ $(0.2039)$};
  \path (11.5,18) node[draw, align=center, fill= gray!25] (24) { $H_{2}^+\cap H_{4}^+$ \\ $0.0135$ \\ $(0.0271)$};
  \path (14.5,18) node[draw, align=center, fill= gray!25] (34) {  $H_{3}^+\cap H_{4}^+$ \\ $0.0135$ \\ $(0.0271)$};
  \path (2.5,15) node[draw, align=center] (1) {$H_{1}^-$ \\ $0.0403$ \\ $(0.1209)$};
  \path (5.5,15) node[draw, align=center] (2) { $H_{2}^+$ \\ $0.1020$ \\ $(0.2039)$};
  \path (8.5,15) node[draw, align=center] (3) { $H_{3}^+$ \\ $0.4451$ \\ $(0.4451)$};
  \path (11.5,15) node[draw, align=center, fill= gray!25] (4) {  $H_{4}^+$ \\ $0.0068$ \\ $(0.0271)$};
  
  \begin{scope}
  \draw[->]  (1234.south) -- (123.north);
  \draw[->]  (1234.south) -- (124.north);
  \draw[->]  (1234.south) -- (134.north);
  \draw[->]  (1234.south) -- (234.north);
  \draw[->]  (123.south) -- (12.north);
  \draw[->]  (123.south) -- (13.north);
  \draw[->]  (123.south) -- (23.north);
  \draw[->]  (124.south) -- (12.north);
  \draw[->]  (124.south) -- (14.north);
  \draw[->]  (124.south) -- (24.north);
  \draw[->]  (134.south) -- (13.north);
  \draw[->]  (134.south) -- (14.north);
  \draw[->]  (134.south) -- (34.north);
  \draw[->]  (234.south) -- (23.north);
  \draw[->]  (234.south) -- (24.north);
  \draw[->]  (234.south) -- (34.north);
  \draw[->]  (12.south) -- (1.north);
  \draw[->]  (12.south) -- (2.north);
  \draw[->]  (13.south) -- (1.north);
  \draw[->]  (13.south) -- (3.north);
  \draw[->]  (14.south) -- (1.north);
  \draw[->]  (14.south) -- (4.north);
  \draw[->]  (23.south) -- (2.north);
  \draw[->]  (23.south) -- (3.north);
  \draw[->]  (24.south) -- (2.north);
  \draw[->]  (24.south) -- (4.north);
  \draw[->]  (34.south) -- (3.north);
  \draw[->]  (34.south) -- (4.north);
  \end{scope}

\path (7,12.5) node {Procedure 5.1 (Closed testing for qualitative interactions)};
  \path (7,11) node[draw, fill= gray!25, align=center] (qi1234) {  $H_{1}^-\cap H_{2}^+ \cap H_{3}^+ \cap H_{4}^+$ \\ $0.0403$ \\  $(0.0403)$};
  \path (1,8) node[draw, align=center, fill= gray!25] (qi123) {  $H_{1}^-\cap H_{2}^+ \cap H_{3}^+$ \\ $0.0403$ \\ $(0.0403)$};
  \path (5,8) node[draw, align=center, fill= gray!25] (qi124) {   $H_{1}^-\cap H_{2}^+  \cap H_{4}^+$ \\ $0.0403$ \\ $(0.0403)$};
  \path (9,8) node[draw, align=center, fill= gray!25] (qi134) {   $H_{1}^-\cap  H_{3}^+ \cap H_{4}^+$ \\ $0.0403$ \\ $(0.0403)$};
  \path (13,8) node[draw, align=center, fill= gray!25] (qi234) { $H_{2}^+ \cap H_{3}^+ \cap H_{4}^+$ \\ $0.0403$\\ $(0.0403)$};
  \path (-0.5,5) node[draw, align=center, fill= gray!25] (qi12) { $H_{1}^-\cap H_{2}^+$ \\ $0.0403$\\ $(0.0403)$};
  \path (2.5,5) node[draw, align=center, fill= gray!25] (qi13) {  $H_{1}^-\cap H_{3}^+$ \\ $0.0403$ \\ $(0.0403)$};
  \path (5.5,5) node[draw, align=center, fill= gray!25] (qi14) {$H_{1}^-\cap H_{4}^+$ \\ $0.0403$ \\ $(0.0403)$};
  \path (8.5,5) node[draw, align=center] (qi23) { $H_{2}^+\cap H_{3}^+$ \\ $0.2039$ \\ $(0.2039)$};
  \path (11.5,5) node[draw, align=center, fill= gray!25] (qi24) { $H_{2}^+\cap H_{4}^+$ \\ $0.0135$ \\ $(0.0403)$};
  \path (14.5,5) node[draw, align=center, fill= gray!25] (qi34) {  $H_{3}^+\cap H_{4}^+$ \\ $0.0135$ \\ $(0.0403)$};
  \path (2.5,2) node[draw, align=center, fill= gray!25] (qi1) {$H_{1}^-$ \\ $0.0403$ \\ $(0.0403)$};
  \path (5.5,2) node[draw, align=center] (qi2) { $H_{2}^+$ \\ $0.1020$ \\ $(0.2039)$};
  \path (8.5,2) node[draw, align=center] (qi3) { $H_{3}^+$ \\ $0.4451$ \\ $(0.4451)$};
  \path (11.5,2) node[draw, align=center, fill= gray!25] (qi4) {  $H_{4}^+$ \\ $0.0068$ \\ $(0.0403)$};
  
  \begin{scope}
  \draw[->]  (qi1234.south) -- (qi123.north);
  \draw[->]  (qi1234.south) -- (qi124.north);
  \draw[->]  (qi1234.south) -- (qi134.north);
  \draw[->]  (qi1234.south) -- (qi234.north);
  \draw[->]  (qi123.south) -- (qi12.north);
  \draw[->]  (qi123.south) -- (qi13.north);
  \draw[->]  (qi123.south) -- (qi23.north);
  \draw[->]  (qi124.south) -- (qi12.north);
  \draw[->]  (qi124.south) -- (qi14.north);
  \draw[->]  (qi124.south) -- (qi24.north);
  \draw[->]  (qi134.south) -- (qi13.north);
  \draw[->]  (qi134.south) -- (qi14.north);
  \draw[->]  (qi134.south) -- (qi34.north);
  \draw[->]  (qi234.south) -- (qi23.north);
  \draw[->]  (qi234.south) -- (qi24.north);
  \draw[->]  (qi234.south) -- (qi34.north);
  \draw[->]  (qi12.south) -- (qi1.north);
  \draw[->]  (qi12.south) -- (qi2.north);
  \draw[->]  (qi13.south) -- (qi1.north);
  \draw[->]  (qi13.south) -- (qi3.north);
  \draw[->]  (qi14.south) -- (qi1.north);
  \draw[->]  (qi14.south) -- (qi4.north);
  \draw[->]  (qi23.south) -- (qi2.north);
  \draw[->]  (qi23.south) -- (qi3.north);
  \draw[->]  (qi24.south) -- (qi2.north);
  \draw[->]  (qi24.south) -- (qi4.north);
  \draw[->]  (qi34.south) -- (qi3.north);
  \draw[->]  (qi34.south) -- (qi4.north);
  \end{scope}

\end{tikzpicture}
 \caption{The example from Table \ref{tab:Gail}. Top Plot: Procedure \ref{algo-conditional closed testing} (Directional closed testing) with Simes' local tests.  
 Bottom Plot: Procedure \ref{algo-CTQI} (Closed testing for qualitative interactions) with \cite{Zhao2019} test for 
$\{H_1^-\} \cup \{H_2^+ \cap H_3^+ \cap H_4^+\}$.
Each node represents a closed testing intersection hypothesis, and the arrows indicate subset relationships. 
 Below each intersection hypothesis, the corresponding $p$-value and the closed testing adjusted $p$-value (in brackets).
 Gray nodes are hypotheses rejected by the $5\%$ level closed testing procedure.}
 \label{Figure:Gail_CT}
 \end{figure}

The procedure 
 using the same adaptive local tests with partitioning  is   illustrated in  Figure \ref{Figure:Gail_P}.
 We can extract the confidence set for $n^+$ from these non-rejected orthant hypotheses as follows. Since the number of positive parameters in each non-rejected orthant  hypothesis is at least one and at most three, the  bounds are $1 \leq n^+ \leq 3$ with 95\% confidence. This is an  improvement over the bounds for $n^+$ with DCT. Alternatively, we can see that the lower bound is one because $H_1^- \cap H_2^- \cap H_3^- \cap H_4^-$  is rejected  (by a local test of   $H_1^-$, since for the remaining coordinates $K_2, K_3,K_4$ were selected for testing). Moreover, since the fourth individual hypothesis in each non-rejected partition null  hypothesis is non-positive, we further  conclude that $\theta_4 \leq 0$. This is a weaker conclusion compared to $\theta_4 < 0$ obtained by DCT.

 \begin{figure}
 \centering
 \begin{tikzpicture}[scale=1]

  \path (7,12) node[draw, fill= gray!25, align=center] (1234) {  { $K_{1} \cap K_{2} \cap K_{3} \cap K_{4}$ } \\ $0.0203$};
  \path (0,9) node[draw, align=center] (123) { $K_{1} \cap K_{2} \cap K_{3} \cap H_{4}^-$ \\ $0.2039$ };
  \path (4.75,9) node[draw, align=center, fill= gray!25] (124) {   $K_{1} \cap K_{2} \cap H_{3}^- \cap K_{4}$ \\ $0.0135$ };
  \path (9.25,9) node[draw, align=center, fill= gray!25] (134) { $K_{1} \cap H_{2}^- \cap K_{3} \cap K_{4}$ \\ $0.0135$ };
  \path (14,9) node[draw, align=center, fill= gray!25] (234) {$H_{1}^- \cap K_{2} \cap K_{3} \cap K_{4}$  \\ $0.0271$};
  \path (-0.5,6) node[draw, align=center] (12) { \tiny $K_{1} \cap K_{2} \cap H_{3}^- \cap H_{4}^-$ \\ $0.1020$};
  \path (2.5,6) node[draw, align=center] (13) { \tiny $K_{1} \cap H^-_{2} \cap K_{3} \cap H_{4}^-$ \\ $0.4451$ };
  \path (5.5,6) node[draw, align=center, fill= gray!25] (14) { \tiny $K_{1} \cap H^-_{2} \cap H_{3}^- \cap K_{4}$ \\ $0.0068$ };
  \path (8.5,6) node[draw, align=center] (23) { \tiny $H_{1}^-\cap K_{2} \cap K_{3} \cap H_{4}^-$ \\ $0.1209$ };
  \path (11.5,6) node[draw, align=center, fill= gray!25] (24) { \tiny $H_{1}^-\cap K_{2} \cap H_{3}^- \cap K_{4}$ \\ $0.0203$  };
  \path (14.5,6) node[draw, align=center, fill= gray!25] (34) { \tiny $H_{1}^-\cap H_{2}^- \cap K_{3} \cap K_{4}$ \\ $0.0203$ };
  \path (0,3) node[draw, align=center] (1) {$K_{1} \cap H_{2}^- \cap H_{3}^- \cap H_{4}^-$ \\ $1$ };
  \path (4.75,3) node[draw, align=center] (2) { $H_{1}^- \cap K_{2} \cap H_{3}^- \cap H_{4}^-$ \\ $0.0806$ };
  \path (9.25,3) node[draw, align=center] (3) { $H_{1}^- \cap H_{2}^- \cap K_{3} \cap H_{4}^-$ \\ $0.0806$ };
  \path (14,3) node[draw, align=center, fill= gray!25] (4) {  $H_{1}^- \cap H_{2}^- \cap H_{3}^- \cap K_{4}$ \\ $0.0135$ };
  \path (7,0) node[draw, fill= gray!25, align=center] (0) {  $H_{1}^- \cap H_{2}^- \cap H_{3}^- \cap H_{4}^-$ \\ $0.0403$ };
  
\end{tikzpicture}
 \caption{The example from Table \ref{tab:Gail}. Each box represents an orthant hypothesis. 
 Below each orthant hypothesis, the corresponding $p$-value obtained by the Simes local test.
 Gray boxes are orthant hypotheses  rejected by the partitioning procedure at the $5\%$ level. }
 \label{Figure:Gail_P}
 \end{figure}
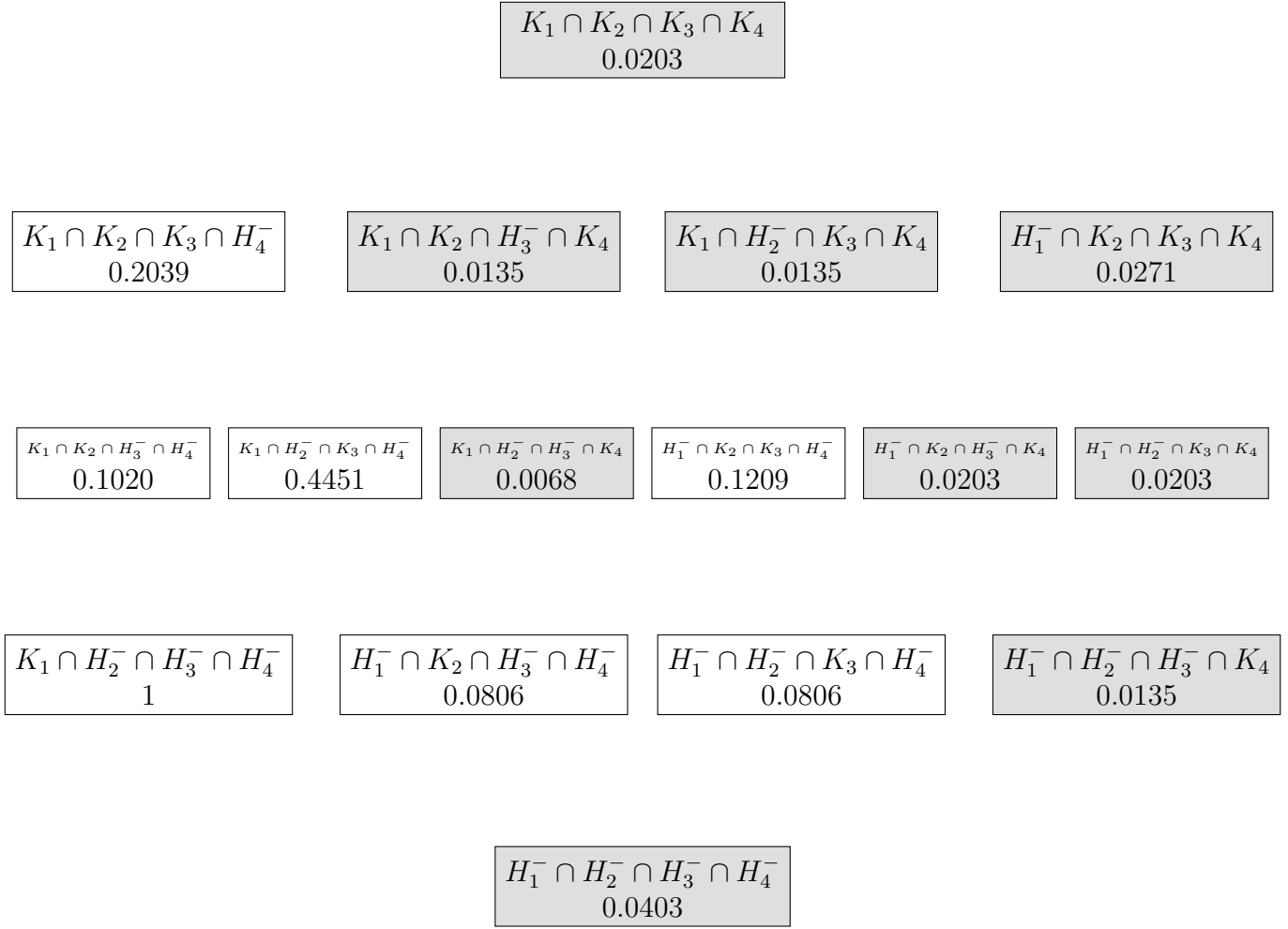

\subsection{Testing for qualitative interactions with follow-up inference}\label{subsec - QI}
If testing for mixed signs of the parameter vector $\theta$ is of primary concern, the approach described in \S~\ref{subsec - meta-analysis} can be adjusted so that DCT is applied only if the hypothesis of no qualitative interaction, $H_0: \{n^+=0 \}\cup \{n^-=0\},$  is rejected.   
 \cite{gail1985testing} analyzed the data of Table \ref{tab:Gail} 
to see whether they could demonstrate a qualitative interaction at the $\alpha=5\%$ level. However, the likelihood ratio test proposed by \cite{gail1985testing} results in a $p$-value of $0.0877$. 

The test of \cite{Zhao2019}, with  selection threshold $\tau = 1/2$, rejects $H_0: \{n^+=0 \}\cup \{n^-=0\}$ if $H_{\mathcal{S}^{-}}$ is rejected with $f(\{2p_i, i\in  \mathcal S^- \})$ and $H_{\mathcal{S}^{+}}$ is rejected with $f(\{2q_i, i\in  \mathcal S^+ \})$. This is a valid test   since $\{n^+=0 \}\subseteq H_{\mathcal{S}^{-}}$ and $\{n^-=0 \}\subseteq H_{\mathcal{S}^{+}}$. 
 The $p$-value is $\max(0.0403, 0.0203)=0.0403$ with  Simes' combination test.

We propose the following procedure tailored for qualitative interactions: start with \cite{Zhao2019} global test for $H_0$ and, if rejected, continue with the DCT procedure. 

\begin{algo}[Closed testing for qualitative interactions]\label{algo-CTQI} $ $

\begin{enumerate}

\item[Step 1] Apply Step 1 of procedure 2.1 

\item[Step 2] Apply a level $\alpha$ closed testing procedure on the family of hypotheses $\{H_i^-: p_i\leq 1/2 \}\cup \{H_i^+: p_i> 1/2 \}$. Testing $H_{\mathcal{I}}$ at level $\alpha$ is done by the local test
\begin{eqnarray*}\label{eq-testQI}
\phi_{\mathcal{I}} = \left\{ \begin{array}{cc} 
\mathds{1}\{ \max\{f(\{2p_i, i\in  \mathcal S^- \}), f(\{2q_i, i\in  \mathcal S^+ \})\} \leq \alpha \} & \mathrm{if\,\,} \mathcal{I}\supseteq \mathcal{S}^{-} \mathrm{\,\,or\,\,} \mathcal{I}\supseteq \mathcal{S}^{+} \\
\mathds{1}\{ f(\{2p_i, i\in \mathcal I \cap \mathcal S^- \}\cup \{2q_j, j\in \mathcal I \cap \mathcal S^+ \}) \leq \alpha \} & \mathrm{otherwise}\\
\end{array}\right. 
\end{eqnarray*}

\end{enumerate}
 \end{algo}

If $H_0$ is rejected, then the lower bounds for $n^+$ and $n^-$ are both informative, i.e. $n^+>0,n^->0$.
From the bottom plot of Figure \ref{Figure:Gail_CT} we see that at level $5\%$ procedure \ref{algo-CTQI} rejects all hypotheses but $H_2^+ \cap H_3^+$, $H_2^+$ and $H_3^+$, giving $1\leq n^+ \leq 3$, $1\leq n^- \leq 3$ and $\theta_1 >0$, $\theta_4 < 0$. This result improves  the one obtained by the DCT procedure \ref{algo-conditional closed testing} ($1\leq n^- \leq 4$ and $\theta_4 < 0$). However, with procedure \ref{algo-CTQI}, rejecting $H_0$ is crucial: if $H_0$ is not rejected, the closed testing procedure \ref{algo-CTQI} does not reject any intersection hypothesis. For instance, at $\alpha=3\%$, procedure \ref{algo-conditional closed testing} gives the same conclusions as at $\alpha=5\%$, while procedure \ref{algo-CTQI} is uninformative since no rejections are made. 

Note that the partitioning procedure with adaptive tests may employ the local tests of procedure \ref{algo-CTQI}, giving
\begin{eqnarray*}\label{eq-partitioningQI}
\psi_{\mathcal{K}} = \left\{ \begin{array}{cc} 
\mathds{1}\{ \max\{f(\{2p_i, i\in  \mathcal S^- \}), f(\{2q_i, i\in  \mathcal S^+ \})\} \leq \alpha \} & \mathrm{if\,\,} \mathcal{K}\subseteq \mathcal{S}^{-} \mathrm{\,\,or\,\,} \mathcal{K}\subseteq \mathcal{S}^{+} \\
\mathds{1}\{ f(\{2p_i, i\in \mathcal{K}^c \cap \mathcal S^- \}\cup \{2q_j, j\in \mathcal{K} \cap \mathcal S^+ \}) \leq \alpha \} & \mathrm{otherwise}\\
\end{array}\right. 
\end{eqnarray*}
Applied to the data in Table \ref{tab:Gail}, this procedure at level $\alpha=5\%$ gives the same conclusions as  Procedure \ref{algo-CTQI} at the same level.

\subsection{Enhancing meta-analysis}\label{subsec - meta-analysis}

The first step in a meta-analysis is often the computation of a global null $p$-value, which communicates the strength of the evidence against the null hypothesis of no association in any of the studies. Rejecting the global null does not rule out the possibility that this is due to a non-zero association only in a single study (e.g., due to the particulars of the cohort in that study, or due to bias). Therefore, 
it is of interest provide tight lower and upper bounds on the number of studies with, e.g.,  a positive effect. 
Moreover, for non-trivial bounds it is natural also to try to identify the studies where there is an association. DCT should be used for this purpose if inference on positive and negative associations  are desired.  Adaptive local tests with partitioning should be used if it is enough to infer the lower and upper bounds on $n^+$, and to identify positive and non-positive associations. Since in a typical meta-analysis  the studies are independent and the  test statistic in each study is approximately normal,  the 
required conditions $(A1)-(A2)$ are satisfied.  

Fore a concrete example, we consider the  data of \cite{cooper2003effects}, made public by \cite{konstantopoulos2011fixed}, where the  effect of a modified school calendar (with more frequent but shorter breaks) on student achievement is examined. The meta-analysis used studies of $n=56$ schools in 11 districts.  The experimental design is a two-group comparison (modified calendar vs traditional
calendar) which involves computing the standardized mean difference
$y_i \sim N(\theta_i,1)$, where $\theta_i>0$ indicates a positive effects, i.e. a higher level of achievement in the
group following the modified school calendar.
\cite{Zhao2019} showed that there is evidence of  a qualitative interaction at the $5\%$ level, i.e., that the effect of the modified school calendar  on student achievement is positive in at least one study,  and negative in at least one study.

Table \ref{table:konstantopoulos} shows the results of our analysis. 
With DCT, the upper bound on $n^+$ is the trivial bound of $n=56$. Adaptive local tests with partitioning gives much more informative bounds: using Fisher's combining method, the 95\% confidence bound for $n^+$ is 10 to 53, and we have four individual discoveries; using  mSimes, the  95\% confidence bound for $n^+$ is 8 to 54, and we have eight individual discoveries. The bounds are tighter than the bounds obtained by counting discoveries in each direction, since  local tests used for the bounds can reject the intersection hypotheses even when base hypotheses are not rejected.
Using the procedure of \cite{guo2015stepwise} which is  targeted for FWER control,   the 95\% confidence bound for $n^+$ is 8 to 55, with nine individual discoveries. Using the procedure of \cite{guo2015stepwise} which is  targeted for FDR control we get the narrowest bounds for $n^+$, 12 to 52. However, these are not 95\% confidence bounds since this procedure does not provide a simultaneous confidence guarantee. 

Post-hoc bounds are obtained for $n^+(\mathcal{I}_k)$ for $k=1,\ldots,n$ using Fisher's combining function, where $\mathcal{I}_k$ represents the indices of the top $k=|\mathcal{I}_k|$ schools with the largest (in absolute value) estimated effects $|y_i|$.

\begin{table}
\caption{\label{table:konstantopoulos} The effect of modified school calendars on student achievement:  intervals for $n^+$ and $n^-$ and corresponding number of discoveries $|\mathcal{D}^{+}|$ and $|\mathcal{D}^{-}|$, for different combining methods and procedures, with $95\%$ confidence. Results for Procedure \ref{algo-conditional closed testing} (DCT), the adaptive local tests with partitioning (AP), and the procedures in \cite{guo2015stepwise} for FWER and FDR control (last two rows).    $|\mathcal{D}^{-}|$ is the number of discoveries of negative parameters for DCT  and non-positive parameters otherwise.
}
\begin{tabular}{cc|ccc|ccc|}
 \multicolumn{2}{c}{} & \multicolumn{3}{c}{$n^+$} & \multicolumn{3}{c}{$n^-\ \backslash \ n^-+n^0$} \\
Local test & Procedure  &  lower bound & upper bound & $|\mathcal{D}^{+}|$  &  lower bound & upper bound & $|\mathcal{D}^{-}|$ \\
 \hline
\multirow{2}{*}{Fisher}& DCT   &  9 & 56 & \multirow{2}{*}{4} & 0 & 47 & \multirow{2}{*}{0}  \\
& AP  &    10 & 53 &   & 3 & 46 & \\
 \hline
\multirow{2}{*}{Simes}& DCT   &  8 & 56 & \multirow{2}{*}{8}  & 0 & 48 & \multirow{2}{*}{0} \\
& AP   &    8 & 55 &   & 1 & 48 & \\
 \hline
modified  & DCT  & 8 & 56 & \multirow{2}{*}{8}  & 0 & 48 & \multirow{2}{*}{0} \\
Simes       & AP  & 8 & 54 &  & 2 & 48 &  \\
 \hline
adaptive & DCT  & 9 & 56 & \multirow{2}{*}{5}   & 0 & 47 & \multirow{2}{*}{0}  \\
LRT &  AP  & 11 & 53 &    & 3 & 45 & \\
 \hline
\multirow{2}{*}{Guo-Romano} & FWER & 8 & 55 & 8 &   1 & 48 & 1  \\
& FDR & 12 & 52 & 12 &   4 & 44 & 4  \\
\hline
\end{tabular}

\end{table}

\section{Simulations}\label{sec - sim}

We carry out simulation studies in order to evaluate the performance of our suggested methods. Our main focus is on the evaluation of the DCT procedure  \ref{algo-conditional closed testing} and the  adaptive local tests with partitioning (AP) procedure, with different combining methods of potential interest. Proposition \ref{prop-AP-vs-DCT} states that the bounds of AP are at least as tight as with DCT.  We aim  to quantify the potential gain in terms of tighter bounds. 
We  use two representative combining methods for this purpose: Fisher, which is a sum-based method, and modified Simes \eqref{eq-modified-adaptive-Simes}, which is quantile based. Sum based tests, such as Fisher, are better for the bounds but worse for discoveries.  
In the SM, we further consider 
the adaptive likelihood ratio test  \citep{Mohamad2020}, which is a sum based test for normal test statistics,  and Simes -- it is inferior to mSimes for the bounds as well as for (directional) decisions in our simulations. 

We compare our procedures to the FWER and FDR controlling procedures in \cite{guo2015stepwise}, which were introduced in \S~\ref{sec-background-gr}. These procedures aim to identify the positive and non-positive parameters, so they control the directional error rates. We extract the  lower  bounds for $n^+(\mathcal I)$ and $n^-(\mathcal I) + n^0(\mathcal I)$ from these procedures by counting the number of discoveries that are claimed to positive, and non-positive, respectively. The FWER controlling procedure is a good competitor for discoveries, but does poorly for the bounds. The FDR controlling procedure is less relevant, since 
unlike all the other procedures considered, the probability that the number of positive or non-positive discoveries are greater than the true number of positive and non-positive parameters in the set of discoveries can be much larger than $\alpha$, see Figure \ref{fig-SM-GRFDR-coverage} in the SM.   
However, we included it since it is interesting that for the bounds it is not  necessarily better, even though there is no confidence guarantee.

Our data generating mechanism is as follows.  For a total of $n=50$ parameters, there are $n^+ \in \{ 0, \ldots, n\}$ positive parameters and $n^-\in \{ 0, \ldots, n-n^+\}$ negative parameters.  The remaining parameters are zero.  
The test statistics are generated independently from a Gaussian distribution with standard deviation one and mean centered at the parameter value, $\hat \theta_i \sim N(\theta_i,1)$. The one sided $p$-value for $H_i^-$ is $p_i = 1-\Phi(\hat \theta_i), i=1, \ldots, n$.  Our analysis is based on $B=2000$ data generating mechanisms, and $\alpha = 0.05$. 

We display the average  sum of the lower bounds on the number of positive and non-positive parameters (the average length of the bounds is $n$ minus this average) and the number of parameters discovered for each procedure, in Figure \ref{fig-tikz-bounds-main} and supplementary Figure \ref{fig-SM-sim}.   
Our key findings are the following. First,  AP provides much tighter bounds than DCT.
Second, the sum-based tests are better for the bounds than the quantile based tests and than the  FWER controlling procedure in  \cite{guo2015stepwise}. They can also be much better for the bounds than the FDR controlling procedure in  \cite{guo2015stepwise} in  data generating mechanisms with fairly weak signal. 
Third, for making discoveries, mSimes  is competitive with the FWER controlling procedure in  \cite{guo2015stepwise}, and they are both  much better than the sum-based tests. 
From the simulations, we can support the following general guidance:  mSimes is a good choice if making discoveries is important in addition to the bounds, or if it is expected  that the fraction of  parameters far enough from zero  is small. Otherwise it is best to use a sum based combining method, such as Fisher  or the ALRT.

\begin{figure}
\centering
\begin{tikzpicture}[scale=0.7]
	\begin{axis}[
	xmin = 0,
	xmax = 3,
	xtick ={0,...,3},
xticklabels ={0,1,2,3},
	ymin = -1,
	ymax = 27,
	xlabel=Signal-to-noise ratio,
		ylabel= $\ell^+ + \ell^-$,
		height=10cm,
		width=10cm,
		legend style={at={(0.5,-0.15)},
		anchor=north,legend columns=3}
	]
	\addplot[only marks, mark size=1.5pt, color=black,  mark=o] coordinates {
( 0,0.1105 ) ( 0.1,0.136 ) ( 0.2,0.175 ) ( 0.3,0.1745 ) ( 0.4,0.2715 ) ( 0.5,0.402 ) ( 0.6,0.5955 ) ( 0.7,0.884 ) ( 0.8,1.2345 ) ( 0.9,1.727 ) ( 1,2.495 ) ( 1.1,3.27 ) ( 1.2,4.0865 ) ( 1.3,5.2445 ) ( 1.4,6.281 ) ( 1.5,7.587 ) ( 1.6,8.7985 ) ( 1.7,10.135 ) ( 1.8,11.4605 ) ( 1.9,12.61 ) ( 2,13.8555 ) ( 2.1,15.2775 ) ( 2.2,16.3795 ) ( 2.3,17.4775 ) ( 2.4,18.5945 ) ( 2.5,19.6695 ) ( 2.6,20.4835 ) ( 2.7,21.2385 ) ( 2.8,22.063 ) ( 2.9,22.7415 ) ( 3,23.3695 )
};

	\addplot[only marks, mark size=2pt, color=black,  mark=diamond] coordinates {
( 0,0.1045 ) ( 0.1,0.11 ) ( 0.2,0.1255 ) ( 0.3,0.132 ) ( 0.4,0.174 ) ( 0.5,0.2145 ) ( 0.6,0.2765 ) ( 0.7,0.358 ) ( 0.8,0.464 ) ( 0.9,0.5875 ) ( 1,0.7955 ) ( 1.1,1.014 ) ( 1.2,1.262 ) ( 1.3,1.683 ) ( 1.4,2.13 ) ( 1.5,2.668 ) ( 1.6,3.2625 ) ( 1.7,4.059 ) ( 1.8,4.901 ) ( 1.9,5.795 ) ( 2,6.759 ) ( 2.1,8.018 ) ( 2.2,9.132 ) ( 2.3,10.298 ) ( 2.4,11.5825 ) ( 2.5,12.961 ) ( 2.6,14.1965 ) ( 2.7,15.3865 ) ( 2.8,16.4925 ) ( 2.9,17.6745 ) ( 3,18.6815 )
};

	\addplot[only marks, mark size=1.5pt, color=blue,  mark=o] coordinates {
( 0,0.001 ) ( 0.1,0.001 ) ( 0.2,0.002 ) ( 0.3,0.0065 ) ( 0.4,0.0085 ) ( 0.5,0.024 ) ( 0.6,0.0415 ) ( 0.7,0.0845 ) ( 0.8,0.1815 ) ( 0.9,0.322 ) ( 1,0.6005 ) ( 1.1,0.9715 ) ( 1.2,1.407 ) ( 1.3,2.1895 ) ( 1.4,2.929 ) ( 1.5,3.9535 ) ( 1.6,5.0015 ) ( 1.7,6.251 ) ( 1.8,7.502 ) ( 1.9,8.6195 ) ( 2,9.9 ) ( 2.1,11.3555 ) ( 2.2,12.537 ) ( 2.3,13.6935 ) ( 2.4,14.888 ) ( 2.5,16.0815 ) ( 2.6,16.9845 ) ( 2.7,17.8545 ) ( 2.8,18.762 ) ( 2.9,19.581 ) ( 3,20.259 )
};

	\addplot[only marks, mark size=2pt, color=blue,  mark=diamond] coordinates {
( 0,0.058 ) ( 0.1,0.057 ) ( 0.2,0.0635 ) ( 0.3,0.068 ) ( 0.4,0.085 ) ( 0.5,0.1125 ) ( 0.6,0.15 ) ( 0.7,0.2035 ) ( 0.8,0.2575 ) ( 0.9,0.3335 ) ( 1,0.4635 ) ( 1.1,0.5905 ) ( 1.2,0.7645 ) ( 1.3,1.0175 ) ( 1.4,1.312 ) ( 1.5,1.683 ) ( 1.6,2.117 ) ( 1.7,2.6585 ) ( 1.8,3.3 ) ( 1.9,3.9935 ) ( 2,4.743 ) ( 2.1,5.851 ) ( 2.2,6.7975 ) ( 2.3,7.811 ) ( 2.4,9.036 ) ( 2.5,10.2995 ) ( 2.6,11.5595 ) ( 2.7,12.7155 ) ( 2.8,13.817 ) ( 2.9,15.1165 ) ( 3,16.1595 )
};

	\addplot[ mark size=2pt, color=red,  mark=x] coordinates {
( 0,0.108 ) ( 0.1,0.1065 ) ( 0.2,0.12 ) ( 0.3,0.134 ) ( 0.4,0.1575 ) ( 0.5,0.1995 ) ( 0.6,0.2495 ) ( 0.7,0.304 ) ( 0.8,0.3925 ) ( 0.9,0.476 ) ( 1,0.605 ) ( 1.1,0.7645 ) ( 1.2,0.939 ) ( 1.3,1.1545 ) ( 1.4,1.455 ) ( 1.5,1.7555 ) ( 1.6,2.1485 ) ( 1.7,2.59 ) ( 1.8,3.0725 ) ( 1.9,3.6555 ) ( 2,4.255 ) ( 2.1,5.148 ) ( 2.2,5.913 ) ( 2.3,6.809 ) ( 2.4,7.9315 ) ( 2.5,8.997 ) ( 2.6,10.286 ) ( 2.7,11.451 ) ( 2.8,12.587 ) ( 2.9,13.993 ) ( 3,15.1895 )
};

	\addplot[ mark size=2pt, color=red,  mark=+] coordinates {
( 0,0.1245 ) ( 0.1,0.1395 ) ( 0.2,0.149 ) ( 0.3,0.1615 ) ( 0.4,0.21 ) ( 0.5,0.2595 ) ( 0.6,0.3715 ) ( 0.7,0.458 ) ( 0.8,0.6415 ) ( 0.9,0.84 ) ( 1,1.218 ) ( 1.1,1.595 ) ( 1.2,2.012 ) ( 1.3,2.8545 ) ( 1.4,3.767 ) ( 1.5,4.8835 ) ( 1.6,6.1525 ) ( 1.7,7.7925 ) ( 1.8,9.418 ) ( 1.9,11.1655 ) ( 2,12.87 ) ( 2.1,14.93 ) ( 2.2,16.599 ) ( 2.3,18.2775 ) ( 2.4,20.085 ) ( 2.5,21.491 ) ( 2.6,22.745 ) ( 2.7,23.9975 ) ( 2.8,25.154 ) ( 2.9,26.136 ) ( 3,26.8915 )
};
	\end{axis}
\end{tikzpicture} ~ \begin{tikzpicture}[scale=0.7]
	\begin{axis}[
	xmin = 0,
	xmax = 3,
	xtick ={0,...,3},
xticklabels ={0,1,2,3},
	ymin = -1,
	ymax = 27,
	xlabel=Signal-to-noise ratio,
		ylabel= $|\mathcal{D}^+| + |\mathcal{D}^-|$,
		height=10cm,
		width=10cm,
		legend style={at={(0.5,-0.15)},
		anchor=north,legend columns=2}
	]
	\addplot[only marks, mark size=1.5pt, color=black,  mark=o] coordinates {
( 0,0 ) ( 0.1,0 ) ( 0.2,0 ) ( 0.3,0 ) ( 0.4,0 ) ( 0.5,0 ) ( 0.6,0 ) ( 0.7,0 ) ( 0.8,0 ) ( 0.9,0 ) ( 1,0 ) ( 1.1,0 ) ( 1.2,0 ) ( 1.3,0 ) ( 1.4,0.001 ) ( 1.5,5e-04 ) ( 1.6,0 ) ( 1.7,0.002 ) ( 1.8,0.003 ) ( 1.9,0.004 ) ( 2,0.017 ) ( 2.1,0.022 ) ( 2.2,0.035 ) ( 2.3,0.0585 ) ( 2.4,0.096 ) ( 2.5,0.116 ) ( 2.6,0.194 ) ( 2.7,0.2665 ) ( 2.8,0.3685 ) ( 2.9,0.4695 ) ( 3,0.5995 )
};

	\addplot[only marks, mark size=2pt, color=black,  mark=diamond] coordinates {
( 0,0.0575 ) ( 0.1,0.0555 ) ( 0.2,0.0615 ) ( 0.3,0.0675 ) ( 0.4,0.0825 ) ( 0.5,0.1085 ) ( 0.6,0.146 ) ( 0.7,0.1945 ) ( 0.8,0.245 ) ( 0.9,0.3115 ) ( 1,0.4405 ) ( 1.1,0.5465 ) ( 1.2,0.702 ) ( 1.3,0.925 ) ( 1.4,1.1675 ) ( 1.5,1.4845 ) ( 1.6,1.8645 ) ( 1.7,2.2855 ) ( 1.8,2.7845 ) ( 1.9,3.355 ) ( 2,3.9605 ) ( 2.1,4.8655 ) ( 2.2,5.647 ) ( 2.3,6.528 ) ( 2.4,7.5635 ) ( 2.5,8.66 ) ( 2.6,9.8175 ) ( 2.7,10.8925 ) ( 2.8,11.9615 ) ( 2.9,13.2755 ) ( 3,14.387 )
};

	\addplot[ mark size=2pt, color=red,  mark=x] coordinates {
( 0,0.108 ) ( 0.1,0.1065 ) ( 0.2,0.12 ) ( 0.3,0.134 ) ( 0.4,0.1575 ) ( 0.5,0.1995 ) ( 0.6,0.2495 ) ( 0.7,0.304 ) ( 0.8,0.3925 ) ( 0.9,0.476 ) ( 1,0.605 ) ( 1.1,0.7645 ) ( 1.2,0.939 ) ( 1.3,1.1545 ) ( 1.4,1.455 ) ( 1.5,1.7555 ) ( 1.6,2.1485 ) ( 1.7,2.59 ) ( 1.8,3.0725 ) ( 1.9,3.6555 ) ( 2,4.255 ) ( 2.1,5.148 ) ( 2.2,5.913 ) ( 2.3,6.809 ) ( 2.4,7.9315 ) ( 2.5,8.997 ) ( 2.6,10.286 ) ( 2.7,11.451 ) ( 2.8,12.587 ) ( 2.9,13.993 ) ( 3,15.1895 )
};

	\addplot[ mark size=2pt, color=red,  mark=+] coordinates {
( 0,0.1245 ) ( 0.1,0.1395 ) ( 0.2,0.149 ) ( 0.3,0.1615 ) ( 0.4,0.21 ) ( 0.5,0.2595 ) ( 0.6,0.3715 ) ( 0.7,0.458 ) ( 0.8,0.6415 ) ( 0.9,0.84 ) ( 1,1.218 ) ( 1.1,1.595 ) ( 1.2,2.012 ) ( 1.3,2.8545 ) ( 1.4,3.767 ) ( 1.5,4.8835 ) ( 1.6,6.1525 ) ( 1.7,7.7925 ) ( 1.8,9.418 ) ( 1.9,11.1655 ) ( 2,12.87 ) ( 2.1,14.93 ) ( 2.2,16.599 ) ( 2.3,18.2775 ) ( 2.4,20.085 ) ( 2.5,21.491 ) ( 2.6,22.745 ) ( 2.7,23.9975 ) ( 2.8,25.154 ) ( 2.9,26.136 ) ( 3,26.8915 )
};

	\end{axis}

\end{tikzpicture} 
\begin{tikzpicture}[scale=0.7]
	\begin{axis}[
	xmin = 0,
	xmax = 3,
	xtick ={0,...,3},
xticklabels ={0,1,2,3},
	ymin = -1,
	ymax = 27,
	xlabel=Signal-to-noise ratio,
		ylabel= $\ell^+ + \ell^-$,
		height=10cm,
		width=10cm,
		legend style={at={(0.5,-0.15)},
		anchor=north,legend columns=3}
	]
	\addplot[only marks, mark size=1.5pt, color=black,  mark=o] coordinates {
( 0,0.1055 ) ( 0.1,0.16 ) ( 0.2,0.1845 ) ( 0.3,0.2655 ) ( 0.4,0.425 ) ( 0.5,0.665 ) ( 0.6,0.9725 ) ( 0.7,1.4795 ) ( 0.8,2.1325 ) ( 0.9,2.7925 ) ( 1,3.7435 ) ( 1.1,4.8585 ) ( 1.2,6.0135 ) ( 1.3,7.0975 ) ( 1.4,8.4445 ) ( 1.5,9.765 ) ( 1.6,11.0345 ) ( 1.7,12.5775 ) ( 1.8,13.846 ) ( 1.9,15.0035 ) ( 2,16.288 ) ( 2.1,17.53 ) ( 2.2,18.5255 ) ( 2.3,19.7095 ) ( 2.4,20.6365 ) ( 2.5,21.542 ) ( 2.6,22.3205 ) ( 2.7,23.1185 ) ( 2.8,23.816 ) ( 2.9,24.372 ) ( 3,24.919 )
};
 \addlegendentry{Fisher AP}

	\addplot[only marks, mark size=2pt, color=black,  mark=diamond] coordinates {
( 0,0.102 ) ( 0.1,0.11 ) ( 0.2,0.1155 ) ( 0.3,0.132 ) ( 0.4,0.171 ) ( 0.5,0.1935 ) ( 0.6,0.2595 ) ( 0.7,0.3215 ) ( 0.8,0.444 ) ( 0.9,0.532 ) ( 1,0.7215 ) ( 1.1,1.0045 ) ( 1.2,1.3075 ) ( 1.3,1.677 ) ( 1.4,2.1995 ) ( 1.5,2.8465 ) ( 1.6,3.5245 ) ( 1.7,4.4615 ) ( 1.8,5.4425 ) ( 1.9,6.493 ) ( 2,7.6075 ) ( 2.1,8.941 ) ( 2.2,10.1065 ) ( 2.3,11.507 ) ( 2.4,12.768 ) ( 2.5,14.1575 ) ( 2.6,15.1695 ) ( 2.7,16.522 ) ( 2.8,17.6385 ) ( 2.9,18.817 ) ( 3,19.823 )
};
 \addlegendentry{mSimes AP}

	\addplot[only marks, mark size=1.5pt, color=blue,  mark=o] coordinates {
( 0,0.002 ) ( 0.1,0.0035 ) ( 0.2,0.0075 ) ( 0.3,0.0175 ) ( 0.4,0.0355 ) ( 0.5,0.0805 ) ( 0.6,0.163 ) ( 0.7,0.352 ) ( 0.8,0.6515 ) ( 0.9,1.067 ) ( 1,1.717 ) ( 1.1,2.6185 ) ( 1.2,3.5785 ) ( 1.3,4.6155 ) ( 1.4,5.964 ) ( 1.5,7.1955 ) ( 1.6,8.49 ) ( 1.7,9.9855 ) ( 1.8,11.3085 ) ( 1.9,12.46 ) ( 2,13.804 ) ( 2.1,15.0715 ) ( 2.2,16.0915 ) ( 2.3,17.3025 ) ( 2.4,18.2685 ) ( 2.5,19.2445 ) ( 2.6,20.072 ) ( 2.7,20.9145 ) ( 2.8,21.6885 ) ( 2.9,22.342 ) ( 3,22.929 )
};
 \addlegendentry{Fisher DCT}

	\addplot[only marks, mark size=2pt, color=blue,  mark=diamond] coordinates {
( 0,0.0535 ) ( 0.1,0.05 ) ( 0.2,0.061 ) ( 0.3,0.0695 ) ( 0.4,0.09 ) ( 0.5,0.107 ) ( 0.6,0.146 ) ( 0.7,0.199 ) ( 0.8,0.258 ) ( 0.9,0.322 ) ( 1,0.4425 ) ( 1.1,0.6455 ) ( 1.2,0.842 ) ( 1.3,1.064 ) ( 1.4,1.4485 ) ( 1.5,1.889 ) ( 1.6,2.4055 ) ( 1.7,3.0605 ) ( 1.8,3.8415 ) ( 1.9,4.678 ) ( 2,5.6625 ) ( 2.1,6.7735 ) ( 2.2,7.8415 ) ( 2.3,9.084 ) ( 2.4,10.273 ) ( 2.5,11.5765 ) ( 2.6,12.644 ) ( 2.7,13.984 ) ( 2.8,15.186 ) ( 2.9,16.4555 ) ( 3,17.512 )
};
 \addlegendentry{mSimes DCT}

	\addplot[ mark size=2pt, color=red,  mark=x] coordinates {
( 0,0.097 ) ( 0.1,0.1 ) ( 0.2,0.114 ) ( 0.3,0.126 ) ( 0.4,0.17 ) ( 0.5,0.1925 ) ( 0.6,0.231 ) ( 0.7,0.2935 ) ( 0.8,0.3895 ) ( 0.9,0.4355 ) ( 1,0.5645 ) ( 1.1,0.7715 ) ( 1.2,0.943 ) ( 1.3,1.141 ) ( 1.4,1.4395 ) ( 1.5,1.779 ) ( 1.6,2.1835 ) ( 1.7,2.559 ) ( 1.8,3.0885 ) ( 1.9,3.712 ) ( 2,4.364 ) ( 2.1,5.0975 ) ( 2.2,5.9655 ) ( 2.3,6.8815 ) ( 2.4,7.879 ) ( 2.5,8.9795 ) ( 2.6,10.075 ) ( 2.7,11.2745 ) ( 2.8,12.588 ) ( 2.9,13.8965 ) ( 3,15.249 )
};
 \addlegendentry{GR FWER}
 
	\addplot[ mark size=2pt, color=red,  mark=+] coordinates {
( 0,0.1255 ) ( 0.1,0.1225 ) ( 0.2,0.1395 ) ( 0.3,0.167 ) ( 0.4,0.233 ) ( 0.5,0.258 ) ( 0.6,0.356 ) ( 0.7,0.4445 ) ( 0.8,0.629 ) ( 0.9,0.812 ) ( 1,1.0715 ) ( 1.1,1.571 ) ( 1.2,2.146 ) ( 1.3,2.7945 ) ( 1.4,3.7395 ) ( 1.5,4.9325 ) ( 1.6,6.2355 ) ( 1.7,7.7635 ) ( 1.8,9.5445 ) ( 1.9,11.209 ) ( 2,13.0515 ) ( 2.1,14.9335 ) ( 2.2,16.431 ) ( 2.3,18.36 ) ( 2.4,19.8515 ) ( 2.5,21.5025 ) ( 2.6,22.6755 ) ( 2.7,23.9535 ) ( 2.8,25.0505 ) ( 2.9,26.0315 ) ( 3,26.807 )
};
 \addlegendentry{GR FDR}
	\end{axis}
\end{tikzpicture} ~ \begin{tikzpicture}[scale=0.7]
	\begin{axis}[
	xmin = 0,
	xmax = 3,
	xtick ={0,...,3},
xticklabels ={0,1,2,3},
	ymin = -1,
	ymax = 27,
	xlabel=Signal-to-noise ratio,
		ylabel= $|\mathcal{D}^+| + |\mathcal{D}^-|$,
		height=10cm,
		width=10cm,
		legend style={at={(0.5,-0.15)},
		anchor=north,legend columns=2}
	]
	\addplot[only marks, mark size=1.5pt, color=black,  mark=o] coordinates {
( 0,0 ) ( 0.1,0 ) ( 0.2,0 ) ( 0.3,0 ) ( 0.4,0 ) ( 0.5,0 ) ( 0.6,0 ) ( 0.7,0 ) ( 0.8,0 ) ( 0.9,0 ) ( 1,0 ) ( 1.1,0 ) ( 1.2,0 ) ( 1.3,0 ) ( 1.4,5e-04 ) ( 1.5,0 ) ( 1.6,0.001 ) ( 1.7,0.003 ) ( 1.8,0.0045 ) ( 1.9,0.007 ) ( 2,0.0145 ) ( 2.1,0.019 ) ( 2.2,0.037 ) ( 2.3,0.0615 ) ( 2.4,0.0815 ) ( 2.5,0.1265 ) ( 2.6,0.1775 ) ( 2.7,0.2685 ) ( 2.8,0.356 ) ( 2.9,0.46 ) ( 3,0.6275 )
};
 \addlegendentry{Fisher}

	\addplot[only marks, mark size=2pt, color=black,  mark=diamond] coordinates {
( 0,0.052 ) ( 0.1,0.05 ) ( 0.2,0.0595 ) ( 0.3,0.067 ) ( 0.4,0.0885 ) ( 0.5,0.103 ) ( 0.6,0.14 ) ( 0.7,0.19 ) ( 0.8,0.2395 ) ( 0.9,0.2975 ) ( 1,0.394 ) ( 1.1,0.575 ) ( 1.2,0.7285 ) ( 1.3,0.8915 ) ( 1.4,1.1915 ) ( 1.5,1.4905 ) ( 1.6,1.895 ) ( 1.7,2.2835 ) ( 1.8,2.83 ) ( 1.9,3.4105 ) ( 2,4.095 ) ( 2.1,4.8275 ) ( 2.2,5.6625 ) ( 2.3,6.5615 ) ( 2.4,7.553 ) ( 2.5,8.6195 ) ( 2.6,9.625 ) ( 2.7,10.7655 ) ( 2.8,11.993 ) ( 2.9,13.257 ) ( 3,14.431 )
};
 \addlegendentry{mSimes}

	\addplot[ mark size=2pt, color=red,  mark=x] coordinates {
( 0,0.097 ) ( 0.1,0.1 ) ( 0.2,0.114 ) ( 0.3,0.126 ) ( 0.4,0.17 ) ( 0.5,0.1925 ) ( 0.6,0.231 ) ( 0.7,0.2935 ) ( 0.8,0.3895 ) ( 0.9,0.4355 ) ( 1,0.5645 ) ( 1.1,0.7715 ) ( 1.2,0.943 ) ( 1.3,1.141 ) ( 1.4,1.4395 ) ( 1.5,1.779 ) ( 1.6,2.1835 ) ( 1.7,2.559 ) ( 1.8,3.0885 ) ( 1.9,3.712 ) ( 2,4.364 ) ( 2.1,5.0975 ) ( 2.2,5.9655 ) ( 2.3,6.8815 ) ( 2.4,7.879 ) ( 2.5,8.9795 ) ( 2.6,10.075 ) ( 2.7,11.2745 ) ( 2.8,12.588 ) ( 2.9,13.8965 ) ( 3,15.249 )
};
 \addlegendentry{GR FWER}
 
	\addplot[ mark size=2pt, color=red,  mark=+] coordinates {
( 0,0.1255 ) ( 0.1,0.1225 ) ( 0.2,0.1395 ) ( 0.3,0.167 ) ( 0.4,0.233 ) ( 0.5,0.258 ) ( 0.6,0.356 ) ( 0.7,0.4445 ) ( 0.8,0.629 ) ( 0.9,0.812 ) ( 1,1.0715 ) ( 1.1,1.571 ) ( 1.2,2.146 ) ( 1.3,2.7945 ) ( 1.4,3.7395 ) ( 1.5,4.9325 ) ( 1.6,6.2355 ) ( 1.7,7.7635 ) ( 1.8,9.5445 ) ( 1.9,11.209 ) ( 2,13.0515 ) ( 2.1,14.9335 ) ( 2.2,16.431 ) ( 2.3,18.36 ) ( 2.4,19.8515 ) ( 2.5,21.5025 ) ( 2.6,22.6755 ) ( 2.7,23.9535 ) ( 2.8,25.0505 ) ( 2.9,26.0315 ) ( 3,26.807 )
};
 \addlegendentry{GR FDR}

	\end{axis}

\end{tikzpicture} 
\caption{The average sum of the lower bounds on positive and non-positive parameters (left column, higher is better) and  number of base hypotheses rejected (right column, higher is better), versus the signal to noise ratio (i.e., the absolute value of the non-null $\theta_i$ value),  
for the following methods: AP and DCT with combining functions Fisher (AP FISHER and DCT Fisher), and  mSimes (AP mSIMES and DCT mSimes); 
the Guo-Romano procedure for FWER control (GR FWER) and for FDR control (GR FDR). $n^+=n^-=15$ in the first row; $n^+=30$ and $n^-=0$ in the second row.  In the right column, the absence of DCT procedures  is due to the fact that they coincide with the AP procedures (Proposition \ref{prop-AP-vs-DCT}). 
Note that for GR-FWER and GR-FDR, the sum of lower bounds on the number of positive and non-positive parameters is their number of discoveries, so their curves are the same across panels in each row.
}
\label{fig-tikz-bounds-main}
\end{figure}
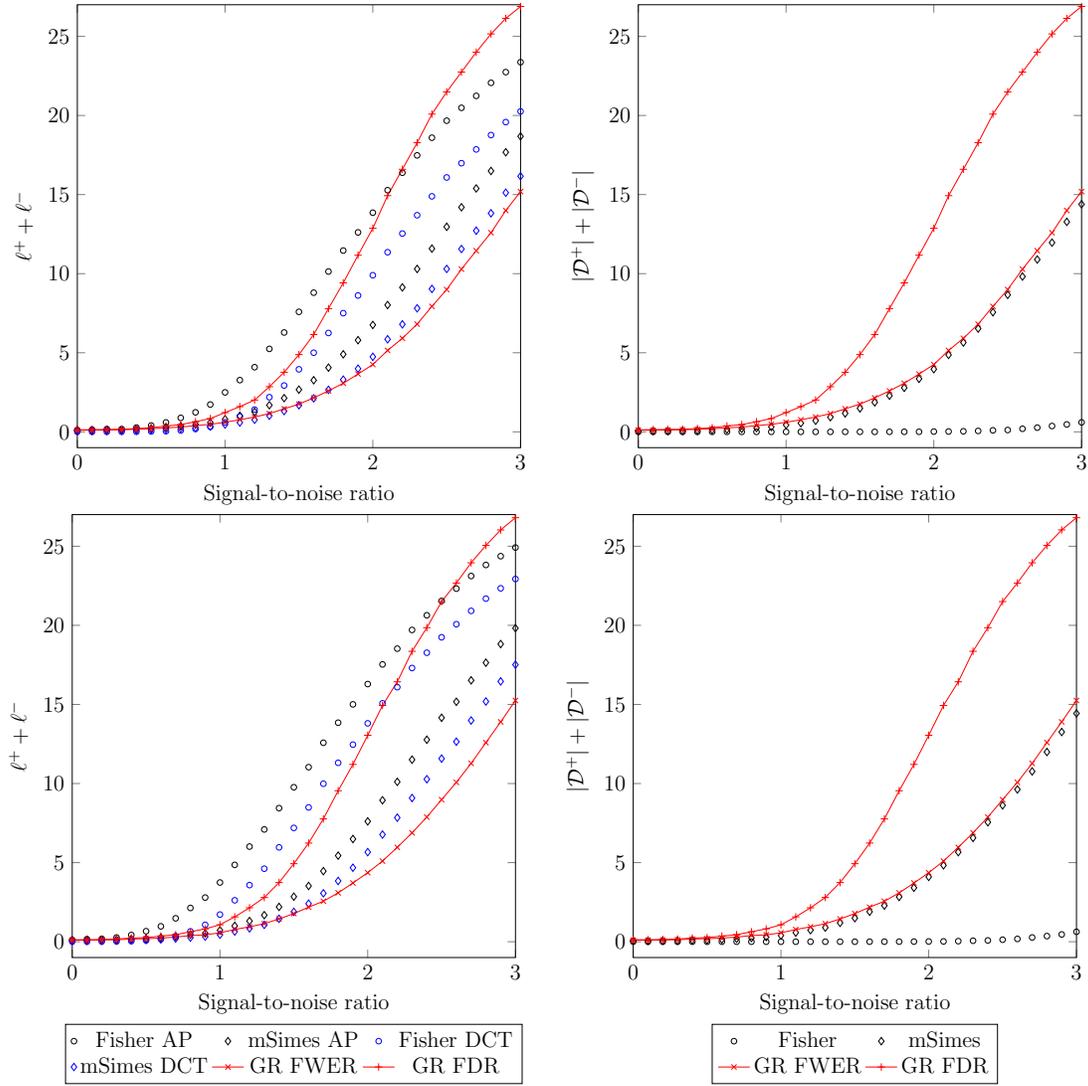

\section{Concluding remarks}\label{sec - disc}

In Procedure \ref{algo-conditional closed testing}, the selection Step 1 has a cost of inflating the $p$-values by a factor of 2. It is a reasonable cost, since it is no worse than the cost of 2-sided testing. 
It is possible to consider more generally the selection of $\{H_i^+: p_i\leq \tau \}\cup \{H_i^-: p_i> \tau \}$ for a $\tau \in (0,1)$. Such a selection step inflates the $p_i$s by $1/\tau$, and the $q_i$s by $1/(1-\tau)$. It may be useful when it is more important to detect one direction over another.   
More generally, each parameter may be accompanied by its own predefined selection threshold $\tau_i, i=1, \ldots, n$. In particular, it is possible to set $\tau_i \in \{0,1\}$ 
if the direction for inference is  a-priori known for  parameter $\theta_i$. Selection  is thus more general, and the directional closed testing or partitioning procedures are carried out as described, using the conditional $p$-values $\{p_i/\tau_i: p_i\leq \tau_i \} \cup \{q_i/(1-\tau_i): p_i> \tau_i \}$. Importantly, the same computational shortcuts can be used  on these conditional $p$-values, and inference is valid as long as $p_1,\ldots,p_n$ are  independent and uniformly valid.   

An open problem is how to deal with dependent $p$-values. For very specific structured dependencies, as discussed in \S~\ref{subsec-dep}, the procedures that condition on the vector of signs may be applied with the appropriate local test.  
Without conditioning, it is possible to apply the partitioning procedure as detailed in \S~\ref{subsec-partitioning-cb-general}, as long as the local test is valid for the specific dependency between the $p$-values. We provide a small example  in \S~\ref{sec-unconditionalParitioning}, and  leave for future research the more comprehensive examination of unconditional methods for dependent test statistics.

This work concentrated on post-hoc bounds. For simplicity of exposition, the examples we presented were small scale. But in principle the methods can be applied also in moderate and large scale testing. As shown in \S~\ref{sec - sim}, the bounds can be tighter than bounds obtained from FDR controlling procedures when the signal is weak and sparse. Specifically, our procedures can be useful for studies that examine longitudinal data on independent individuals. For example, in neuroimaging, the $p$-value from an experiment on an individual in a region of interest in the brain may be reported for hundreds of individuals. The lower bounds on $n^+(\mathcal I)$ and $n^-(\mathcal I)$ or $n^-(\mathcal I)+n^0(\mathcal I)$ may be of great interest, for $\mathcal I = [n]$ as well as for various $\mathcal I \subset [n]$.  

For large scale testing, methods that use negative controls or masking  have been suggested for inference on the signs. For example,  \cite{Leung23} (building upon the work of \citealt{Lei18}) suggests reflecting the $p$-values of the one-sided hypotheses around 0.25 or 0.75. For FDR control, these methods build on the knockoff+ method of \cite{Barber15}, so the estimated false discovery proportion in $\mathcal I$ is the estimated number of  negative controls in  $\mathcal I$ plus one, divided by $|\mathcal I|$.  The added one in the numerator makes these procedures less relevant for small scale testing or when very few parameters are  non-zero. An open question is how to extend their inference to post-hoc simultaneous bounds on  $n^+(\mathcal I)$ and $n^-(\mathcal I)$. For this purpose, it may be possible to build upon the work of \cite{Li22}. It will then be interesting to compare the two approaches: conditioning on the signs and using adjusted $p$-values as in the current manuscript,  versus conditioning on the masked $p$-values and using a knockoff approach.

In a high dimensional meta-analysis, where the studies are independent but each study examines many (say $m$) features, can we first select features, and then provide simultaneous post-hoc bounds? For example, for the neuroimaging application, how can we  address multiple regions of interest, and provide tight bounds for each region while controlling a relevant error measure? 
If $|\mathcal R|$ features out of $m$ are selected, and the post-hoc bounds are constructed at level $|\mathcal R| \alpha/m$, then a relevant error is controlled. This error is related to the 
 average FWER over the selected, i.e.,  the expected proportion of features with at least one erroneous rejection out of all the selected features, suggested in \cite{benjamini14}. Briefly,  the error we consider is the expected proportion of at least one non-covering post-hoc bound, over the selected features.  We plan to carry out in the future a detailed analysis of this procedure, and its theoretical guarantees for specific and general dependencies.

\bibliography{bib}
\bibliographystyle{apalike}

\appendix

\section{Proofs}\label{Appendix-proofs}
\subsection{Proof of Proposition \ref{prop-conditionalCT}}
\begin{proof}
From step 2 of the procedure, it follows that  $\phi_{\mathcal{T}}$ is a function of $\{2p_i, i\in \mathcal S^-\cap \mathcal T \}\cup \{2q_j, j\in \mathcal S^+\cap \mathcal T \}$. The validity of the local test given the vector of signs follows from the fact that it combines valid $p$-values, since the validity of each conditional $p$-values is guaranteed by assumption $(A1)$, and they are independent by assumption $(A2)$. So $\forall \ \theta\in \mathbb{R}^n$, $$\mP_{\{\theta_i: i\in \mathcal T\}}(\phi_{\mathcal{T}} =1\mid  \{sign(p_i-1/2): i\in \mathcal T\})\leq \alpha.$$ 
Moreover, since the $p$-values are independent (assumption ($A2$)), it follows that 
\begin{eqnarray}
 \mP_{\theta}(\phi_{\mathcal{T}} =1\mid \mathcal S) =  \mP_{\{\theta_i: i\in \mathcal T\}}(\phi_{\mathcal{T}} =1\mid  \{sign(p_i-1/2): i\in \mathcal T\}). \nonumber     
\end{eqnarray}
Thus completing the proof. 

\end{proof}

\subsection{Proof of Proposition \ref{prop-partitioning}}
\begin{proof}
Suppose that the true $\theta$ lies in $\Theta_{\mathcal{K}}$ for some $\mathcal{K} \subseteq [n]$.  We have $\psi_{|\mathcal{K} \cap \mathcal{I}|}(\mathcal{I}) \leq \psi_{\mathcal{K}}$ for every $\mathcal{I}$. Therefore, if $\psi_{\mathcal{K}}=0$, then $|\mathcal{K} \cap \mathcal{I}| = n^+(\mathcal{I}) \in \mathcal{N}^+(\mathcal{I})_\alpha$ for all $\mathcal{I}$, thus
$$\mP_{\theta} ( n^+(\mathcal{I}) \in \mathcal{N}^+(\mathcal{I})_\alpha \mathrm{\,\,}\forall  \mathcal{I} ) \geq \mP_{\theta}(\psi_{\mathcal{K}}=0) \geq 1-\alpha.
$$
Furthermore, the event $\{n^+(\mathcal{I}) \in \mathcal{N}^+(\mathcal{I})_\alpha\}$ implies 
$\{\min(\mathcal{N}^+(\mathcal{I})_\alpha) \leq n^+(\mathcal{I})\leq  \max(\mathcal{N}^+(\mathcal{I})_\alpha)\}$ for all $\mathcal{I}$, thus $\pl^+_{\alpha}(\mathcal{I}) = \min(\mathcal{N}^+(\mathcal{I})_\alpha)$, $\pl^-_{\alpha}(\mathcal{I}) = |\mathcal{I}| - \max(\mathcal{N}^+(\mathcal{I})_\alpha)$ satisfy (\ref{eq-positiveonly-simultguarantee}). 
\end{proof}

\subsection{Proof of Proposition \ref{prop-partitioning-properties}}
\begin{proof}

Proposition \ref{prop-complexity}  shows that Algorithm \ref{algo-1} returns the bounds $\pl_\alpha^+(\mathcal{I})$ and $\pl_\alpha^-(\mathcal{I})$ derived from the partitioning procedure with adaptive local tests (\ref{eq-psi_K}) with at most $O(|\mathcal{I}|^2 \cdot \max(1,|\mathcal{I}^c|^2) )$ computation.

Suppose the true $\theta \in \Theta_{\mathcal{K}}$ for some $\mathcal{K} \subseteq [n]$, and let $\tilde{\mathcal{T}} =  \{\mathcal{K}^c \cap \mathcal{S}^-\} \cup \{\mathcal{K} \cap \mathcal{S}^+\}$. The partitioning procedure testing the orthant hypothesis $J_{\mathcal{K}}$ by using $\psi_\mathcal{K}=\phi_{\tilde{\mathcal{T}}}$, and
$$\displaystyle \sup_{\theta \in \Theta_\mathcal{K}} \mP_\theta (\phi_{\mathcal{K}} = 1 \mid \mathcal{S})\leq \sup_{\theta \in H_{\tilde{\mathcal{T}}}} \mP_\theta (\phi_{\tilde{\mathcal{T}}} = 1 \mid \mathcal{S})\leq \alpha$$
where the first inequality follows from $J_\mathcal{K} \subseteq H_{\tilde{\mathcal{T}}}$ and the second inequality follows from Proposition \ref{prop-conditionalCT}. This establishes control of type I error at level $\alpha$, conditional on the vector of signs $\mathcal{S}$, for the test $\phi_{\mathcal{K}}$ of the true orthant hypothesis $J_{\mathcal{K}}$.

Furthermore, 
\begin{eqnarray}
&& \mathbb{P}_{\theta}(\tilde{\mathcal{T}} = \emptyset )   = \mathbb{P}_{\theta}(\mathcal{S}^- = \mathcal{K} ) =\nonumber \\ &&  \prod_{i \in \mathcal{K}} \mathbb{P}_{\theta}(p_i \leq 0.5) \prod_{j \notin \mathcal{K}} \mathbb{P}_{\theta}(p_j > 0.5) 
\geq \mathbb{P}_{0}(\mathcal{S}^-=\mathcal{K}) \geq 2^{-n}.\label{eq-sm-unconditional-gain}
\end{eqnarray}
Thus
$$
 \mathbb{P}_{\theta}( \psi_{\mathcal{K}} =1 ) = \sum_{\mathcal{S} }  \mathbb{P}_{\theta}( \phi_{\tilde{\mathcal{T}}} =1 | \mathcal{S} ) \mathbb{P}_{\theta}(  \mathcal{S} ) \leq  \alpha (1-\mathbb{P}_{\theta}( \tilde{\mathcal{T}} = \emptyset )) \leq \alpha(1-2^{-n}).$$
This ensures unconditional control of type I error at level $\tilde{\alpha}=\alpha(1-2^{-n})$ for the test $\phi_{\mathcal{K}}$ of the true orthant hypothesis $J_{\mathcal{K}}$. See also § 1.2 of the Supplementary Material to \cite{Mohamad2020} for the unconditional type I error control with the adaptive likelihood ratio test. 
 
The previous results together with Proposition \ref{prop-partitioning} imply the conditional and unconditional coverage (\ref{eq-positiveonly-simultguarantee-conditional}) and  (\ref{eq-unconditionalcoverage}).

\end{proof}

\subsection{Proof of Proposition \ref{prop-AP-vs-DCT}}
\begin{proof}

In order to prove that $\ell^{+}_\alpha(\mathcal{I}) \leq \pl^{+}_{\alpha}(\mathcal{I})$ for every $\mathcal{I}$, we proceed by contradiction.  Assume that  $\exists \, \mathcal{I}$ such that $ \ell^{+}_\alpha(\mathcal{I}) > \pl^{+}_{\alpha}(\mathcal{I})$. We have $\ell^{+}_\alpha(\mathcal{I}) = \ell^{+}_\alpha(\mathcal{I} \cap \mathcal{S}^-)$ and by Lemma 1 in \cite{goeman2021}:
\begin{eqnarray}\label{contra}
    \ell^{+}_\alpha(\mathcal{I\cap \mathcal{S}^-}) = \min_{\mathcal{V} \subseteq [n]}(|(\mathcal{I}\cap \mathcal{S}^-) \setminus \mathcal{V}|: \phi_\mathcal{V} = 0).
\end{eqnarray}

Because the null hypothesis that $n^{+}(\mathcal{I})$ is equal to $\pl^{+}_{\alpha}(\mathcal{I})$ is not rejected at level $\alpha$, then $\exists \, \mathcal{K}$ such that $|\mathcal{K} \cap \mathcal{I}|= \pl^{+}_{\alpha}(\mathcal{I})$ for which   $J_{\mathcal{K}}$ is not rejected at level $\alpha$. Equivalently, $\exists \, \mathcal{U}=(\mathcal{K}^c \cap \mathcal{S}^-) \cup (\mathcal{K} \cap \mathcal{S}^+)$ such that $\phi_{\mathcal{U}} = 0$ and 
$$|(\mathcal{I}\cap \mathcal{S}^-) \setminus \mathcal{U}| = |\mathcal{I}\cap \mathcal{S}^- \cap \mathcal{K}| \leq \pl^{+}_{\alpha}(\mathcal{I}) < \ell^{+}_{\alpha}(\mathcal{I}).$$
On the other hand, from equation (\ref{contra}) it follows that $\ell^{+}_{\alpha}(\mathcal{I}) \leq |(\mathcal{I}\cap \mathcal{S}^-) \setminus \mathcal{U}|$ contradicting the assumption that $\pl^{+}_{\alpha}(\mathcal{I}) < \ell^{+}_\alpha(\mathcal{I})$. 

$\ell^{-}_{\alpha}(\mathcal{I}) \leq \pl^{-}_{\alpha}(\mathcal{I})$ for every $\mathcal{I}$ can be proved in analogous way.

In order to prove that $\pl^{+}_\alpha(\mathcal{I}) = \ell^{+}_{\alpha}(\mathcal{I})$ for every $\mathcal{I} \subseteq \mathcal{S}^-$, write:
\begin{eqnarray*}
    \pl_\alpha^+(\mathcal{I}) &=& \min\{v \in \{0,\ldots,n\}: \psi_v(\mathcal{I}) = 0\}\\
    &=& \min\{v \in \{0,\ldots,n\}: \min_{\mathcal{K} \subseteq [n]: |\mathcal{K} \cap \mathcal{I}|=v} \phi_{ (\mathcal{K}^c \cap \mathcal{S}^{-}) \cup (\mathcal{K} \cap \mathcal{S}^{+} )  }  = 0\}\\
    &=& 
    \min_{\mathcal{K} \subseteq [n]}\{|\mathcal{K} \cap \mathcal{I}| :  \phi_{ (\mathcal{K}^c \cap \mathcal{S}^{-}) \cup (\mathcal{K} \cap \mathcal{S}^{+} )  }  = 0\}\\
        &=& 
    \min_{\mathcal{V} \subseteq [n]}\{|\mathcal{V}^c \cap \mathcal{I}| :  \phi_{ (\mathcal{V} \cap \mathcal{S}^{-}) \cup (\mathcal{V}^c \cap \mathcal{S}^{+} )  }  = 0\}\\
\end{eqnarray*}
If  $\mathcal{I} \subseteq \mathcal{S}^-$, then $\mathcal{I} \setminus \{(\mathcal{V} \cap \mathcal{S}^{-}) \cup (\mathcal{V}^c \cap \mathcal{S}^{+} )\} = |\mathcal{I} \setminus \mathcal{V}|$ for any $\mathcal{V}$, thus
$$\pl_\alpha^+(\mathcal{I}) = \min_{\mathcal{V} \subseteq [n]}\{| \mathcal{I} \setminus \mathcal{V}| :  \phi_{ (\mathcal{V} \cap \mathcal{S}^{-}) \cup (\mathcal{V}^c \cap \mathcal{S}^{+} )  }  = 0\} = \min_{\mathcal{V} \subseteq [n]}\{|\mathcal{I} \setminus \mathcal{V}| :  \phi_{ \mathcal{V}}  = 0\}=\ell_\alpha^+(\mathcal{I}).$$ 

$\pl^{-}_\alpha(\mathcal{I}) = \ell^{-}_{\alpha}(\mathcal{I})$ for every $\mathcal{I} \subseteq \mathcal{S}^+$  can be proved in analogous way.

\end{proof}

\section{Case $n=2$}\label{Appendix-moren2}

Figure  \ref{Figure:fisherelementary} displays the rejection regions for base hypotheses for the partitioning with adaptive local tests (top left plot) and directional closed testing (top right plot) using Fisher's combining functions.

We call a closed testing  procedure \emph{consonant} if the rejection of the intersection hypothesis $H_{\mathcal{I}}$ implies the rejection of at least one of its component hypotheses $H_i$, $i\in \mathcal{I}$ \citep{romano2011consonance}.
From the top-right and bottom-right plots of Figure \ref{Figure:fisherelementary}, we can observe that the directional closed testing procedure with Fisher's combination is also not consonant. For instance, when $\alpha=0.2$ as shown in the plots, if $(p_1,p_2) = (0.101,0.101)$, then $H_{1}^- \cap H_{2}^-$ is rejected by Fisher's combination test $f_{\mathrm{Fisher}}(2p_1,2p_2) = 0.1713 < \alpha$. However, neither $H_1^-$ nor $H_2^-$ is rejected because $2p_i > \alpha$, $i = 1,2$.

The directional closed testing procedure with Simes' local tests is also not consonant for $n>2$, although it is consonant for $n=2$, as can be seen from Figure \ref{Figure:simesregions}.

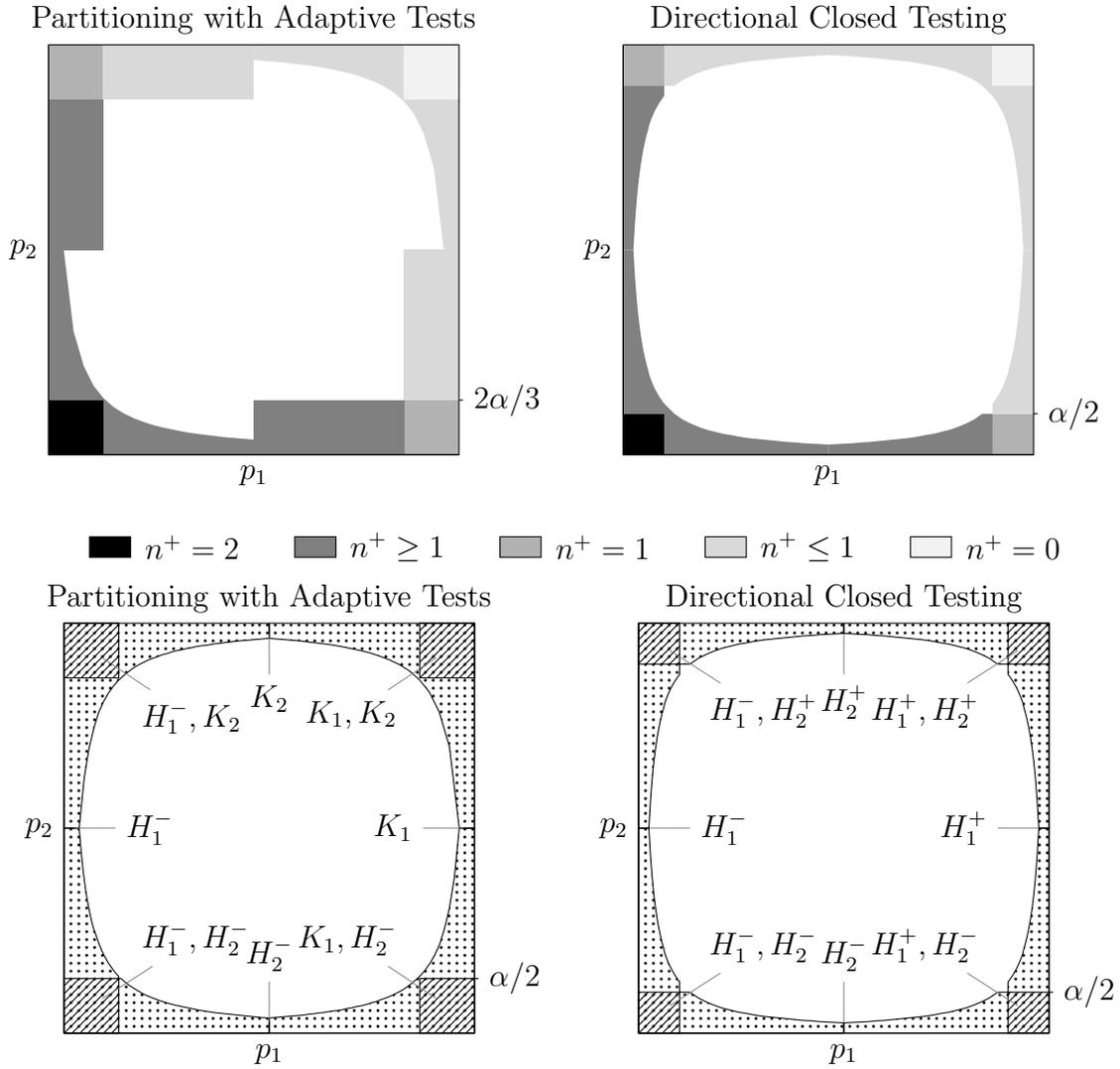
\begin{figure}
\centering
\def\alpgr{0.05}
\def\gam{0.1}
\def\alp{0.07804555}

\begin{tikzpicture}[scale=5.5]
\begin{scope}[yshift = -1.4 cm, xshift=1.4 cm]

\draw (0.5,1) node[above]{Directional Closed Testing};

\fill[fill=gray] plot coordinates { 
(0,0) (0,0.5) ( 0.025,0.5 ) ( 0.029,0.432 ) ( 0.033,0.38 ) ( 0.037,0.339 ) ( 0.041,0.307 ) ( 0.045,0.28 ) ( 0.049,0.257 ) ( 0.053,0.238 ) ( 0.057,0.221 ) ( 0.061,0.207 ) ( 0.064,0.194 ) ( 0.068,0.183 ) ( 0.072,0.173 ) ( 0.076,0.164 ) ( 0.08,0.156 ) ( 0.084,0.149 ) ( 0.088,0.142 ) ( 0.092,0.136 ) ( 0.096,0.13 ) ( 0.1,0.125 )
( 0.125,0.1 ) ( 0.145,0.086 ) ( 0.165,0.076 ) ( 0.184,0.068 ) ( 0.204,0.061 ) ( 0.224,0.056 ) ( 0.244,0.051 ) ( 0.263,0.048 ) ( 0.283,0.044 ) ( 0.303,0.041 ) ( 0.322,0.039 ) ( 0.342,0.037 ) ( 0.362,0.035 ) ( 0.382,0.033 ) ( 0.401,0.031 ) ( 0.421,0.03 ) ( 0.441,0.028 ) ( 0.461,0.027 ) ( 0.48,0.026 ) ( 0.5,0.025 )
(0.5,0) (0,0)
};
\fill[fill=gray!30] plot coordinates { 
(1,1) (1,0.5) 
( 0.975,0.5 ) ( 0.971,0.568 ) ( 0.967,0.62 ) ( 0.963,0.661 ) ( 0.959,0.693 ) ( 0.955,0.72 ) ( 0.951,0.743 ) ( 0.947,0.762 ) ( 0.943,0.779 ) ( 0.939,0.793 ) ( 0.936,0.806 ) ( 0.932,0.817 ) ( 0.928,0.827 ) ( 0.924,0.836 ) ( 0.92,0.844 ) ( 0.916,0.851 ) ( 0.912,0.858 ) ( 0.908,0.864 ) ( 0.904,0.87 ) 
( 0.9,0.875 )
( 0.875,0.9 ) 
( 0.855,0.914 ) ( 0.835,0.924 ) ( 0.816,0.932 ) ( 0.796,0.939 ) ( 0.776,0.944 ) ( 0.756,0.949 ) ( 0.737,0.952 ) ( 0.717,0.956 ) ( 0.697,0.959 ) ( 0.678,0.961 ) ( 0.658,0.963 ) ( 0.638,0.965 ) ( 0.618,0.967 ) ( 0.599,0.969 ) ( 0.579,0.97 ) ( 0.559,0.972 ) ( 0.539,0.973 ) ( 0.52,0.974 ) ( 0.5,0.975 ) 
(0.5, 1)
};
\fill[fill=black] (0,0) rectangle (\gam,\gam);
\fill[fill=gray!10] (1,1) rectangle (1-\gam,1-\gam);
\fill[fill=gray!60] (0,1) rectangle (\gam,1-\gam);
\fill[fill=gray!60] (0,1) rectangle (\gam,1-\gam);
\fill[fill=gray!60] (1-\gam,0) rectangle (1,\gam);
\fill[fill=gray!60] (1-\gam,0) rectangle (1,\gam);

\fill[fill=gray] plot coordinates { (0,0.9) (0,0.5) 
( 0.025,0.5 ) ( 0.029,0.568 ) ( 0.033,0.62 ) ( 0.037,0.661 ) ( 0.041,0.693 ) ( 0.045,0.72 ) ( 0.049,0.743 ) ( 0.053,0.762 ) ( 0.057,0.779 ) ( 0.061,0.793 ) ( 0.064,0.806 ) ( 0.068,0.817 ) ( 0.072,0.827 ) ( 0.076,0.836 ) ( 0.08,0.844 ) ( 0.084,0.851 ) ( 0.088,0.858 ) ( 0.092,0.864 ) ( 0.096,0.87 ) ( 0.1,0.875 )
(0.1,0.9)};

\fill[fill=gray!30] plot coordinates { (0.1,1) (0.1,0.9)
( 0.125,0.9 ) ( 0.145,0.914 ) ( 0.165,0.924 ) ( 0.184,0.932 ) ( 0.204,0.939 ) ( 0.224,0.944 ) ( 0.244,0.949 ) ( 0.263,0.952 ) ( 0.283,0.956 ) ( 0.303,0.959 ) ( 0.322,0.961 ) ( 0.342,0.963 ) ( 0.362,0.965 ) ( 0.382,0.967 ) ( 0.401,0.969 ) ( 0.421,0.97 ) ( 0.441,0.972 ) ( 0.461,0.973 ) ( 0.48,0.974 ) ( 0.5,0.975 )
(0.5,1)};

\fill[fill=gray!30] plot coordinates { (1,0.1) (1,0.5) 
( 0.975,0.5 ) ( 0.971,0.432 ) ( 0.967,0.38 ) ( 0.963,0.339 ) ( 0.959,0.307 ) ( 0.955,0.28 ) ( 0.951,0.257 ) ( 0.947,0.238 ) ( 0.943,0.221 ) ( 0.939,0.207 ) ( 0.936,0.194 ) ( 0.932,0.183 ) ( 0.928,0.173 ) ( 0.924,0.164 ) ( 0.92,0.156 ) ( 0.916,0.149 ) ( 0.912,0.142 ) ( 0.908,0.136 ) ( 0.904,0.13 ) ( 0.9,0.125 )
(0.9,0.1)};

\fill[fill=gray] plot coordinates { (0.9,0) (0.9,0.1)
( 0.875,0.1 ) ( 0.855,0.086 ) ( 0.835,0.076 ) ( 0.816,0.068 ) ( 0.796,0.061 ) ( 0.776,0.056 ) ( 0.756,0.051 ) ( 0.737,0.048 ) ( 0.717,0.044 ) ( 0.697,0.041 ) ( 0.678,0.039 ) ( 0.658,0.037 ) ( 0.638,0.035 ) ( 0.618,0.033 ) ( 0.599,0.031 ) ( 0.579,0.03 ) ( 0.559,0.028 ) ( 0.539,0.027 ) ( 0.52,0.026 ) ( 0.5,0.025 )
(0.5,0)};
\draw (1,\gam) -- (1.01,\gam) node[right]{$\alpha/2$};
\draw (0,0) -- (1,0) node[midway, below]{$p_1$};
\draw (0,0) -- (0,1) node[midway, left]{$p_2$};

\draw (0,0) rectangle (1,1);
\end{scope}

\def\alpgr{0.06666667}
\def\gam{0.1333333}
\def\alp{0.1040607}

\begin{scope}[yshift = -1.4 cm]

\draw (0.5,1) node[above]{Partitioning with Adaptive Tests};

\fill[fill=gray] plot coordinates { 
(0,0) (0,0.5) ( 0.037,0.5 ) ( 0.061,0.301 ) ( 0.086,0.216 ) ( 0.11,0.168 ) ( 0.134,0.138 ) ( 0.159,0.116 ) ( 0.183,0.101 ) ( 0.208,0.089 ) ( 0.232,0.08 ) ( 0.256,0.072 ) ( 0.281,0.066 ) ( 0.305,0.061 ) ( 0.329,0.056 ) ( 0.354,0.052 ) ( 0.378,0.049 ) ( 0.403,0.046 ) ( 0.427,0.043 ) ( 0.451,0.041 ) ( 0.476,0.039 ) ( 0.5,0.037 ) (0.5,0) (0,0)
};
\fill[fill=gray!30] plot coordinates { 
(1,1) (1,0.5) ( 0.963,0.5 ) ( 0.939,0.699 ) ( 0.914,0.784 ) ( 0.89,0.832 ) ( 0.866,0.862 ) ( 0.841,0.884 ) ( 0.817,0.899 ) ( 0.792,0.911 ) ( 0.768,0.92 ) ( 0.744,0.928 ) ( 0.719,0.934 ) ( 0.695,0.939 ) ( 0.671,0.944 ) ( 0.646,0.948 ) ( 0.622,0.951 ) ( 0.597,0.954 ) ( 0.573,0.957 ) ( 0.549,0.959 ) ( 0.524,0.961 ) ( 0.5,0.963 ) ( 0.5,0.975 ) 
(0.5, 1)
};
\fill[fill=black] (0,0) rectangle (\gam,\gam);
\fill[fill=gray!10] (1,1) rectangle (1-\gam,1-\gam);
\fill[fill=gray!60] (0,1) rectangle (\gam,1-\gam);
\fill[fill=gray!60] (0,1) rectangle (\gam,1-\gam);
\fill[fill=gray!60] (1-\gam,0) rectangle (1,\gam);
\fill[fill=gray!60] (1-\gam,0) rectangle (1,\gam);

\fill[fill=gray] (0,1-\gam) rectangle (\gam,0.5);
\fill[fill=gray] (0.5,0) rectangle (1-\gam,\gam);
\fill[fill=gray!30] (\gam,1) rectangle (0.5,1-\gam);
\fill[fill=gray!30] (1-\gam,\gam) rectangle (1,0.5);
\draw (1,\gam) -- (1.01,\gam) node[right]{$2\alpha/3$};
\draw (0,0) -- (1,0) node[midway, below]{$p_1$};
\draw (0,0) -- (0,1) node[midway, left]{$p_2$};
\draw (0,0) rectangle (1,1);

\end{scope}

\begin{scope}[yshift = -1.65 cm, xshift = 0.1cm]

\draw[fill=black] (0,0.05) rectangle (0.1,0);
\draw (0.25, 0.025) node{$n^+=2$};

\draw[fill=gray] (0.5,0.05) rectangle (0.6,0);
\draw (0.75, 0.025) node{$n^+\geq 1$};

\draw[fill=gray!60] (1,0.05) rectangle (1.1,0);
\draw (1.25, 0.025) node{$n^+= 1$};

\draw[fill=gray!30] (1.5,0.05) rectangle (1.6,0);
\draw (1.75, 0.025) node{$n^+\leq 1$};

\draw[fill=gray!10] (2,0.05) rectangle (2.1,0);
\draw (2.25, 0.025) node{$n^+ = 0$};
\end{scope}

\end{tikzpicture}
\def\alp{0.07804555}
\def\alpgr{0.05}
\def\gam{0.1}

\begin{tikzpicture}[scale=5.5]

\begin{scope}[yshift = -1.4 cm, xshift = 1.4cm]

\draw (0.5,1) node[above]{Directional Closed Testing};

\draw[pattern=dots] plot coordinates { 
(0,0) (0,0.5) 
( 0.025,0.5 ) ( 0.029,0.432 ) ( 0.033,0.38 ) ( 0.037,0.339 ) ( 0.041,0.307 ) ( 0.045,0.28 ) ( 0.049,0.257 ) ( 0.053,0.238 ) ( 0.057,0.221 ) ( 0.061,0.207 ) ( 0.064,0.194 ) ( 0.068,0.183 ) ( 0.072,0.173 ) ( 0.076,0.164 ) ( 0.08,0.156 ) ( 0.084,0.149 ) ( 0.088,0.142 ) ( 0.092,0.136 ) ( 0.096,0.13 ) ( 0.1,0.125 )
(0.1,0.1) 
( 0.125,0.1 ) ( 0.145,0.086 ) ( 0.165,0.076 ) ( 0.184,0.068 ) ( 0.204,0.061 ) ( 0.224,0.056 ) ( 0.244,0.051 ) ( 0.263,0.048 ) ( 0.283,0.044 ) ( 0.303,0.041 ) ( 0.322,0.039 ) ( 0.342,0.037 ) ( 0.362,0.035 ) ( 0.382,0.033 ) ( 0.401,0.031 ) ( 0.421,0.03 ) ( 0.441,0.028 ) ( 0.461,0.027 ) ( 0.48,0.026 ) ( 0.5,0.025 )
(0.5,0) (0,0) 
};

\draw[pattern=dots] plot coordinates { 
(1,0) (1,0.5) 
( 0.975,0.5 ) ( 0.971,0.432 ) ( 0.967,0.38 ) ( 0.963,0.339 ) ( 0.959,0.307 ) ( 0.955,0.28 ) ( 0.951,0.257 ) ( 0.947,0.238 ) ( 0.943,0.221 ) ( 0.939,0.207 ) ( 0.936,0.194 ) ( 0.932,0.183 ) ( 0.928,0.173 ) ( 0.924,0.164 ) ( 0.92,0.156 ) ( 0.916,0.149 ) ( 0.912,0.142 ) ( 0.908,0.136 ) ( 0.904,0.13 ) ( 0.9,0.125 )
(0.9,0.1)
( 0.875,0.1 ) ( 0.855,0.086 ) ( 0.835,0.076 ) ( 0.816,0.068 ) ( 0.796,0.061 ) ( 0.776,0.056 ) ( 0.756,0.051 ) ( 0.737,0.048 ) ( 0.717,0.044 ) ( 0.697,0.041 ) ( 0.678,0.039 ) ( 0.658,0.037 ) ( 0.638,0.035 ) ( 0.618,0.033 ) ( 0.599,0.031 ) ( 0.579,0.03 ) ( 0.559,0.028 ) ( 0.539,0.027 ) ( 0.52,0.026 ) ( 0.5,0.025 )
(0.5,0)};

\draw[pattern=dots] plot coordinates { 
(0,1) (0,0.5) 
( 0.025,0.5 ) ( 0.029,0.568 ) ( 0.033,0.62 ) ( 0.037,0.661 ) ( 0.041,0.693 ) ( 0.045,0.72 ) ( 0.049,0.743 ) ( 0.053,0.762 ) ( 0.057,0.779 ) ( 0.061,0.793 ) ( 0.064,0.806 ) ( 0.068,0.817 ) ( 0.072,0.827 ) ( 0.076,0.836 ) ( 0.08,0.844 ) ( 0.084,0.851 ) ( 0.088,0.858 ) ( 0.092,0.864 ) ( 0.096,0.87 ) ( 0.1,0.875 )
(0.1,0.9)
( 0.125,0.9 ) ( 0.145,0.914 ) ( 0.165,0.924 ) ( 0.184,0.932 ) ( 0.204,0.939 ) ( 0.224,0.944 ) ( 0.244,0.949 ) ( 0.263,0.952 ) ( 0.283,0.956 ) ( 0.303,0.959 ) ( 0.322,0.961 ) ( 0.342,0.963 ) ( 0.362,0.965 ) ( 0.382,0.967 ) ( 0.401,0.969 ) ( 0.421,0.97 ) ( 0.441,0.972 ) ( 0.461,0.973 ) ( 0.48,0.974 ) ( 0.5,0.975 )
(0.5,1)};

\draw[pattern=dots] plot coordinates { 
(1,1) (1,0.5) 
( 0.975,0.5 ) ( 0.971,0.568 ) ( 0.967,0.62 ) ( 0.963,0.661 ) ( 0.959,0.693 ) ( 0.955,0.72 ) ( 0.951,0.743 ) ( 0.947,0.762 ) ( 0.943,0.779 ) ( 0.939,0.793 ) ( 0.936,0.806 ) ( 0.932,0.817 ) ( 0.928,0.827 ) ( 0.924,0.836 ) ( 0.92,0.844 ) ( 0.916,0.851 ) ( 0.912,0.858 ) ( 0.908,0.864 ) ( 0.904,0.87 ) ( 0.9,0.875 )
( 0.9,0.9 )
( 0.875,0.9 ) ( 0.855,0.914 ) ( 0.835,0.924 ) ( 0.816,0.932 ) ( 0.796,0.939 ) ( 0.776,0.944 ) ( 0.756,0.949 ) ( 0.737,0.952 ) ( 0.717,0.956 ) ( 0.697,0.959 ) ( 0.678,0.961 ) ( 0.658,0.963 ) ( 0.638,0.965 ) ( 0.618,0.967 ) ( 0.599,0.969 ) ( 0.579,0.97 ) ( 0.559,0.972 ) ( 0.539,0.973 ) ( 0.52,0.974 ) ( 0.5,0.975 ) 
(0.5, 1)
};

\draw[pattern=north east lines] (0,0) rectangle (\gam,\gam);
\draw[pattern=north east lines] (1,1) rectangle (1-\gam,1-\gam);
\draw[pattern=north east lines] (0,1) rectangle (\gam,1-\gam);
\draw[pattern=north east lines] (1-\gam,0) rectangle (1,\gam);

\draw (\gam/2, \gam/2) node[pin=45:{$H_1^-,H_2^-$}]{};
\draw (\gam/2, 1-\gam/2) node[pin=315:{$H_1^-,H_2^+$}]{};
\draw (1-\gam/2, 1-\gam/2) node[pin=225:{$H_1^+,H_2^+$}]{};
\draw (1-\gam/2, \gam/2) node[pin=135:{$H_1^+,H_2^-$}]{};
\draw (0, 0.5) node[pin=right:{$H_1^-$}]{};
\draw (1, 0.5) node[pin=left:{$H_1^+$}]{};
\draw (0.5, 0) node[pin=above:{$H_2^-$}]{};
\draw (0.5, 1) node[pin=below:{$H_2^+$}]{};

\draw (0,0) -- (1,0) node[midway, below]{$p_1$};
\draw (0,0) -- (0,1) node[midway, left]{$p_2$};
\draw (1,\gam) -- (1.01,\gam) node[right]{$\alpha/2$};

\draw (0,0) rectangle (1,1);

\end{scope}


\def\alpgr{0.06666667}
\def\gam{0.1333333}

\begin{scope}[yshift = -1.4 cm]

\draw (0.5,1) node[above]{Partitioning with Adaptive Tests};

\draw[pattern=dots] plot coordinates { 
(0,0) ( 0, 0.5 ) ( 0.037,0.5 ) ( 0.042,0.44 ) ( 0.047,0.392 ) ( 0.052,0.354 ) ( 0.057,0.323 ) ( 0.062,0.297 ) ( 0.067,0.274 ) ( 0.072,0.255 ) ( 0.078,0.239 ) ( 0.083,0.224 ) ( 0.088,0.211 ) ( 0.093,0.199 ) ( 0.098,0.189 ) ( 0.103,0.18 ) ( 0.108,0.171 ) ( 0.113,0.164 ) ( 0.118,0.157 ) ( 0.123,0.15 ) ( 0.128,0.144 ) ( 0.133,0.139 )
(\gam,\gam)  
( 0.139,0.133 ) ( 0.158,0.117 ) ( 0.177,0.105 ) ( 0.196,0.094 ) ( 0.215,0.086 ) ( 0.234,0.079 ) ( 0.253,0.073 ) ( 0.272,0.068 ) ( 0.291,0.064 ) ( 0.31,0.06 ) ( 0.329,0.056 ) ( 0.348,0.053 ) ( 0.367,0.05 ) ( 0.386,0.048 ) ( 0.405,0.046 ) ( 0.424,0.044 ) ( 0.443,0.042 ) ( 0.462,0.04 ) ( 0.481,0.038 ) ( 0.5,0.037 )
(0.5,0) (0,0) 
};

\draw[pattern=dots] plot coordinates { 
(1,0) (1,0.5) 
( 0.963,0.5 ) ( 0.951,0.379 ) ( 0.939,0.305 ) ( 0.927,0.255 ) ( 0.916,0.219 ) ( 0.904,0.192 ) ( 0.892,0.171 ) ( 0.88,0.154 ) ( 0.868,0.14 ) ( 0.856,0.129 ) ( 0.844,0.119 ) ( 0.832,0.11 ) ( 0.821,0.103 ) ( 0.809,0.097 ) ( 0.797,0.091 ) ( 0.785,0.086 ) ( 0.773,0.082 ) ( 0.761,0.077 ) ( 0.749,0.074 ) ( 0.737,0.07 ) ( 0.726,0.067 ) ( 0.714,0.065 ) ( 0.702,0.062 ) ( 0.69,0.06 ) ( 0.678,0.057 ) ( 0.666,0.055 ) ( 0.654,0.054 ) ( 0.642,0.052 ) ( 0.631,0.05 ) ( 0.619,0.049 ) ( 0.607,0.047 ) ( 0.595,0.046 ) ( 0.583,0.044 ) ( 0.571,0.043 ) ( 0.559,0.042 ) ( 0.547,0.041 ) ( 0.536,0.04 ) ( 0.524,0.039 ) ( 0.512,0.038 ) ( 0.5,0.037 )
(0.5,0)};

\draw[pattern=dots] plot coordinates { 
(0,1) (0,0.5) 
( 0.037,0.5 ) ( 0.049,0.621 ) ( 0.061,0.695 ) ( 0.073,0.745 ) ( 0.084,0.781 ) ( 0.096,0.808 ) ( 0.108,0.829 ) ( 0.12,0.846 ) ( 0.132,0.86 ) ( 0.144,0.871 ) ( 0.156,0.881 ) ( 0.168,0.89 ) ( 0.179,0.897 ) ( 0.191,0.903 ) ( 0.203,0.909 ) ( 0.215,0.914 ) ( 0.227,0.918 ) ( 0.239,0.923 ) ( 0.251,0.926 ) ( 0.263,0.93 ) ( 0.274,0.933 ) ( 0.286,0.935 ) ( 0.298,0.938 ) ( 0.31,0.94 ) ( 0.322,0.943 ) ( 0.334,0.945 ) ( 0.346,0.946 ) ( 0.358,0.948 ) ( 0.369,0.95 ) ( 0.381,0.951 ) ( 0.393,0.953 ) ( 0.405,0.954 ) ( 0.417,0.956 ) ( 0.429,0.957 ) ( 0.441,0.958 ) ( 0.453,0.959 ) ( 0.464,0.96 ) ( 0.476,0.961 ) ( 0.488,0.962 ) ( 0.5,0.963 )
(0.5,1)};

\draw[pattern=dots] plot coordinates { 
(1,1) (1,0.5) ( 0.963,0.5 ) ( 0.939,0.699 ) ( 0.914,0.784 ) ( 0.89,0.832 ) ( 0.866,0.862 ) ( 0.841,0.884 ) ( 0.817,0.899 ) ( 0.792,0.911 ) ( 0.768,0.92 ) ( 0.744,0.928 ) ( 0.719,0.934 ) ( 0.695,0.939 ) ( 0.671,0.944 ) ( 0.646,0.948 ) ( 0.622,0.951 ) ( 0.597,0.954 ) ( 0.573,0.957 ) ( 0.549,0.959 ) ( 0.524,0.961 ) ( 0.5,0.963 ) ( 0.5,0.975 ) 
(0.5, 1)
};

\draw[pattern=north east lines] (0,0) rectangle (\gam,\gam);
\draw[pattern=north east lines] (1,1) rectangle (1-\gam,1-\gam);
\draw[pattern=north east lines] (0,1) rectangle (\gam,1-\gam);
\draw[pattern=north east lines] (1-\gam,0) rectangle (1,\gam);

\draw (\gam/2, \gam/2) node[pin=45:{$H_1^-,H_2^-$}]{};
\draw (\gam/2, 1-\gam/2) node[pin=315:{$H_1^-,K_2$}]{};
\draw (1-\gam/2, 1-\gam/2) node[pin=225:{$K_1,K_2$}]{};
\draw (1-\gam/2, \gam/2) node[pin=135:{$K_1,H_2^-$}]{};
\draw (0, 0.5) node[pin=right:{$H_1^-$}]{};
\draw (1, 0.5) node[pin=left:{$K_1$}]{};
\draw (0.5, 0) node[pin=above:{$H_2^-$}]{};
\draw (0.5, 1) node[pin=below:{$K_2$}]{};

\draw (0,0) -- (1,0) node[midway, below]{$p_1$};
\draw (0,0) -- (0,1) node[midway, left]{$p_2$};
\draw (1,\gam) -- (1.01,\gam) node[right]{$\alpha/2$};

\draw (0,0) rectangle (1,1);

\end{scope}

\end{tikzpicture}
\caption{The top row illustrates regions leading to inference about $n^+$ with $(1-\alpha)$ confidence for the partitioning procedure with adaptive tests (top-left plot) and the directional closed testing procedure  (top-right plot) with Fisher's combining function. The gray scale indicates the type of inference. 
The bottom row illustrates regions leading to the rejection of base hypotheses for partitioning with adaptive local tests (bottom left plot) and directional closed testing (bottom right plot). 
The pattern indicates the number of hypotheses rejected at level $\alpha$: one (dots) or two (diagonal lines).
Each plot is based on $\alpha=0.2$. }
\label{Figure:fisherelementary}
\end{figure}

From Figure \ref{Figure:fishercorner}, it can be observed that when $\alpha=0.3$ and $(p_1,p_2) = (0.19,0.19)$, we are able to reject both $H_1^-\cap K_1$ and $K_1\cap H_2^-$ because $2p_i < 2\alpha/3$, where $i = 1,2$. However, we cannot reject $H_1^- \cap H_2^-$ since $f_{\mathrm{Fisher}}(2p_1,2p_2) = 0.423 > \alpha$. Therefore, we conclude that $n^+$ cannot be equal to 1, i.e., $n^+ \in \mathcal{N}^+_\alpha = \{0,2\}$. In this case, however, if we calculate the lower and upper bounds for $n^+$ as $\min(\mathcal{N}^+_\alpha)$ and $\max(\mathcal{N}^+_\alpha)$, respectively, we obtain the uninformative result of $0$ and $2$.

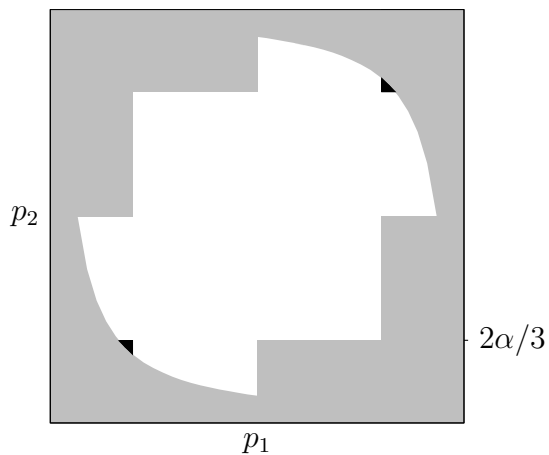
\begin{figure}
\centering
\def\alp{0.16334}
\def\alpgr{0.10}
\def\gam{0.2}

\begin{tikzpicture}[scale=5.5]
\begin{scope}[yshift = -1.4 cm, xshift=1.6 cm]

\draw (0.5,1) node[above]{Partitioning with Adaptive Tests};

\fill[fill=black] (0,0) rectangle (\gam,\gam);
\fill[fill=black] (1,1) rectangle (1-\gam,1-\gam);
\fill[fill=gray!50] plot coordinates { (0,0) (0,.5)  ( 0.066,0.5 ) ( 0.089,0.372 ) ( 0.112,0.296 ) ( 0.135,0.246 ) ( 0.158,0.21 ) ( 0.18,0.183 ) ( 0.203,0.163 ) ( 0.226,0.146 ) ( 0.249,0.133 ) ( 0.272,0.122 ) ( 0.295,0.112 ) ( 0.317,0.104 ) ( 0.34,0.097 ) ( 0.363,0.091 ) ( 0.386,0.086 ) ( 0.409,0.081 ) ( 0.432,0.077 ) ( 0.454,0.073 ) ( 0.477,0.069 ) ( 0.5,0.066 ) 
(0.5,0) (0,0) 
};
\fill[fill=gray!50] plot coordinates { (1,1) (1,0.5) ( 0.934,0.5 ) ( 0.911,0.628 ) ( 0.888,0.704 ) ( 0.865,0.754 ) ( 0.842,0.79 ) ( 0.82,0.817 ) ( 0.797,0.837 ) ( 0.774,0.854 ) ( 0.751,0.867 ) ( 0.728,0.878 ) ( 0.705,0.888 ) ( 0.683,0.896 ) ( 0.66,0.903 ) ( 0.637,0.909 ) ( 0.614,0.914 ) ( 0.591,0.919 ) ( 0.568,0.923 ) ( 0.546,0.927 ) ( 0.523,0.931 ) ( 0.5,0.934 ) (0.5,1)};
\fill[fill=gray!50] (0,1) rectangle (\gam,1-\gam);
\fill[fill=gray!50] (0,1) rectangle (\gam,1-\gam);
\fill[fill=gray!50] (1-\gam,0) rectangle (1,\gam);
\fill[fill=gray!50] (1-\gam,0) rectangle (1,\gam);
\fill[fill=black] plot coordinates { (\gam,\gam) (\gam,\alp) (\alp, \gam) };
\fill[fill=black] plot coordinates { (1-\gam,1-\gam) (1-\gam,1-\alp) (1-\alp, 1-\gam) };
\fill[fill=gray!50] (0,1-\gam) rectangle (\gam,0.5);
\fill[fill=gray!50] (0.5,0) rectangle (1-\gam,\gam);
\fill[fill=gray!50] (\gam,1) rectangle (0.5,1-\gam);
\fill[fill=gray!50] (1-\gam,\gam) rectangle (1,0.5);
\draw (1,\gam) -- (1.01,\gam) node[right]{$2\alpha/3$};
\draw (0,0) -- (1,0) node[midway, below]{$p_1$};
\draw (0,0) -- (0,1) node[midway, left]{$p_2$};
\draw (1, 0) -- (1, 1);
\draw (0,0) rectangle (1,1);
\end{scope}
\end{tikzpicture}
\caption{The partitioning procedure with adaptive tests and Fisher's combining function concludes $n^{+} \in \{0,2\}$ with $(1-\alpha)$ confidence when $(p_1,p_2)$ belongs to the regions in black. The plot is based on $\alpha=0.3$.}
\label{Figure:fishercorner}
\end{figure}

Finally, Figure \ref{Figure:GRn2} illustrates the rejection regions for $n=2$ of two FWER-controlling procedures for the family of hypotheses \eqref{eq-setup}. The first is the Holm-type procedure proposed by \cite{guo2015stepwise}, which requires conditions (A1)-(A2) in \S~\ref{sec-setup}. The second procedure is the $\mathrm{\check{S}id\acute{a}k}$-type procedure, also known as the ``Sp cross'' originally introduced by \cite{spjotvoll1972optimality} and modified by \cite{bohrer1980optimal}, which requires require only independence of the $n$ $p$-values. 
See \S~\ref{sec-unconditionalParitioning} for its derivation as a partitioning procedure.

\begin{figure}
\centering
\def\alp{0.1055728}

\begin{tikzpicture}[scale=5.5]

\begin{scope}[yshift = -1.4 cm, xshift=1.4 cm]

\draw (0.5,1) node[above]{ \cite{spjotvoll1972optimality}  };

\fill[fill=gray] (0,0) rectangle (1,\alp);
\fill[fill=gray] (0,0) rectangle (\alp,1);
\fill[fill=gray!30] (0,1) rectangle (1,1-\alp);
\fill[fill=gray!30] (1,1) rectangle (1-\alp,0);

\fill[fill=black] (0,0) rectangle (\alp,\alp);
\fill[fill=gray!10] (1,1) rectangle (1-\alp,1-\alp);
\fill[fill=gray!60] (0,1) rectangle (\alp,1-\alp);
\fill[fill=gray!60] (1-\alp,0) rectangle (1,\alp);

\draw (1,\alp) -- (1.01,\alp) node[right]{{$1-(1-\alpha)^{1/2}$}};

\draw (0,0) -- (1,0) node[midway, below]{$p_1$};
\draw (0,0) -- (0,1) node[midway, left]{$p_2$};

\draw (0,0) rectangle (1,1);

\end{scope}


\def\alp{0.1055728}
\def\alpgr{0.09090909}
\def\gam{0.1666667}

\begin{scope}[yshift = -1.4 cm]

\draw (0.5,1) node[above]{ \cite{guo2015stepwise}};

\fill[fill=gray] (0,0) rectangle (1,\alpgr);
\fill[fill=gray] (0,0) rectangle (\alpgr,1);
\fill[fill=gray!30] (0,1) rectangle (1,1-\alpgr);
\fill[fill=gray!30] (1,1) rectangle (1-\alpgr,0);

\fill[fill=black] (0,0) rectangle (\gam,\alpgr);
\fill[fill=black] (0,0) rectangle (\alpgr,\gam);

\fill[fill=gray!10] (1,1) rectangle (1-\alpgr,1-\gam);
\fill[fill=gray!10] (1,1) rectangle (1-\gam,1-\alpgr);
\fill[fill=gray!60] (0,1) rectangle (\alpgr,1-\gam);
\fill[fill=gray!60] (0,1) rectangle (\gam,1-\alpgr);
\fill[fill=gray!60] (1-\alpgr,0) rectangle (1,\gam);
\fill[fill=gray!60] (1-\gam,0) rectangle (1,\alpgr);

\draw (1,\gam) -- (1.01,\gam) node[right]{$\alpha/(1+\alpha)$};
\draw (1,\alpgr) -- (1.01,\alpgr) node[right]{$\alpha/(2+\alpha)$};

\draw (0,0) -- (1,0) node[midway, below]{$p_1$};
\draw (0,0) -- (0,1) node[midway, left]{$p_2$};

\draw (0,0) rectangle (1,1);

\end{scope}

\end{tikzpicture}
\caption{
 Holm-type procedure of \cite{guo2015stepwise} (left plot) and $\mathrm{\check{S}id\acute{a}k}$-type procedure of \cite{spjotvoll1972optimality}  (right plot). Each plot is based on $\alpha=0.2$.}
\label{Figure:GRn2}
\end{figure}

\section{The partitioning procedure based on unconditional $p$-values}\label{sec-unconditionalParitioning}

The partitioning procedure based on unconditional $p$-values $p_1,\ldots,p_n$ and $q_1,\ldots,q_n$ tests each partitioning hypothesis $J_{\mathcal{K}}$ by
\begin{eqnarray*}
\psi_\mathcal{K}
&=&\mathds{1}\{  f(\{p_i, i\in \mathcal K^c  \}\cup \{q_j, j\in \mathcal K  \}) \leq \alpha \}.
\end{eqnarray*}

If we assume that $p_1, \ldots,p_n$ are independent, we can use $\mathrm{\check{S}id\acute{a}k}$'s combining function $f_{\mathrm{\check{S}id\acute{a}k}}(x_1,\ldots,x_d) = 1- (1-\min(x_1,\ldots,x_d))^d$ and obtain the following FWER controlling procedure: if $p_i \leq 1-(1-\alpha)^{(1/n)}$, declare $\theta_i > 0$; if  $q_i \leq 1-(1-\alpha)^{(1/n)}$, declare $\theta_i \leq 0$. The right plot in Figure \ref{Figure:GRn2} illustrates the rejection region for $n=2$. 

Several combining function are valid for arbitrary dependence among $p$-values $p_1, \ldots,p_n$ \citep{vovk2022admissible}. The most well-known one is arguably the Bonferroni's combining function $f_{\mathrm{Bonferroni}}(x_1,\ldots,x_d) = d\min(x_1,\ldots,x_d) \wedge 1$. When $f_{\mathrm{Bonferroni}}$ is used in the partitioning procedure, it gives the FWER controlling procedure of \cite{bauer1986multiple}: if $p_i \leq \alpha/n$, declare $\theta_i > 0$; if  $q_i \leq \alpha/n$, declare $\theta_i \leq 0$.

Positively correlated $p$-values $p_1,\ldots,p_n$ arises in comparing $n$ groups with a common control group in a one-way layout. For simplicity, let's assume that there are $m$ observations per group.
The parameters of interest are $\theta_i = \mu_i - \mu_0$ with corresponding test statistics $\hat{\theta}_i = (\hat{\mu}_i - \hat{\mu}_0)/\sqrt{2\hat{\sigma}^2/m}$ following $t_{nm-1}$, a Student's $t$ distribution with $nm -1$ degrees of freedom. 
The $p$-values are $p_i = \mP(t_{nm-1} \geq \hat\theta_i)$, for $i=1,\ldots,n$.
It is worth noting that the $p$-values are correlated since $(\hat{\theta}_1,\ldots, \hat{\theta}_n)$ follows an $n$-variate $t$ distribution with $nm -1$ degrees of freedom and correlation matrix with off-diagonal elements equal to $1/2$. \cite{dunnett1955multiple} test for $J_\mathcal{K}$ is given by
\begin{eqnarray}\label{eq-dunnett_partitioning}
    \psi_\mathcal{K}&=&\mathds{1}\{  \max(\{\hat{\theta}_i, i\in \mathcal K^c  \}\cup \{-\hat{\theta}_j, j\in \mathcal K  \}) > c_{\alpha}(\mathcal K) \}
\end{eqnarray}
where $c_{\alpha}(\mathcal K)$ represents the $(1-\alpha)$-quantile of the maximum of an $n$-variate $t$ distribution with a correlation matrix that has off-diagonal elements whose sign depends on $\mathcal K$.

To illustrate this approach, we consider the \texttt{recovery} data from \cite{bretz2016multiple}. A company conducted a study to evaluate the effectiveness of specialized heating blankets in assisting post-surgical body heat. Four types of blankets, labeled b0, b1, b2, and b3, were tested on surgical patients to measure their impact on recovery times. The b0 blanket served as the standard option already utilized in different hospitals. The main focus was to assess the recovery time, measured in minutes, of patients randomly assigned to one of the four treatments. A shorter recovery time would indicate a more effective treatment. The key question addressed in this study is whether blanket types b1, b2, or b3 modify recovery time compared to b0. To perform a formal analysis of the data, we assume an one-way layout
$$y_{ij} = \gamma + \mu_i + \epsilon_{ij}$$
where $\gamma + \mu_i$ represents the expected recovery time for blanket b$i$, $i=0,1,2,3$,
with i.i.d. errors $\epsilon_{ij} \sim N(0,\sigma^2)$.

\cite{dunnett1955multiple}'s procedure is the standard procedure used to address the many-to-one comparisons problem. In the absence of specific information about the direction of the effect, we typically formulate two-sided null hypotheses of the form $H_i: \theta_i = 0$, where $\theta_i = \mu_i - \mu_0$ and $i=1,2,3$.

Table \ref{tab:dunnett_twosided} presents the multiplicity-adjusted $p$-values for both single-step and step-down Dunnett's procedures applied to the point null hypotheses $H_i: \theta_i = 0$. Although the step-down Dunnett procedure is more powerful, the single-step Dunnett's procedure provides simultaneous confidence intervals. These intervals enable directional conclusions to be drawn without the risk of Type III errors \citep{finner2021partitioning}.  At level $\alpha = 0.1$, we conclude that blanket b2 leads to significantly lower recovery times as compared to the standard blanket b0. 
The results of one-sided Dunnett's procedures are also provided for comparison, assuming a priori specific information about the direction of the effect. 

\begin{table}
\caption{\label{tab:dunnett_twosided}Dunnett's procedures for the \texttt{recovery} data: single-step and step-down adjusted $p$-values for two-sided and one-sided hypotheses.}
\centering
\begin{tabular}{c cc cc}
\\
Hypothesis &  Test statistic & $p$-value & \multicolumn{2}{c}{Dunnett's adjusted $p$-value}  \\
 &   &  & single-step & step-down \\
\hline
\\
$H_1: \theta_1 = 0$ & $-1.330$ & 0.191 & 0.456  &  0.192\\
$H_2: \theta_2 = 0$ & $-4.656$ &  $<0.001$ & $<0.001$ & $<0.001$\\
$H_3: \theta_3 = 0$ & $-1.884$ & 0.067  & 0.884 & 0.127\\
\\
\hline
\\
$H^+_1: \theta_1 \geq 0$ & $-1.330$ & 0.096 & 0.241  &  0.096\\
$H^+_2: \theta_2 \geq 0 $ & $-4.656$ &  $<0.001$ & $<0.001$ & $<0.001$\\
$H^+_3: \theta_3 \geq 0$ & $-1.884$ & 0.034  & 0.092 & 0.064\\
\\
\hline
\\
$H^-_1: \theta_1 \leq 0$ & $-1.330$ & 0.904 & 0.996  &  0.996\\
$H^-_2: \theta_2 \leq 0 $ & $-4.656$ &  1 & 1 & 1\\
$H^-_3: \theta_3 \leq 0$ & $-1.884$ & 0.966  & 1 & 0.997\\
\end{tabular}
\end{table}

Table \ref{tab:dunnett_partitioning} shows the $p$-value for the orthant hypotheses using Dunnett's tests (\ref{eq-dunnett_partitioning}). The computation of $p$-values is performed with the \texttt{glht} function of the \texttt{multcomp} R package. Table \ref{tab:dunnett_partitioning_adjp} 
gives multiplicity adjusted $p$-values for each pair $H_i^-: \theta_i \leq 0$, $K_i: \theta_i >0$. At level $\alpha = 0.1$, we conclude that blanket b2 and b3 lead to recovery times that are at most as large as  the standard blanket b0.

\begin{table}
\caption{\label{tab:dunnett_partitioning}Partitiong procedure using Dunnett's tests:
$p$-values for orthant hypotheses for the \texttt{recovery} data.}
\centering
\begin{tabular}{cccc}
\\
Orthant hypothesis & Test statistic & Critical value $\alpha=0.1$ & $p$-value \\
\hline
\\
$K_1 \cap K_2 \cap K_3$ & $4.656$ & 1.843 & $<0.001$ \\ 
$K_1 \cap K_2 \cap H_3^-$ & $4.656$ & $1.873$ & $<0.001$ \\
$H_1^- \cap K_2 \cap K_3$ & $4.656$ & $1.867$ & $<0.001$  \\
$K_1 \cap H_2^- \cap K_3$ & $1.884$ & $1.867$ & $0.097$ \\
$K_1 \cap H_2^- \cap H_3^-$ & $1.330$ &$1.867$ & $0.259$ \\
$H_1^- \cap K_2 \cap H_3^-$  & $4.656$ & $1.866$ & $<0.001$ \\ 
$H_1^- \cap H_2^- \cap K_3$ & $1.884$ & $1.874$ & $0.098$ \\
$H_1^- \cap H_2^- \cap H_3^-$ & $-1.330$ & $1.843$ & $0.996$ \\ 
\end{tabular}
\end{table}

\begin{table}
\caption{\label{tab:dunnett_partitioning_adjp}Partitiong procedure using Dunnett's tests:
adjusted $p$-values for base hypotheses for the \texttt{recovery} data.}
\centering
\begin{tabular}{cc | cc}
\\
Hypothesis &  Adjusted $p$-value & Hypothesis &  Adjusted $p$-value  \\
\hline
&&&\\
$H^-_1: \theta_1 \leq 0$ & 0.996 & $K_1: \theta_1 > 0$ &   0.259\\
$H^-_2: \theta_2 \leq 0$ & 0.996 & $K_2: \theta_2 > 0$ &  $<0.001$\\
$H^-_3: \theta_3 \leq 0$ & 0.996 &
$K_3: \theta_3 > 0$ &  0.098\\
\end{tabular}
\end{table}

\section{Exact shortcut for the partitioning procedure with adaptive tests}\label{sec-sm-shortcut}

We first discuss the computation of the lower bound $\pl^+_\alpha$ for $n^+$ and $\pl^-_{\alpha}$ for $n^- + n^0$.
Suppose we have observed $\mathcal{S}^-$ of size $|\mathcal{S}^-|=s$, and we want to check whether 
the test in (\ref{eq-psi_K})
rejects all the hypotheses $J_{\mathcal{K}}$ with $|\mathcal{K}|=k$ at level $\alpha$:
\begin{eqnarray*}
 f_k
=
\max_{\mathcal{K}: |\mathcal{K}|=k} f(
\{2p_i, i \in \mathcal{S}^- \cap \mathcal{K}^c  \}
\cup 
\{2q_i, i \in \mathcal{S}^+ \cap \mathcal{K} \}) &\leq& \alpha.
\end{eqnarray*}
Vandermonde's convolution
\begin{eqnarray*}\label{Vandermonde}
{n \choose k} &=& 
\sum_{v=0}^{k} {s \choose k-v} {n-s \choose v}
\end{eqnarray*}
states that for each $v=0,\ldots, k$ there are ${s \choose k-v} {n-s \choose v}$ sets $\mathcal{K}$ of size $k$ such that $|\mathcal{S}^+ \cap \mathcal{K}|=v$ and $|\mathcal{S}^- \cap \mathcal{K}|=k-v$
(or equivalently, $|\mathcal{S}^- \cap \mathcal{K}^c|=s-k+v$). 
Then, the maximization problem becomes
$$ f_k
= \max_{v \in \{\max(0,k-s),\ldots,\min(k,n-s)\}} \mathop{ \max_{ \mathcal{K}: 
|\mathcal{S}^+ \cap \mathcal{K}|=v, }}_{ |\mathcal{S}^- \cap \mathcal{K}|=k-v } f(
\{2p_i, i \in \mathcal{S}^- \cap \mathcal{K}^c  \}
\cup 
\{2q_i, i \in \mathcal{S}^+ \cap \mathcal{K} \}) 
$$
For any increasing function $f$,  the maximum has solution with the $v$ largest $p$-values $q_i$ with $i \in \mathcal{S}^+$ and the $s-k+v$ largest $p$-values $p_i$ with $i \in \mathcal{S}^-$, i.e. 
\begin{eqnarray}\label{eq-algo1_max}
 f_k
= \max_{v \in \{\max(0,k-s),\ldots,\min(k,n-s)\}} f( \{2p_{(k-v+1)},\ldots,2p_{(s)} \} \cup \{2q_{(n-s-v+1)},\ldots,2q_{(n-s)} \}).    
\end{eqnarray}
In order to compute $f_k$ in (\ref{eq-algo1_max}) for $k=0,\ldots,n$, we can use a nested loop, where the number of iterations of the inner loop (i.e. the index $v$ in (\ref{eq-algo1_max}) from $\max(0,k-s)$ to $\min(k,n-s)$) depends on the value of the outer loop's index (i.e. $k$ from 0 to $n$) and the size $s$ of $\mathcal{S}^-$. The total complexity for the two loops is $O(n^2)$. The maximum number of iterations happens when $s \in \{n/2-1,n/2,n/2+1\}$ if $n$ is even, and $s \in \{(n-1)/2, (n+1)/2\}$ if $n$ is odd; the mininum number of iterations happens when  $s \in \{0,n\}$.

The procedure we have just discussed is just a special case of the following \ref{algo-1} for the derivation of the lower bounds $\pl^+_\alpha(\mathcal{I})$ and $\pl^-_{\alpha}(\mathcal{I})$ for a generic subset $\mathcal{I}$. 
\newpage
\begin{algorithm}
\SetKwInOut{KwInit}{Initialize}
\SetKwInOut{Input}{Input}
\SetKwInOut{Output}{Output}
\Input{
right-tailed $p$-values $p_1,\ldots,p_n$; combining function $f(\cdot)$; subset $\mathcal{I} \subset [n]$; level $\alpha$}
\Output{Confidence bounds $l^+_\alpha(\mathcal{I})$ and $u^+_\alpha(\mathcal{I})$, $p$-values $f_{v}(\mathcal{I})$ for $J_v(\mathcal{I}): n^+(\mathcal{I})=v$, for $v=0,\ldots,|\mathcal{I}|$}
\label{algo-1}
\KwInit{

$\mathcal{S}^-=\{i \in [n]: p_i \leq 1/2\}$, $|\mathcal{S}^-|=s$, $f_{s}(\mathcal{I})=0$ \; 

$\mathbf{a}=(a_{1},\ldots,a_{|\mathcal{S}^- \cap \mathcal{I}|})$ with $a_{1}\geq \ldots \geq a_{|\mathcal{S} \cap \mathcal{I}|}$ sorted values of $\{2p_i: i \in \mathcal{S}^- \cap \mathcal{I}\}$ \;

$\mathbf{b}=(b_{1},\ldots,b_{|\mathcal{S}^- \cap \mathcal{I}^c|})$ with $b_{1}\geq \ldots \geq b_{|\mathcal{S}^- \cap \mathcal{I}^c|}$ sorted values of $\{2p_i: i \in \mathcal{S}^- \cap \mathcal{I}^c\}$ \;

$\mathbf{c}=(c_{1},\ldots,c_{|\mathcal{S}^+ \cap \mathcal{I}|})$ with $c_{1}\geq \ldots \geq c_{|\mathcal{S}^+ \cap \mathcal{I}|}$ sorted values of $\{2q_i: i \in \mathcal{S}^+ \cap \mathcal{I}\}$ \;

$\mathbf{d}=(d_{1},\ldots,d_{|\mathcal{S}^+ \cap \mathcal{I}^c|})$ with $d_{1}\geq \ldots \geq d_{|\mathcal{S}^+ \cap \mathcal{I}^c|}$ sorted values of $\{2q_i: i \in \mathcal{S}^+ \cap \mathcal{I}^c\}$ \;
}
\For{$v\gets0$ \KwTo $|\mathcal{I}|$}
{
    \For{$u\gets0$ \KwTo $|\mathcal{I}^c|$}
    {

\For{$k\gets \max(0,v-|\mathcal{S}^-\cap \mathcal{I}|)$ \KwTo $\min(v,|\mathcal{S}^+ \cap \mathcal{I}|)$}
{
\For{$j\gets \max(0,u-|\mathcal{S}^-\cap \mathcal{I}^c|)$ \KwTo $\min(u,|\mathcal{S}^+ \cap \mathcal{I}^c|)$}
{

    $f_{k,j} = f(\{a_1,\ldots,a_{k-v+|\mathcal{S}^-\cap \mathcal{I}|}\} \cup \{b_1,\ldots,b_{j-u+|\mathcal{S}^-\cap \mathcal{I}^c|}\} \cup \{c_1,\ldots,c_k\} \cup \{d_1,\ldots,d_j\})$ \;

}
}

 $f_{v,u}(\mathcal{I}) \gets \max\{f_{k,j}\}$  \;
    
    }
    $f_{v}(\mathcal{I}) = \max\{f_{v,u}(\mathcal{I}), u = 0,\ldots,|\mathcal{I}^c| \}$ \;
}
$\pl^+_\alpha(\mathcal{I}) \gets \min(v \in \{0,\ldots,s\}: f_{v}(\mathcal{I}) > \alpha)$ \;

$\pl^-_\alpha(\mathcal{I}) \gets |\mathcal{I}| - \max(v \in \{s,\ldots,|\mathcal{I}|\}: f_{v}(\mathcal{I}) > \alpha)$ \;

\Return $\pl^+_\alpha(\mathcal{I})$, $\pl^-_\alpha(\mathcal{I})$, $f_{0}(\mathcal{I}), \ldots, f_{|\mathcal{I}|}(\mathcal{I})$ \;
\caption{Shortcut for computing the lower bounds $\pl^+_\alpha(\mathcal{I})$ and $\pl^-_\alpha(\mathcal{I})$ derived from  adaptive local tests with partitioning.}
\end{algorithm}

\begin{proposition}\label{prop-complexity}
For any $\mathcal{I} \subseteq [n]$, Algorithm \ref{algo-1} returns the lower bounds $\pl^+_\alpha(\mathcal{I})$ and $\pl^-_\alpha(\mathcal{I})$ derived from the partitioning procedure with adaptive local tests  (\ref{eq-psi_K}), with at most $O(|\mathcal{I}|^2 \cdot \max(1,|\mathcal{I}^c|^2) )$ computation. In particular, calculation of the adjusted $p$-values for $H^-_i$ and $K_i$ requires $O(n^2)$ time. 
\end{proposition}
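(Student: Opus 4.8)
The proof has two components: \emph{correctness}, that Algorithm \ref{algo-1} returns exactly the bounds $\pl^+_\alpha(\mathcal{I}),\pl^-_\alpha(\mathcal{I})$ defined by the partitioning procedure with the adaptive local tests \eqref{eq-psi_K}, and the \emph{complexity} count. The plan is first to reduce the test $\psi_v(\mathcal{I})$ of $J_v(\mathcal{I})$ to a single scalar. By \eqref{eq-psi_v_I} and \eqref{eq-psi_K}, the event $\psi_v(\mathcal{I})=0$ (i.e. $J_v(\mathcal{I})$ not rejected) holds if and only if $f_v(\mathcal{I})>\alpha$, where
$$f_v(\mathcal{I}) = \max_{\mathcal{K}:\, |\mathcal{K}\cap\mathcal{I}|=v} f\big(\{2p_i: i\in \mathcal{K}^c\cap\mathcal{S}^-\}\cup\{2q_j: j\in \mathcal{K}\cap\mathcal{S}^+\}\big).$$
By Proposition \ref{prop-partitioning}, once each $f_v(\mathcal{I})$ is available the bounds are read off as $\pl^+_\alpha(\mathcal{I})=\min\{v: f_v(\mathcal{I})>\alpha\}$ and $\pl^-_\alpha(\mathcal{I})=|\mathcal{I}|-\max\{v: f_v(\mathcal{I})>\alpha\}$, so the whole task is to evaluate the exponential maximum defining $f_v(\mathcal{I})$ in polynomial time.

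The core step is the reduction of this maximum to the nested loop of the algorithm, and this is where I expect the main difficulty. I would partition $[n]$ into the four blocks $\mathcal{S}^-\cap\mathcal{I}$, $\mathcal{S}^+\cap\mathcal{I}$, $\mathcal{S}^-\cap\mathcal{I}^c$, $\mathcal{S}^+\cap\mathcal{I}^c$ and record for an orthant $\mathcal{K}$ its \emph{signature} $(v,u,k,j)$, with $v=|\mathcal{K}\cap\mathcal{I}|$, $u=|\mathcal{K}\cap\mathcal{I}^c|$, $k=|\mathcal{K}\cap\mathcal{S}^+\cap\mathcal{I}|$, $j=|\mathcal{K}\cap\mathcal{S}^+\cap\mathcal{I}^c|$. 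The combined multiset then consists of $|\mathcal{S}^-\cap\mathcal{I}|-(v-k)$ of the values $2p_i$ from $\mathcal{S}^-\cap\mathcal{I}$, exactly $k$ of the $2q_i$ from $\mathcal{S}^+\cap\mathcal{I}$, and analogously $|\mathcal{S}^-\cap\mathcal{I}^c|-(u-j)$ and $j$ values from the two $\mathcal{I}^c$ blocks. Here $(A0)$ does the work: by symmetry $f$ depends only on this multiset, and by monotonicity the signature-constrained maximum is attained by taking, in each block, the largest admissible values — equivalently, assigning the \emph{smallest} $2p_i$'s to $\mathcal{K}$ (so the largest survive in the combined complement) and the \emph{largest} $2q_i$'s to $\mathcal{K}$. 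After sorting each block decreasingly into $\mathbf a,\mathbf b,\mathbf c,\mathbf d$, this extremal combination is precisely the prefix expression $f_{k,j}$ of Algorithm \ref{algo-1}; Vandermonde's convolution then certifies that every feasible $\mathcal{K}$ with $|\mathcal{K}\cap\mathcal{I}|=v$ is realized by some admissible $(u,k,j)$ with $k\in\{\max(0,v-|\mathcal{S}^-\cap\mathcal{I}|),\ldots,\min(v,|\mathcal{S}^+\cap\mathcal{I}|)\}$ and $j$ in the analogous range, so that maximizing $f_{k,j}$ over these ranges and over $u$ returns exactly $f_v(\mathcal{I})$. The delicate point to verify is that the two opposite selection directions — complement for the $2p$ block, direct for the $2q$ block — are simultaneously optimal under sorted prefixes, and that the stated index ranges are exactly the feasibility constraints $0\le v-k\le|\mathcal{S}^-\cap\mathcal{I}|$, $0\le k\le|\mathcal{S}^+\cap\mathcal{I}|$.

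It then remains to justify the range restrictions in the final two lines. I would prove the pivot lemma that $v=|\mathcal{S}^-\cap\mathcal{I}|$ always lies in the confidence set: choosing $\mathcal{K}$ with $\mathcal{K}\cap\mathcal{I}=\mathcal{S}^-\cap\mathcal{I}$ empties the $\mathcal{I}$-part of the combination, and extending by $\mathcal{K}\cap\mathcal{I}^c=\mathcal{S}^-\cap\mathcal{I}^c$ leaves an empty combined set whose combining-function value is $1>\alpha$; hence $J_{|\mathcal{S}^-\cap\mathcal{I}|}(\mathcal{I})$ is never rejected. Consequently $\min\{v:f_v(\mathcal{I})>\alpha\}\le|\mathcal{S}^-\cap\mathcal{I}|\le\max\{v:f_v(\mathcal{I})>\alpha\}$, which is exactly what permits confining the search for $\pl^+_\alpha$ to $v\le|\mathcal{S}^-\cap\mathcal{I}|$ and that for $\pl^-_\alpha$ to $v\ge|\mathcal{S}^-\cap\mathcal{I}|$; the initialization $f_{|\mathcal{S}^-\cap\mathcal{I}|}(\mathcal{I})=0$ is the bookkeeping device that fixes this pivot as the default endpoint and avoids evaluating $f$ on the degenerate empty set. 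Together with the previous paragraph this yields correctness.

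Finally, for the complexity I would count iterations. The outer loops over $v$ and $u$ contribute $|\mathcal{I}|+1$ and $|\mathcal{I}^c|+1$ passes; since the inner range lengths depend only on $v$ (for $k$) and only on $u$ (for $j$), the total number of prefix evaluations factorizes as $\big(\sum_v|\mathrm{range}_k(v)|\big)\big(\sum_u|\mathrm{range}_j(u)|\big)=O(|\mathcal{I}|^2)\cdot O(|\mathcal{I}^c|^2)$, with the $\max(1,\cdot)$ absorbing the case $\mathcal{I}=[n]$ in which the $u$ and $j$ loops collapse to a single pass. Each $f_{k,j}$ is computed in constant time by maintaining running prefix summaries of the four sorted blocks, in the spirit of the shortcuts recalled in \S~\ref{sec-background}, and the initial sorting costs $O(n\log n)$, which is dominated. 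Specializing to $|\mathcal{I}|=1$, where $|\mathcal{I}^c|=n-1$, gives the $O(n^2)$ cost of the adjusted $p$-value for each individual $H_i^-$ and $K_i$, completing the proof.
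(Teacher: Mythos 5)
Your proof follows essentially the same route as the paper's: the same four-block decomposition of $\mathcal{K}$ by $\mathcal{I}$ and $\mathcal{S}^{-}$, the same Vandermonde enumeration of signature classes indexed by $(v,u,k,j)$, the same appeal to monotonicity and symmetry in $(A0)$ to reduce each class to a sorted-prefix evaluation, and the same factorized iteration count yielding $O(|\mathcal{I}|^2\max(1,|\mathcal{I}^c|^2))$. Your ``pivot lemma'' justifying the range restrictions in the algorithm's final two lines (that $v=|\mathcal{S}^-\cap\mathcal{I}|$ always lies in the confidence set because the corresponding combined multiset is empty) is a detail the paper's proof leaves implicit, and is a worthwhile addition rather than a deviation.
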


\begin{proof}
Suppose we have observed $\mathcal{S}^-$ of size $|\mathcal{S}^-|=s$, and for any $\mathcal I \subseteq [n]$ and any $v \in \{0,\ldots,|\mathcal{I}|\}$ we want to check whether 
the test in (\ref{eq-psi_K}) rejects all the hypotheses $J_\mathcal K$ with $|\mathcal K \cap \mathcal I|=v$
 at level $\alpha$ (or, equivalently, if $J_v(\mathcal{I}): n^+(\mathcal{I})=v$ is rejected at level $\alpha$): 
$$f_{v}(\mathcal{I}) = \max_{\mathcal{K}: |\mathcal{K} \cap \mathcal{I}|=v} f(
\{2p_i, i \in \mathcal{S}^- \cap \mathcal{K}^c  \}
\cup 
\{2q_i, i \in \mathcal{S}^+ \cap \mathcal{K} \}) \leq \alpha.$$
Algorithm \ref{algo-1} computes
$$ f_{v,u}(\mathcal{I}) = \mathop{ \max_{ \mathcal{K}: |\mathcal{K} \cap \mathcal{I}|=v, }}_{ |\mathcal{K} \cap \mathcal{I}^c|=u } f(
\{2p_i, i \in \mathcal{S}^- \cap \mathcal{K}^c  \}
\cup 
\{2q_i, i \in \mathcal{S}^+ \cap \mathcal{K} \})$$
so that $f_{v}(\mathcal{I}) = \max\{f_{v,u}(\mathcal{I}), u=0,\ldots,|\mathcal{I}^c| \}$. 
The function $f(\cdot)$ in (\ref{eq-combining_f}) combines   $2p_i$ with $i\in \mathcal{S}^- \cap \mathcal{K}^c$ and $2q_i$ 
with $i \in \mathcal{S}^+ \cap \mathcal{K}$. Writing $\mathcal{K}  = (\mathcal{K} \cap \mathcal{I}) \cup (\mathcal{K} \cap \mathcal{I}^c)$ and $\mathcal{K}^c  = (\mathcal{K}^c \cap \mathcal{I}) \cup (\mathcal{K}^c \cap \mathcal{I}^c)$ gives
\begin{eqnarray*}
f_\mathcal{K}(\mathcal{I})&=&f(
\{2p_i, i \in \mathcal{S}^- \cap \mathcal{K}^c  \}
\cup 
\{2q_i, i \in \mathcal{S}^+ \cap \mathcal{K} \})\\
&=& f(
\{2p_i, i \in \mathcal{S}^- \cap \mathcal{K}^c \cap \mathcal{I} \}
\cup 
\{2p_i, i \in \mathcal{S}^- \cap \mathcal{K}^c \cap \mathcal{I}^c \}
 \\
&&\quad \, \cup
\{2q_i, i \in \mathcal{S}^+ \cap \mathcal{K} \cap \mathcal{I} \} 
\cup 
\{2q_i, i \in \mathcal{S}^+ \cap  \mathcal{K} \cap \mathcal{I}^c \} ).    
\end{eqnarray*}
Consider Vandermonde's convolutions:
$$
{|\mathcal{I}| \choose v} = 
\sum_{k=0}^{v} { |\mathcal{S}^- \cap \mathcal{I}| \choose v-k} { |\mathcal{S}^+ \cap \mathcal{I}| \choose k}, \quad 
{|\mathcal{I}^c| \choose u} = 
\sum_{j=0}^{u} { |\mathcal{S}^- \cap \mathcal{I}^c| \choose u-j} { |\mathcal{S}^+ \cap \mathcal{I}^c| \choose j}.
$$
The first convolution states that for each $k\in\{0,\ldots,v\}$, there are ${ |\mathcal{S}^- \cap \mathcal{I}| \choose v-k} { |\mathcal{S}^+ \cap \mathcal{I}| \choose k}$ sets $\mathcal{K}$ such that $|\mathcal{K} \cap \mathcal{I}|=v$ with $|\mathcal{S}^+ \cap \mathcal{K} \cap \mathcal{I}|=k$ and $|\mathcal{S}^- \cap \mathcal{K} \cap \mathcal{I}|=v-k$. Likewise, the second convolution states that for each $j \in \{0,\ldots,u\}$,  there are ${ |\mathcal{S}^- \cap \mathcal{I^c}| \choose u-j} { |\mathcal{S}^+ \cap \mathcal{I}^c| \choose j}$ sets $\mathcal{K}$ such that $|\mathcal{K} \cap \mathcal{I}^c|=u$ with $|\mathcal{S}^+ \cap \mathcal{K} \cap \mathcal{I}^c|=j$ and $|\mathcal{S}^- \cap \mathcal{K} \cap \mathcal{I}^c|=u-j$. Then, the maximization problem becomes
\begin{eqnarray*}
   && f_{v,u}(\mathcal{I}) = \mathop{ \max_{ k \in \{k_1,\ldots,k_2\}, }}_{ j \in \{j_1,\ldots,j_2\} } 
\mathop{ \max_{ \mathcal{K} : |\mathcal{S}^+ \cap \mathcal{K} \cap \mathcal{I}|=k, |\mathcal{S}^- \cap  \mathcal{K} \cap \mathcal{I}| = v-k, } }_{ |\mathcal{S}^+ \cap  \mathcal{K} \cap \mathcal{I}^c|=j, |\mathcal{S}^- \cap \mathcal{K} \cap \mathcal{I}^c| = u-j } f_\mathcal{K}(\mathcal{I})
\end{eqnarray*}
where $k_1 = \max(0,v-|\mathcal{S}^-\cap \mathcal{I}|)$, $k_2=\min(v,|\mathcal{S}^+ \cap \mathcal{I}|)$, $j_1=\max(0,u-|\mathcal{S}^-\cap \mathcal{I}^c|)$ and $j_2=\min(u,|\mathcal{S}^+ \cap \mathcal{I}^c|)$.

For any increasing function $f(\cdot)$,  the maximum has solution with largest $k$ $p$-values $q_i$ with $i \in \mathcal{S}^+ \cap \mathcal{I}$, the largest $k-v+|\mathcal{S}^-\cap \mathcal{I}|$ $p$-values $p_i$ with $i \in \mathcal{S}^- \cap \mathcal{I}^c$, the largest $j$ $p$-values $q_i$ with $i \in \mathcal{S}^+ \cap \mathcal{I}$ and the largest $j-u+|\mathcal{S}^- \cap \mathcal{I}^c|$ $p$-values $p_i$ with $i \in \mathcal{S}^- \cap \mathcal{I}$: 
$$f_{v,u}(\mathcal{I}) = 
\mathop{ \max_{ k \in \{k_1,\ldots,k_2\} }}_{ j \in \{j_1,\ldots,j_2\} } 
f(\{a_1,\ldots,a_{k-v+|\mathcal{S}\cap \mathcal{I}|}\} \cup \{b_1,\ldots,b_{j-u+|\mathcal{S}\cap \mathcal{I}^c|}\} \cup \{c_1,\ldots,c_k\} \cup \{d_1,\ldots,d_j\})
$$
where 
$a_{1}\geq \ldots \geq a_{|\mathcal{S}^- \cap \mathcal{I}|}$, 
$b_{1}\geq \ldots \geq b_{|\mathcal{S}^- \cap \mathcal{I}^c|}$,
$c_{1}\geq \ldots \geq c_{|\mathcal{S}^+ \cap \mathcal{I}|}$ and 
$d_{1}\geq \ldots \geq d_{|\mathcal{S}^+ \cap \mathcal{I}^c|}$
denote the sorted values of $\{2p_i: i \in \mathcal{S}^- \cap \mathcal{I}\}$, $\{2p_i: i \in \mathcal{S}^- \cap \mathcal{I}^c\}$, 
$\{2q_i: i \in \mathcal{S}^+ \cap \mathcal{I}\}$ and 
$\{2q_i: i \in \mathcal{S}^+ \cap \mathcal{I}^c\}$, respectively.

Algorithm \ref{algo-1} evaluates $f_{v,u}(\mathcal{I})$ with a nested loop. The outer loop executes $\min(v,|\mathcal{S}^+ \cap \mathcal{I}|) - \max(0,v-|\mathcal{S}^- \cap \mathcal{I}|)$ times. Every time the outer loop executes, the inner loop executes $\min(v,|\mathcal{S}^+ \cap \mathcal{I}|) - \max(0,u-|\mathcal{S} \cap \mathcal{I}^c|)$ times. As a result, the complexity for evaluating $f_{v,u}(\mathcal{I})$ is $O(|\mathcal{I}| |\mathcal{I}^c|)$.

The complexity for computing  $f_{v}(\mathcal{I})$ for $v=0,\ldots, |\mathcal{I}|$ is $O(|\mathcal{I}|^2 |\mathcal{I}^c|^2)$ because it requires to compute $f_{v,u}(\mathcal{I})$ for $u=0,\ldots, |\mathcal{I}^c|$ and $v=0,\ldots, |\mathcal{I}|$. 
If $\mathcal{I} = \{i\}$, it takes $O(n^2)$ to compute the adjusted $p$-values $\bar{p}_i = f_{0}(\{i\})$ and $\bar{q}_i = f_{1}(\{i\})$ for $H^-_i$ and $K_i$, respectively.

\end{proof}

\section{Adaptive lower bounds for $n^+$ and $n^-+n^0$}\label{Appendix-overall bounds}
We  derive  lower bounds $l_{\alpha}^+$ and an  $l_{\alpha}^-$ for $n^+$ and $n^-+n^0$ such that 
\begin{eqnarray}\label{eq-coverage_plus-overall}
\mP_{\theta}\Big(
l_{\alpha}^{+}
\leq 
n^{+}, \ l_{\alpha}^{-}
\leq 
n^{-}+n^0 
\mid \mathcal S
 \Big) \geq 1- \alpha.
\end{eqnarray}
These bounds may be tighter than taking $\mathcal{I}=[n]$ with adaptive partitioning.  Interestingly, the  gap between these bounds and the bounds for  $\mathcal{I}=[n]$ in the adaptive partitioning procedure  is minimal, so adaptive partitioning is recommended if interest lies in positive and non-positive discoveries. This finding is based on a wide range of data generations examined (omitted for brevity). Typically, the analyst is  interested in positive and non-positive discoveries in addition to the lower bounds for $n^+$ and $n^-+n^0$. Since the cost of further inferences is minimal,  we suggest using adaptive partitioning in order to find the lower bounds for $n^+$ and $n^-+n^0$ and provide discoveries. Nevertheless,  procedure \ref{algo-bounds-overall} below should be used if interest is only in lower bounds for $n^+$ and $n^-+n^0$, since it is computationally much simpler and the bounds are at least as tight as with $\mathcal{I}=[n]$ in  procedure \ref{algo-1}, as formalized in Proposition \ref{prop-overallbounds-ap}.

We  briefly review tests of partial conjunction (PC) hypotheses, in order to set in context the novel Procedure \ref{algo-bounds-overall}. Let  $H^{r/n}: n^+ \leq r-1$, be
 the PC null hypothesis that at most $r-1$ hypotheses among $H^-_1,\ldots,H^-_n$ are false;  and $K^{r/n}:  n^-+n^0 \leq r-1$,  
the PC null hypothesis that at most $r-1$ hypotheses among $K_1,\ldots,K_n$ are false. For $r=1$,  $H^{r/n}$ is the global null hypothesis that none of the parameters are positive. For $r=2$, rejection of $H^{r/n}$ leads to establishing minimal replicability in the positive direction \citep{Benjamini09, Jaljuli2022}. 

Since $H^{r/n}$ is false if and only if every intersection hypothesis of size $n-r+1$ is false \citep{Benjamini09}, a valid $p$-value $p^{r/n}$ for $H^{r/n}$ is the largest intersection hypothesis $p$-value, over all intersections of $n-r+1$ null hypotheses:  $$p^{r/n} = \max_{\{\mathcal I: \mathcal I\subseteq[n], |\mathcal I| = n-r+1\}} p_{\mathcal I}, $$ where $p_{\mathcal I}$ is the $p$-value for the intersection hypothesis $\cap_{i\in \mathcal I} H^-_i$. 
Similarly,   a valid $p$-value $q^{r/n}$ for $K^{r/n}$ is $$q^{r/n} = \max_{\{\mathcal I: \mathcal I\subseteq[n], |\mathcal I| = n-r+1\}} q_{\mathcal I}, $$ where $q_{\mathcal I}$ is the $p$-value for the intersection hypothesis $\cap_{i\in \mathcal I} K_i$.

For  $f$ that satisfies the monotonicity  and symmetry condition (A0),
 $p^{r/n} = f(p_{(r)}, \ldots, p_{(n)})$ is a valid $p$-value, satisfying \begin{eqnarray*}
\sup_{\theta \in H^{r/n}} \mP_\theta(p^{r/n} \leq x)\leq x \quad \forall x \in [0,1].
\end{eqnarray*}
 The inequality is an equality, i.e., $p^{r/n}$ is uniformly distributed,  for the least favorable parameter configuration (LFC) $\theta_{LFC} \in H^{r/n}$, for  which the $p$-values corresponding to $r-1$ parameters  are zero (almost surely), and the $p$-values corresponding to the remaining $n-r+1$ parameters are  uniformly distributed. For example, in the normal means problem, the LFC configuration for testing $H^{r/n}$ is that $r-1$ parameters are infinite and $n-r+1$ parameters are zero; for testing $K^{r/n}$ the LFC is that $r-1$ parameters are minus infinity and $n-r+1$ parameters are zero.  

 For example, the PC $p$-values  using Fisher's combining method \citep{fisher1934statistical} are: 
\begin{eqnarray*}
p^{r/n} = \mP \Big( \chi^2_{2(n-r+1)} \geq -2 \sum_{k=r}^{n} \log(p_{(k)})\Big), \quad q^{r/n} = \mP \Big( \chi^2_{2(n-r+1)} \geq -2 \sum_{k=r}^{n} \log(q_{(k)})\Big)
\end{eqnarray*}
where $p_{(1)} \leq \ldots \leq p_{(n)}$ and $q_{(1)} \leq \ldots \leq q_{(n)}$ denote the sorted values of $p_1,\ldots,p_n$ and $q_1,\ldots,q_n$, respectively. Note that $1-p_{(k)} = q_{(n-k+1)}$ for continuous test statistics.
 
 By considering only hypotheses in $\mathcal S^-$ for $H^{r/n}$, we can avoid including $p$-values that are stochastically much larger than uniform when their null hypotheses are true. Therefore, as in \S~\ref{sec-setup}, we shall restrict ourselves to the directions guided by the data, so we shall use for testing $H^{r/n}$ 

\begin{equation}\label{eq-condPCp}
p^{r/n} = 
  \max_{\{\mathcal I: \mathcal I\subseteq[n], |\mathcal I| = n-r+1\}} f(\{2p_i: i\in\mathcal I\cap \mathcal S^-\}) =  
\begin{cases}
f\left(2p_{(r)}, \ldots, 2p_{(|\mathcal S^-|)}\right) & \text{if } r \leq |\mathcal S^-|,\\
1 & \text{otherwise}.
\end{cases}
\end{equation}
and for testing $K^{r/n}$ 

\begin{equation}\label{eq-condPCq}
q^{r/n} =  \max_{\{\mathcal I: \mathcal I\subseteq[n], |\mathcal I| = n-r+1\}} f(\{2q_i: i\in\mathcal I\cap \mathcal S^+\}) =  
\begin{cases}
f\left(2q_{(r)}, \ldots, 2q_{(|\mathcal S^+|)}\right) & \text{if } r \leq |\mathcal S^+|,\\
1 & \text{otherwise}.
\end{cases}
\end{equation}

\begin{algo}[Adaptive PC testing]\label{algo-bounds-overall} $ $
 \begin{enumerate}
 \item[Step 1] Apply Step 1 of procedure \ref{prop-conditionalCT}.
        \item[Step 2]  Test in order $\{H^{r/n}: r=1, \ldots, |\mathcal S^-|\}$, using $p^{r/n}$ in \eqref{eq-condPCp}, at level $\alpha$. Stop at the first non-rejection, $p^{r/n}>\alpha$.  Let $l^+_{\alpha}$ be the number of rejections, with $l^+_{\alpha} \in \{0,\ldots,|\mathcal S^-|\}$. If $l^+_{\alpha}=n$, return $l^+_{\alpha} = n, l^-_{\alpha}=0$, otherwise go to the next step.
\item[Step 3]  Test in order $\{K^{r/n}: r=1, \ldots, n-|\mathcal S^-|\}$, using $q^{r/n}$ in \eqref{eq-condPCq}, at level $\alpha$. Stop at the first non-rejection,  $q^{r/n}>\alpha$.  Let $l^-_{\alpha}$ be the number of rejections, with $l^-_{\alpha} \in \{0,\ldots,n-|\mathcal S^-|\}$.
 Return $l^+_{\alpha} $ and $l^-_\alpha$ (with $l^+_{\alpha} +l^-_{\alpha}\leq n$). 

 \end{enumerate}
 \end{algo}

Note that the bounds of the procedure will be the same if the testing in order is continued until $n$ in each step. This is so because
 $p^{(|\mathcal S^-|+1)/n}>\alpha$ and  $q^{(n-|\mathcal S^-|+1)/n}>\alpha$.

 The guaranteed coverage is formalized in the following proposition. 
\begin{proposition}\label{prop -overall conditional bounds}
 Let $\left\lbrace p^{r/n}: r\in[n] \right \rbrace $ be valid conditional PC $p$-values for $\{H^{r/n}: r\in[n]\}$, and  let $\left\lbrace q^{r/n}: r\in[n] \right \rbrace $ be valid conditional PC $p$-values for $\{K^{r/n}: r\in[n]\}$.  Then $l_{\alpha}^+$ and $l_{\alpha}^-$ satisfy \eqref{eq-coverage_plus-overall}. Furthermore, the unconditional coverage is 
\begin{eqnarray}\label{eq-coverage_plus-overall-unconditional}
\mP_{\theta}\Big(
l_{\alpha}^{+}
\leq 
n^{+}, \ l_{\alpha}^{-}
\leq 
n^{-}+n^0 
 \Big) \geq 1- (1-2^{-n})\alpha.
\end{eqnarray}
 
\end{proposition}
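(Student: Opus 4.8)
The plan is to recast the two possible coverage failures as false rejections of the two \emph{boundary} partial conjunction nulls, and then to show that, conditional on the sign vector $\mathcal S$, at most one of these two rejections can occur, so that a single level-$\alpha$ guarantee suffices. Fixing the true $\theta$, write $v=n^+$ (so $n^-+n^0=n-v$) and $s=|\mathcal S^-|$. The fixed-sequence stopping rule in Steps 2--3 gives $\{l_\alpha^+\ge m\}=\{p^{1/n}\le\alpha,\ldots,p^{m/n}\le\alpha\}$ and likewise for $l_\alpha^-$; taking $m=v+1$ yields the containments
$$\{l_\alpha^+ > v\}\subseteq\{p^{(v+1)/n}\le\alpha\},\qquad \{l_\alpha^- > n-v\}\subseteq\{q^{(n-v+1)/n}\le\alpha\}.$$
Since $H^{(v+1)/n}:n^+\le v$ and $K^{(n-v+1)/n}:n^-+n^0\le n-v$ both hold with equality, each PC $p$-value above is a valid $p$-value for a true null conditional on $\mathcal S$, so each event has conditional probability at most $\alpha$; this bounds either error on its own.

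The key step is to rule out both errors occurring together. Because Step 2 tests only $r=1,\ldots,s$ and Step 3 tests only $r=1,\ldots,n-s$, we always have $l_\alpha^+\le s$ and $l_\alpha^-\le n-s$. Hence $\{l_\alpha^+ > v\}$ forces $v<s$ while $\{l_\alpha^- > n-v\}$ forces $v>s$; these requirements are mutually exclusive, and when $v=s$ both bounds hold automatically so the conditional error is \emph{zero}. Conditional on $\mathcal S$ the union of the two error events therefore reduces to whichever single one is active and is bounded by $\alpha$, which is precisely the conditional guarantee \eqref{eq-coverage_plus-overall}.

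To upgrade to the unconditional statement I would average over $\mathcal S$ exactly as in the proof of Proposition \ref{prop-partitioning-properties}. The conditional error vanishes on $\{s=v\}=\{|\mathcal S^-|=n^+\}$, and this event contains the correct-selection configuration $\mathcal S^-=\{i:\theta_i>0\}$; by the monotonicity used in \eqref{eq-sm-unconditional-gain} together with independence $(A2)$ one has $\mathbb{P}_\theta(\mathcal S^-=\{i:\theta_i>0\})\ge 2^{-n}$, hence $\mathbb{P}_\theta(|\mathcal S^-|=n^+)\ge 2^{-n}$. Writing $\mathbb{P}_\theta(\mathrm{error})=\sum_{\mathcal S}\mathbb{P}_\theta(\mathrm{error}\mid\mathcal S)\,\mathbb{P}_\theta(\mathcal S)$ and using the conditional bound $\alpha$ off this event and $0$ on it gives $\mathbb{P}_\theta(\mathrm{error})\le\alpha\,(1-2^{-n})$, which is \eqref{eq-coverage_plus-overall-unconditional}.

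I expect the main obstacle to be the second step, namely carefully justifying that at most one error can be active: this hinges on the precise ranges of $r$ tested in Steps 2 and 3 and on the book-keeping of the boundary cases $v\in\{0,n\}$ and $v=s$. The containment from the fixed-sequence rule and the conditioning-and-averaging argument are then routine adaptations of the partial conjunction and selection machinery already established in the paper.
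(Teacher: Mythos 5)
Your proposal is correct and follows essentially the same route as the paper's proof: recasting the two coverage failures as rejections of the boundary PC nulls $H^{(v+1)/n}$ and $K^{(n-v+1)/n}$, observing that $l_\alpha^+\le|\mathcal S^-|$ and $l_\alpha^-\le n-|\mathcal S^-|$ force the two error events to be mutually exclusive given $\mathcal S$ (with zero error when $n^+=|\mathcal S^-|$), and then averaging over $\mathcal S$ using the $2^{-n}$ lower bound from \eqref{eq-sm-unconditional-gain}. No gaps.
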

\begin{proof}
Suppose that $\theta$,  the true parameter vector, has  $n^+=t$ positive coordinates.  We have $l_{\alpha}^+ \leq t \leq n-l_{\alpha}^-$  if and only if $p^{(t+1)/n}> \alpha$ and $q^{(n-t+1)/n}> \alpha$.
The result follows since conditional on $\mathcal S^-$, it is only possible to make an error in one direction. Specifically, if $t>|\mathcal S^-|$, then  $p^{(|\mathcal S^-|+1)/n}>\alpha$, since it is not possible to reject $H^{(|\mathcal S^-|+1)/n}$, so it is not possible to err with regard to the lower bound. 
Therefore, if  $t>|\mathcal S^-|$, then $$\mP_{\theta}(t\notin[l_{\alpha}^+,n-l_{\alpha}^-]\mid \mathcal S) = \mP_{\theta}(t>n-l_{\alpha}^-\mid \mathcal S)\leq \mP_{\theta}(q^{(n-t+1)/n}\leq \alpha \mid \mathcal S)\leq \alpha. $$
Similarly, if $t<|\mathcal S^-|$, then  $q^{(n-|\mathcal S^-|+1)/n}>\alpha$, since it is not possible to reject $K^{(n-|\mathcal S^-|+1)/n}$, so it is not possible to err with regard to the upper bound.  Therefore, if  $t<|\mathcal S^-|$, then $$\mP_{\theta}(t\notin[l_{\alpha}^+,n-l_{\alpha}^-]\mid \mathcal S) = \mP_{\theta}(t<l_{\alpha}^+\mid \mathcal S)\leq \mP_{\theta}(p^{(t+1)/n}\leq \alpha \mid \mathcal S)\leq \alpha. $$
If $t = |\mathcal S^-|$, then since the lower bound is at most $ |\mathcal S^-|$ and the upper bound is at least $n- |\mathcal S^+|$, it is not possible to make an error on either bound. 
Therefore, the unconditional error of non-covering $n^+$ is
\begin{eqnarray}
 && \mP_{\theta}(t\notin[l_{\alpha}^+,n-l_{\alpha}^-]) = \mE\left( \mP_{\theta}(t\notin[l_{\alpha}^+,n-l_{\alpha}^-]\mid \mathcal S) \right)\nonumber \\ 
 && =\mE_{\theta}\left \lbrace \mI(t>|\mathcal S^-|)\mP_{\theta}(t>n-l_{\alpha}^-\mid \mathcal S^-)+ 
 \mI(t< |\mathcal S^-|)\mP_{\theta}(t<l_{\alpha}^+\mid \mathcal S^-)
 \right \rbrace \nonumber \\
 && \leq \alpha\mE_{\theta}\left(\mI(t> |\mathcal S^-|)+\mI(t< |\mathcal S^-|) \right) = \alpha \mP_{\theta}(t\neq |\mathcal S^-|)  \leq \alpha(1-2^{-n}), \nonumber
\end{eqnarray}
 where the  last inequality follows from \eqref{eq-sm-unconditional-gain}.
\end{proof}

\begin{proposition}\label{prop-overallbounds-ap}
Algorithm \ref{algo-1} returns the lower bounds $\pl^+_\alpha$ and $\pl^-_\alpha$. 
These lower bounds are at most as good as those obtained by the Procedure \ref{algo-bounds-overall}.
\end{proposition}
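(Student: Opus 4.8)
The plan is to put both sets of bounds in a common ``$p$-value'' form and then show that, term by term, the partitioning $p$-value for the event $\{n^+=v\}$ dominates the adaptive partial-conjunction (PC) $p$-value. Since a larger $p$-value makes rejection harder, domination translates immediately into $\pl_\alpha^+ \le l_\alpha^+$ and $\pl_\alpha^- \le l_\alpha^-$, which is exactly the claim that the partitioning bounds with $\mathcal I=[n]$ are no better than those of Procedure \ref{algo-bounds-overall}.

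First I would recall the two descriptions. By Proposition \ref{prop-complexity} applied with $\mathcal I=[n]$ (so $\mathcal I^c=\emptyset$), Algorithm \ref{algo-1} returns $\pl_\alpha^+=\min\{v\in\{0,\dots,s\}:f_v>\alpha\}$ and $\pl_\alpha^-=n-\max\{v\in\{s,\dots,n\}:f_v>\alpha\}$, where $s=|\mathcal S^-|$ and, by the Vandermonde reduction,
\[
f_v=\max_{w}\; f\big(\{2p_{(v-w+1)},\dots,2p_{(s)}\}\cup\{2q_{(n-s-w+1)},\dots,2q_{(n-s)}\}\big),
\]
with $w=|\mathcal S^+\cap\mathcal K|$ ranging over $\{\max(0,v-s),\dots,\min(v,n-s)\}$. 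On the other side, Procedure \ref{algo-bounds-overall} gives $l_\alpha^+=\min\{v:p^{(v+1)/n}>\alpha\}$ and $l_\alpha^-=\min\{r:q^{r/n}>\alpha\}-1$, with $p^{r/n},q^{r/n}$ as in \eqref{eq-condPCp}--\eqref{eq-condPCq}. The observations that $p^{(s+1)/n}=q^{(n-s+1)/n}=1$ and that $f_s=1$ (the data-favoured orthant $\mathcal K=\mathcal S^-$ contributes the empty combination $f(\emptyset)=1$) let me confine all four quantities to $\{0,\dots,s\}$ or $\{s,\dots,n\}$, respectively.

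The crux is to identify each PC $p$-value as a single term inside the maximum defining $f_v$. For the positive side I would take $w=0$: the associated orthant has $\mathcal K\subseteq\mathcal S^-$ of size $v$ and contributes exactly $f(2p_{(v+1)},\dots,2p_{(s)})=p^{(v+1)/n}$, whence $f_v\ge p^{(v+1)/n}$ for every $v\le s$. For the non-positive side I would take $w=v-s$ (admissible precisely when $v\ge s$): this orthant has empty $p$-part, $\mathcal S^-\cap\mathcal K^c=\emptyset$, and contributes $f(2q_{(n-v+1)},\dots,2q_{(n-s)})=q^{(n-v+1)/n}$, so $f_v\ge q^{(n-v+1)/n}$ for every $v\ge s$. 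With these in hand the conclusion is a short stopping-rule argument: $p^{(v+1)/n}>\alpha$ implies $f_v>\alpha$, so $\{v:p^{(v+1)/n}>\alpha\}\subseteq\{v:f_v>\alpha\}$ and taking minima gives $\pl_\alpha^+\le l_\alpha^+$; writing $r=n-v+1$ and using $f_v\ge q^{(n-v+1)/n}=q^{r/n}$ forces $\max\{v:f_v>\alpha\}\ge n-(l_\alpha^-+1)+1$, i.e.\ $\pl_\alpha^-\le l_\alpha^-$.

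The main obstacle I anticipate is bookkeeping rather than anything conceptual: correctly matching the order statistics $p_{(\cdot)},q_{(\cdot)}$ to the selected sets via $p_{(1)}\le\dots\le p_{(s)}$ for $\mathcal S^-$ and $q_{(1)}\le\dots\le q_{(n-s)}$ for $\mathcal S^+$ (using $q_i=1-p_i$); verifying that the admissible range of $w$ truly contains $w=0$ in the regime $v\le s$ and $w=v-s$ in the regime $v\ge s$; and treating the empty-combination endpoints ($f(\emptyset)=1$, the endpoint $v=s$, and the ``otherwise'' branches of \eqref{eq-condPCp}--\eqref{eq-condPCq}) consistently so that the stopping rules are well defined. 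Once the indexing is pinned down, the domination $f_v\ge\max\{p^{(v+1)/n},q^{(n-v+1)/n}\}$ together with the monotone stopping rules completes the proof.
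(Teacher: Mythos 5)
Your proposal is correct and follows essentially the same route as the paper's proof: you identify the adaptive partial-conjunction $p$-value as the $w=0$ (respectively $w=v-s$) term inside the maximum defining the partitioning $p$-value $f_v$, so that $f_v\geq p^{(v+1)/n}$ for $v\leq s$ and $f_v\geq q^{(n-v+1)/n}$ for $v\geq s$, and the bound comparison follows from the stopping rules. The paper phrases this as the inner-loop index in its shortcut starting at $\max(0,k-s)$, which is exactly your term identification, so no substantive difference remains.
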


\begin{proof}
It is sufficient to note that
if $k<s$, then the index $v$ 
in (\ref{eq-algo1_max})
starts at $\max(0,k-s)=0$ by computing the PC conditional $p$-value $p^{k+1/n} = f( \{2p_{(k+1)},\ldots,2p_{(s)} \})$. 
Then $p^{k+1/n} > \alpha$ implies $$f_k=\max_{\mathcal{K}: |\mathcal{K}|=k} f(
\{2p_i, i \in \mathcal{S}^- \cap \mathcal{K}^c  \}
\cup 
\{2q_i, i \in \mathcal{S}^+ \cap \mathcal{K} \}) \geq p^{k+1/n} > \alpha$$ for any $k \in \{0,\ldots,s-1\}$, i.e. the lower bound $\ell^+_{\alpha}$ for $n^+$ from Algorithm \ref{algo-1} is smaller than or equal to the lower bound of Procedure \ref{algo-bounds-overall}.
Likewise, if $k>s$ then the index $v$ in (\ref{eq-algo1_max}) starts at $\max(0,k-s)=k-s$ by computing the PC conditional $p$-value $q^{n-k+1/n} = f( \{2q_{(n - k +1)},\ldots,2q_{(n-s)} \})$, thus $q^{n-k+1/n} > \alpha$ implies $$f_k=\max_{\mathcal{K}: |\mathcal{K}|=k} f(
\{2p_i, i \in \mathcal{S}^- \cap \mathcal{K}^c  \}
\cup 
\{2q_i, i \in \mathcal{S}^+ \cap \mathcal{K} \}) \geq q^{n-k+1/n} > \alpha$$ for any $k \in \{s+1,\ldots,n\}$, i.e. the lower bound $\pl^-_{\alpha}$ for $n^- + n_0$ from Algorithm \ref{algo-1} is less than or equal to the upper bound of Procedure \ref{algo-bounds-overall}.
\end{proof}

\begin{remark}
If $n^0>0$  then the probability that the lower bounds from Procedure \ref{algo-bounds-overall} do not cover at least one of $n^+, n^-$ may exceed $\alpha$. If $n^0 = n$ and all $p$ values are uniform then $\mP(p^{1/n}\leq \alpha \cup q^{1/n} \leq \alpha  )\approx 2\alpha$; as $n^0$ decreases the probability that the lower bounds do not cover at least one parameter decreases from $2\alpha$ to $\alpha$ (for $n_0=0$). 
\end{remark}

\begin{remark}
For a combination function $f()$, the test for qualitative interactions in \cite{Zhao2019} is rejected at level $\alpha$ if and only if $l_{\alpha}^+\geq 1$ and $l_{\alpha}^-\geq 1$ in the above procedure.  Therefore, if the assumption $n^0=0$ is reasonable, then the above procedure  complements nicely a conclusion that there is qualitative interaction, by providing with $(1-\alpha)$ confidence the  (interval) estimate of the parameter tested, $n^+$. More generally, a level $\alpha$ test of a generalized qualitative interaction null hypothesis that $n^+< a$ or $n^+>b$, for predefined $1\leq a<b\leq n-1 $, has the following rejection rule:  reject if $l_{\alpha}^+ \geq a$ and $n-l_{\alpha}^- \leq b$. To see that this is an $\alpha$ level test, consider the null value  $\theta$ such that $n^+(\theta) \notin[a,b]$. Without loss of generality, suppose $n^+(\theta)>b$. Then the probability of  falsely rejecting the generalized qualitative interaction true null hypothesis is $$\mP_{\theta} (l_{\alpha}^+\geq a \textrm{ and } n-l_{\alpha}^- \leq b) \leq  \mP_{\theta} ( n-l_{\alpha}^- \leq b)\leq \mP_{\theta} ( n-l_{\alpha}^- < n^+)\leq  \alpha.$$
\end{remark}

\subsection{A note on general confidence bounds for $n^+$}\label{subsec-nonadaptive PC testing}

If $p^{r/n}$ is a valid $p$-value for testing $H^{r/n}$ for $r=1,\ldots,n$, then 
\begin{eqnarray}\label{lower_PC}
l^+_{\alpha}(p)&=& \max\{l \in \{0,\ldots,n\}: p^{r/n} \leq \alpha \,\,\mathrm{for}\,\, r = 0,\ldots, l\}
\end{eqnarray}
satisfies $\mP_\theta \big( l_\alpha(p) \leq  n^+ \big) \geq 1-\alpha$ for all $\theta \in \Theta$, where $p^{0/n}\equiv 0$ since $H^{0/n}: n^+ < 0$ is always false. 
Analogously, 
if  $q^{r/n}$ is a valid  $p$-value for testing $K^{r/n}$ for $r=1,\ldots,n$, then
\begin{eqnarray}\label{upper_PC}
l_{\alpha}^-(p)&=& \max\{u \in \{0,\ldots,n\}: q^{r/n} \leq \alpha \,\,\mathrm{for}\,\, r = 0,\ldots,u\}
\end{eqnarray}
satisfies $\mP_\theta \big( n^+ \leq n-l^-_{\alpha}(p) \big) \geq 1-\alpha$ for all $\theta \in \Theta$, where $q^{0/n}\equiv 0$ since $K^{0/n}: n^+ > n$ is always false. For notational simplicity, we shall often write $l^+_{\alpha}$ and $l^-_{\alpha}$ instead of $l^+_{\alpha}(p)$ and $l^-_{\alpha}(p)$, but of course these bounds are functions of the $p$-value vector $p$. 

A straightforward application of the Bonferroni inequality shows that if level $\alpha/2$ is used for each bound, i.e.,  $l^+_{\alpha/2}$ in (\ref{lower_PC}) and $l^-_{\alpha/2}$ in (\ref{upper_PC}), then  
$\mP_\theta \big( l^+_{\alpha/2} \leq n^+ \leq n-l^-_{\alpha/2} \big) \geq 1-\alpha$. These bounds where used in \cite{Jaljuli2022} in order to complement meta-analyses in systematic reviews. 

The correction of using $\alpha/2$ in each direction (instead of $\alpha$, as in the adaptive PC testing procedure ) is, however, conservative. Intuitively, the correction should be less severe since in a given configuration, the probability of erring by exceeding one bound is much larger than the probability of erring by exceeding the other bound. To see this, note that $$\mP_{\theta}(n^+ \notin [l^+_{\alpha/2}, n-l^-_{\alpha/2}] = \mP_{\theta}(n^+<l^+_{\alpha/2})+\mP_{\theta}(n^+>n-l^-_{\alpha/2})$$ 
 has value $\alpha/2$ for the following least favorable parameter configurations (LFCs) when testing PC null hypotheses: 
the positive parameter value has $p_i=0$ (almost surely)  and the non-positive parameter value has a $p$-value with a uniform distribution; or the positive parameter value has a $p$-value which is (practically) uniformly distributed
 and the non-positive parameter value has $q_i=0$ (almost surely). To see this, note that 
 for the LFC with $p_i=0$  for $n^+$ parameters, $p^{(n^++1)/n}\sim U(0,1)$ and $q^{(n-n^++1)/n}=1$ almost surely.  For the LFC with $q_i = 0$ for $n-n^+$ parameters, $q^{(n-n^++1)/n}\sim U(0,1)$ and $p^{(n^++1)/n}=1$ almost surely. Non-coverage can occur if the lower bound is violated, so $p^{(n^++1)/n}\leq \alpha/2$, or if the upper bound is violated, so $q^{(n-n^++1)/n}\leq \alpha/2$.  Therefore  $$\mP_{LFC}(n^+\notin [l^+_{\alpha/2}, n-l^-_{\alpha/2}])  =  \mP(U\leq \alpha/2) = \alpha/2. $$

We conjecture that  the coverage guarantee is  typically $(1-\alpha)$ if the lower bounds are $l^+_{\alpha}$ and $l^-_{\alpha}$. In particular, whenever the test statistics are continuous, from one dimensional exponential families. 
Without loss of generality, suppose the first $t$ coordinates are positive, i.e., $\theta_1,\ldots,\theta_t>0$ and $\theta_{t+1},\ldots,\theta_n\leq 0$, and $n^+ =t$. Our conjecture is thus  that the solution to the following optimization problem is $\alpha$ for a large class of valid PC $p$-values:
\begin{eqnarray}\label{opt-problem}
\label{opt} \max_{\theta} && \mP_{\theta}(p^{(t+1)/n}\leq \alpha)+ \mP_{\theta} (q^{(n-t+1)/n}\leq \alpha) \nonumber\\
\mbox{s.t. } &&  \theta_i > 0, i=1,\ldots, t, \nonumber \\ 
 &&  \theta_j \leq 0, j=t+1,\ldots, n. \nonumber
\end{eqnarray}
To see that this is not true in general for unconditional combination tests (i.e., using local tests that do not condition on the vector of signs $\mathcal S$), consider the following stylized example. Let $n=2$ and $n^+=1$ be such that $\theta_1=0$ and $\theta_2$ is positive. Assume that the distribution of a $p$-value from  $H_i$, $x_i$,  has the following distribution: $\mP(x_i=\alpha) = \alpha, \mP(x_i=1) = 1-\alpha.  $ Similarly,  the  distribution of a $p$-value  from  $K_i$, $y_i=1-x_i$,  has distribution: $\mP(y_i=\alpha) = \alpha, \mP(y_i=1) = 1-\alpha.  $ These $p$-values are valid since $$\mP_{H_i}(p_i\leq a)\leq a, \ \mP_{K_i}(q_i\leq a)\leq a, \ \forall a\in [0,1].$$  The unconditional  PC $p$-value (i.e., it is derived from a local test that does not condition on the vector of signs $\mathcal S$) for $H^{2/2}$ and $K^{2/2}$  is, respectively, $p^{2/2}=\max(p_1,p_2)$ which has distribution $\max(1-x_1, x_2)$, and $q^{2/2}=\max(q_1,q_2)$ which has distribution $\max(x_1,1-x_2)$. Therefore:
\begin{eqnarray}
 && \mP_{\theta}(1 \notin [l^+_{\alpha}, n-l^-_{\alpha}] = \mP_{\theta}(1<l^+_{\alpha})+\mP_{\theta}(1<l^-_{\alpha})\nonumber \\
 && =  \mP_{\theta}(\max(p^{1/2}, p^{2/2})<\alpha)+\mP_{\theta}(\max(q^{1/2}, q^{2/2})<\alpha)\nonumber \\
 && = \mP((1-x_1,x_2) = (0,\alpha))+\mP((x_1,1-x_2) = (\alpha,0)) = 2\times \alpha \times (1-\alpha)>\alpha.\nonumber
\end{eqnarray}

\section{Applications: enhancing meta-analysis}

Figure \ref{Figure:schooltopk} shows simultaneous $95\%$ confidence intervals for $n^+(\mathcal{I}_k)$ with Fisher's combining function, where $\mathcal{I}_k$ are the indexes of the top $k=|\mathcal{I}_k|$ schools with largest (in absolute value) estimated effects $|y_i|$.
For example, among the top 38 schools, we observe at least 9 positive effects and at least 1 non-positive effect, or among the top 51 schools, we observe at least 10 positive effects and at least 2 non-positive effects, and so on.

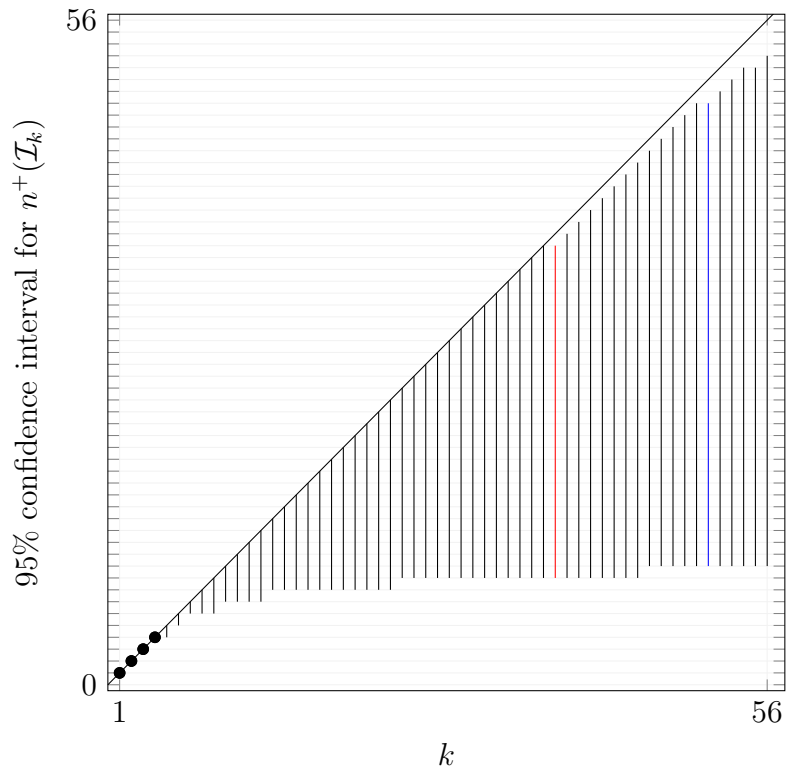
\begin{figure}
\centering
\begin{tikzpicture}[scale=1],
\begin{axis}[
 grid=major,
    grid style={line width=.25pt, draw=gray!10},
scale only axis=true,
height=9cm,
width=9cm,  
ylabel = {$95\%$ confidence interval for $n^+(\mathcal{I}_k)$},
xlabel = {$k$},
  xmin=0,
  xmax=57.5, 
    ymin=-0.5,
  ymax=56.5, 
ytick ={0,...,56},
yticklabels ={0,,,,,,,,,,,,,,,,,,,,,,,,,,,,,,,,,,,,,,,,,,,,,,,,,,,,,,  ,,56},
xtick ={1,56},
]
\addplot [color=black, solid, mark=*] coordinates { ( 1,1 ) ( 1,1 )}; 
 \addplot [color=black, solid, mark=*] coordinates { ( 2,2 ) ( 2,2 )}; 
 \addplot [color=black, solid, mark=*] coordinates { ( 3,3 ) ( 3,3 )}; 
 \addplot [color=black, solid, mark=*] coordinates { ( 4,4 ) ( 4,4 )}; 
 \addplot [color=black, solid] coordinates { ( 5,4 ) ( 5,5 )}; 
 \addplot [color=black, solid] coordinates { ( 6,5 ) ( 6,6 )}; 
 \addplot [color=black, solid] coordinates { ( 7,6 ) ( 7,7 )}; 
 \addplot [color=black, solid] coordinates { ( 8,6 ) ( 8,8 )}; 
 \addplot [color=black, solid] coordinates { ( 9,6 ) ( 9,9 )}; 
 \addplot [color=black, solid] coordinates { ( 10,7 ) ( 10,10 )}; 
 \addplot [color=black, solid] coordinates { ( 11,7 ) ( 11,11 )}; 
 \addplot [color=black, solid] coordinates { ( 12,7 ) ( 12,12 )}; 
 \addplot [color=black, solid] coordinates { ( 13,7 ) ( 13,13 )}; 
 \addplot [color=black, solid] coordinates { ( 14,8 ) ( 14,14 )}; 
 \addplot [color=black, solid] coordinates { ( 15,8 ) ( 15,15 )}; 
 \addplot [color=black, solid] coordinates { ( 16,8 ) ( 16,16 )}; 
 \addplot [color=black, solid] coordinates { ( 17,8 ) ( 17,17 )}; 
 \addplot [color=black, solid] coordinates { ( 18,8 ) ( 18,18 )}; 
 \addplot [color=black, solid] coordinates { ( 19,8 ) ( 19,19 )}; 
 \addplot [color=black, solid] coordinates { ( 20,8 ) ( 20,20 )}; 
 \addplot [color=black, solid] coordinates { ( 21,8 ) ( 21,21 )}; 
 \addplot [color=black, solid] coordinates { ( 22,8 ) ( 22,22 )}; 
 \addplot [color=black, solid] coordinates { ( 23,8 ) ( 23,23 )}; 
 \addplot [color=black, solid] coordinates { ( 24,8 ) ( 24,24 )}; 
 \addplot [color=black, solid] coordinates { ( 25,9 ) ( 25,25 )}; 
 \addplot [color=black, solid] coordinates { ( 26,9 ) ( 26,26 )}; 
 \addplot [color=black, solid] coordinates { ( 27,9 ) ( 27,27 )}; 
 \addplot [color=black, solid] coordinates { ( 28,9 ) ( 28,28 )}; 
 \addplot [color=black, solid] coordinates { ( 29,9 ) ( 29,29 )}; 
 \addplot [color=black, solid] coordinates { ( 30,9 ) ( 30,30 )}; 
 \addplot [color=black, solid] coordinates { ( 31,9 ) ( 31,31 )}; 
 \addplot [color=black, solid] coordinates { ( 32,9 ) ( 32,32 )}; 
 \addplot [color=black, solid] coordinates { ( 33,9 ) ( 33,33 )}; 
 \addplot [color=black, solid] coordinates { ( 34,9 ) ( 34,34 )}; 
 \addplot [color=black, solid] coordinates { ( 35,9 ) ( 35,35 )}; 
 \addplot [color=black, solid] coordinates { ( 36,9 ) ( 36,36 )}; 
 \addplot [color=black, solid] coordinates { ( 37,9 ) ( 37,37 )}; 
 \addplot [color=red, solid] coordinates { ( 38,9 ) ( 38,37 )}; 
 \addplot [color=black, solid] coordinates { ( 39,9 ) ( 39,38 )}; 
 \addplot [color=black, solid] coordinates { ( 40,9 ) ( 40,39 )}; 
 \addplot [color=black, solid] coordinates { ( 41,9 ) ( 41,40 )}; 
 \addplot [color=black, solid] coordinates { ( 42,9 ) ( 42,41 )}; 
 \addplot [color=black, solid] coordinates { ( 43,9 ) ( 43,42 )}; 
 \addplot [color=black, solid] coordinates { ( 44,9 ) ( 44,43 )}; 
 \addplot [color=black, solid] coordinates { ( 45,9 ) ( 45,44 )}; 
 \addplot [color=black, solid] coordinates { ( 46,10 ) ( 46,45 )}; 
 \addplot [color=black, solid] coordinates { ( 47,10 ) ( 47,46 )}; 
 \addplot [color=black, solid] coordinates { ( 48,10 ) ( 48,47 )}; 
 \addplot [color=black, solid] coordinates { ( 49,10 ) ( 49,48 )}; 
 \addplot [color=black, solid] coordinates { ( 50,10 ) ( 50,49 )}; 
 \addplot [color=blue, solid] coordinates { ( 51,10 ) ( 51,49 )}; 
 \addplot [color=black, solid] coordinates { ( 52,10 ) ( 52,50 )}; 
 \addplot [color=black, solid] coordinates { ( 53,10 ) ( 53,51 )}; 
 \addplot [color=black, solid] coordinates { ( 54,10 ) ( 54,52 )}; 
 \addplot [color=black, solid] coordinates { ( 55,10 ) ( 55,52 )}; 
 \addplot [color=black, solid] coordinates { ( 56,10 ) ( 56,53 )}; 
   \addplot [color=black, solid] coordinates { ( 0,0 ) ( 57,57 )};
\end{axis}
\end{tikzpicture} 
\caption{Simultaneous $95\%$ confidence intervals for $n^+(\mathcal{I}_k)$ with Fisher's combining function, where $\mathcal{I}_k$ are the indexes of the top $k=|\mathcal{I}_k|$ schools with largest (in absolute value) standardized effects. For example, within the top $38$ schools, we have at least 9 positive effects and at least 1 non-positive effect (red interval); or within the top $51$ schools, we have at least 10 positive effects and at least 2 non-positive effect (blue interval). Dots represent degenerate intervals $[k,k]$.}
\label{Figure:schooltopk}
\end{figure}

\section{Additional Simulation results}

For the simulation setting of $n=50$ parameters, described in \S~\ref{sec - sim} of the main text, we provide the following additional results. 

First, for the FDR controlling procedure of \cite{guo2015stepwise}, GR-FDR, we provide the actual coverage guarantee in the settings considered in Figure \ref{fig-SM-GRFDR-coverage}. Specifically, we provide the estimated probability (based on 2000 simulation runs for each data generation) that the number of positive and non-positive discoveries does not exceed the true number of positive and non-positive parameters, respectively, in the discovery set. The coverage can be much lower than the 0.95 coverage guarantee, since this procedure only provides the guarantee that the FDP is at most $\alpha$ in expectation. This is in contrast to all other procedures that provide the 0.95  coverage guarantee.

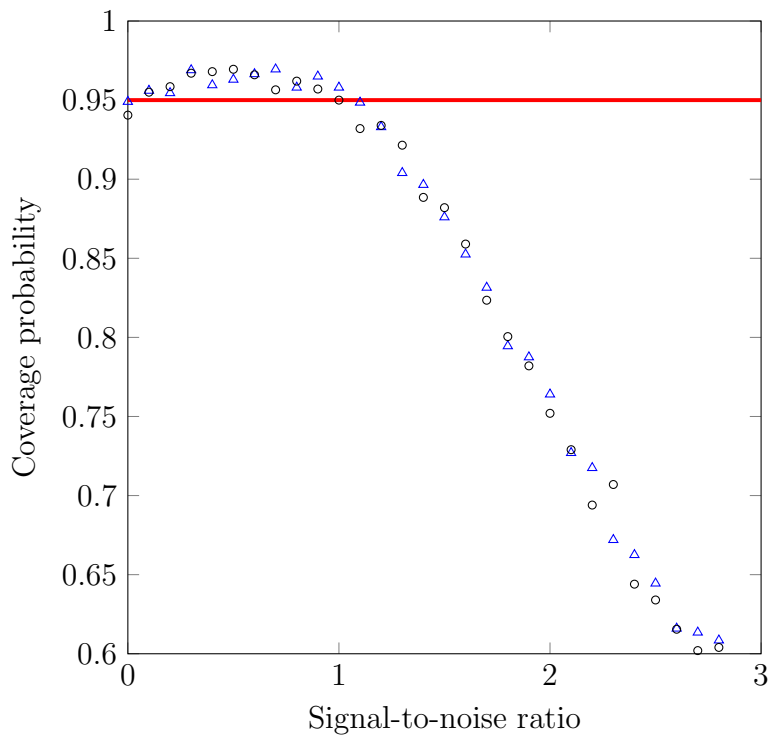
\begin{figure}[htbp]
\centering
\begin{tikzpicture}[scale=1]
	\begin{axis}[
	xmin = 0,
	xmax = 3,
	xtick ={0,...,3},
xticklabels ={0,1,2,3},
	ymin = 0.6,
	ymax = 1,
	xlabel=Signal-to-noise ratio,
		ylabel= Coverage probability,
		height=10cm,
		width=10cm
	]
	\addplot[only marks, mark size=2pt, color=blue,  mark=triangle] coordinates {
( 0,0.949 ) ( 0.1,0.956 ) ( 0.2,0.9545 ) ( 0.3,0.969 ) ( 0.4,0.9595 ) ( 0.5,0.963 ) ( 0.6,0.9665 ) ( 0.7,0.9695 ) ( 0.8,0.958 ) ( 0.9,0.965 ) ( 1,0.958 ) ( 1.1,0.9485 ) ( 1.2,0.933 ) ( 1.3,0.904 ) ( 1.4,0.8965 ) ( 1.5,0.876 ) ( 1.6,0.8525 ) ( 1.7,0.8315 ) ( 1.8,0.7945 ) ( 1.9,0.7875 ) ( 2,0.764 ) ( 2.1,0.727 ) ( 2.2,0.7175 ) ( 2.3,0.672 ) ( 2.4,0.6625 ) ( 2.5,0.6445 ) ( 2.6,0.616 ) ( 2.7,0.6135 ) ( 2.8,0.6085 ) ( 2.9,0.5755 ) ( 3,0.5795 )
};

	\addplot[only marks, mark size=1.5pt, color=black,  mark=o] coordinates {
( 0,0.9405 ) ( 0.1,0.955 ) ( 0.2,0.9585 ) ( 0.3,0.967 ) ( 0.4,0.968 ) ( 0.5,0.9695 ) ( 0.6,0.966 ) ( 0.7,0.9565 ) ( 0.8,0.962 ) ( 0.9,0.957 ) ( 1,0.95 ) ( 1.1,0.932 ) ( 1.2,0.934 ) ( 1.3,0.9215 ) ( 1.4,0.8885 ) ( 1.5,0.882 ) ( 1.6,0.859 ) ( 1.7,0.8235 ) ( 1.8,0.8005 ) ( 1.9,0.782 ) ( 2,0.752 ) ( 2.1,0.729 ) ( 2.2,0.694 ) ( 2.3,0.707 ) ( 2.4,0.644 ) ( 2.5,0.634 ) ( 2.6,0.6155 ) ( 2.7,0.602 ) ( 2.8,0.604 ) ( 2.9,0.5735 ) ( 3,0.5765 )
};

\addplot[mark=none, color=red, line width=1.5] coordinates {(0,.95) (3,.95)};
	\end{axis}
\end{tikzpicture} 
\caption{\label{fig-SM-GRFDR-coverage} For the GR-FDR procedure with discovery set $\mathcal R$, the coverage probability of $n^+(\mathcal R)$ and $n^-(\mathcal R)+n^0(\mathcal R)$  versus the signal-to-noise ratio, for the following settings: $n^+=n^-=15$ (black circles); $n^+=30$ and $n^-=0$ (blue triangles). The horizontal red line is the desired coverage probability of 0.95.}
\end{figure}

Second, for  DCT and partitioning, we compare four combining methods: Fisher, ALRT, Simes, and mSimes. Figure \ref{fig-SM-sim} shows that mSimes dominates Simes: it provides tighter bounds and more discoveries. For the bounds, Fisher is best and ALRT is a close second. For discoveries, this relation is reversed  (with ALRT, there are slightly more discoveries than with Fisher), but they are both  much worse than Simes.   

Finally, we note that we performed additional simulations, omitted for brevity, varying the parameter values within each data generation settings and considering more configurations of $n^+$ and $n^-$. The qualitative conclusions above regarding the relative performance of the different combining methods remained unchanged. 

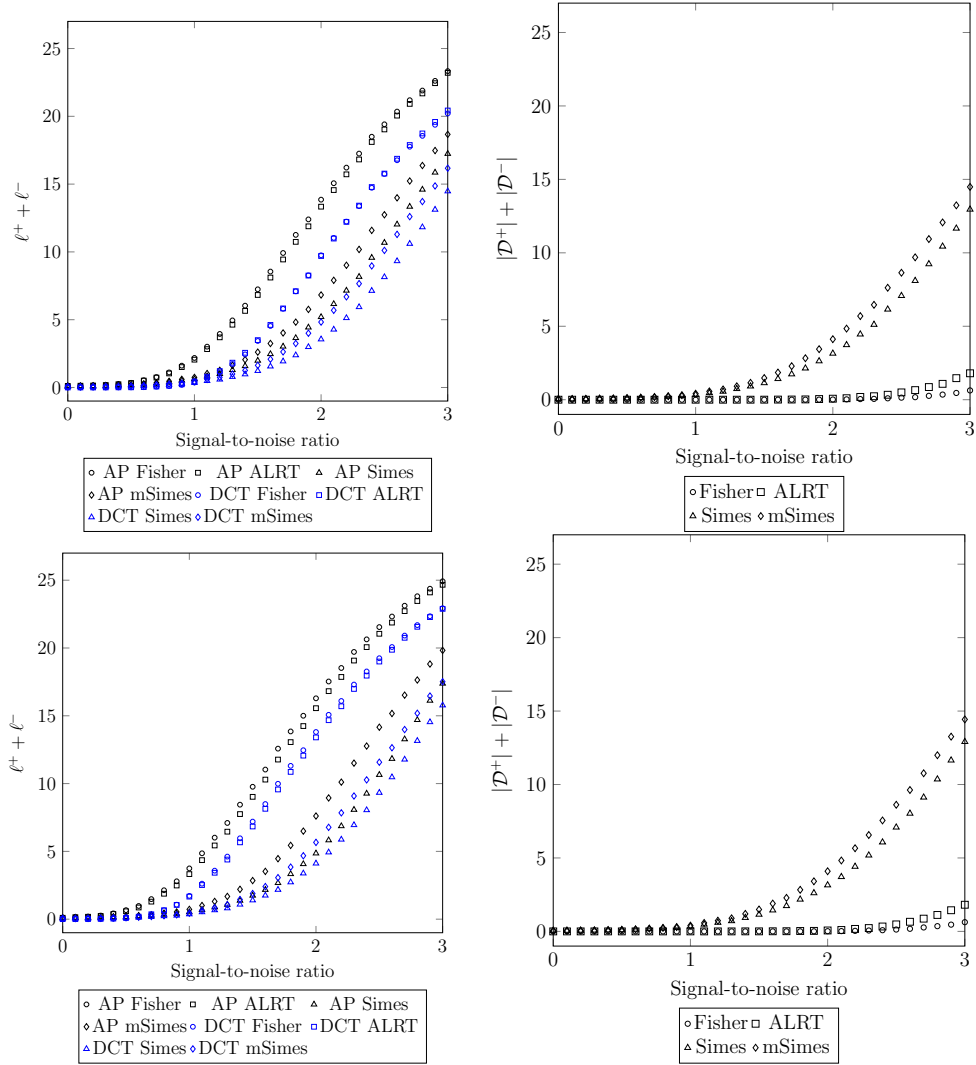
\begin{figure}[htbp]
\centering
\begin{tikzpicture}[scale=.6]
	\begin{axis}[
	xmin = 0,
	xmax = 3,
	xtick ={0,...,3},
xticklabels ={0,1,2,3},
	ymin = -1,
	ymax = 27,
	xlabel=Signal-to-noise ratio,
		ylabel= $\ell^+ + \ell^-$,
		height=10cm,
		width=10cm,
		legend style={at={(0.5,-0.15)},
		anchor=north,legend columns=3}
	]
	\addplot[only marks, mark size=1.5pt, color=black,  mark=o] coordinates {
( 0,0.1235 ) ( 0.1,0.135000000000005 ) ( 0.2,0.154500000000006 ) ( 0.3,0.188000000000002 ) ( 0.4,0.253 ) ( 0.5,0.344999999999999 ) ( 0.6,0.534500000000001 ) ( 0.7,0.767499999999998 ) ( 0.8,1.11 ) ( 0.9,1.6095 ) ( 1,2.178 ) ( 1.1,3.0115 ) ( 1.2,3.94900000000001 ) ( 1.3,4.948 ) ( 1.4,6.035 ) ( 1.5,7.24 ) ( 1.6,8.5495 ) ( 1.7,9.9195 ) ( 1.8,11.256 ) ( 1.9,12.3975 ) ( 2,13.854 ) ( 2.1,15.066 ) ( 2.2,16.216 ) ( 2.3,17.2505 ) ( 2.4,18.495 ) ( 2.5,19.4095 ) ( 2.6,20.349 ) ( 2.7,21.188 ) ( 2.8,21.9005 ) ( 2.9,22.6145 ) ( 3,23.332 )
};
 \addlegendentry{AP Fisher}

 	\addplot[only marks, mark size=1.5pt, color=black,  mark=square] coordinates {
( 0,0.115000000000002 ) ( 0.1,0.131 ) ( 0.2,0.143999999999998 ) ( 0.3,0.1785 ) ( 0.4,0.238999999999997 ) ( 0.5,0.329000000000001 ) ( 0.6,0.509 ) ( 0.7,0.731500000000004 ) ( 0.8,1.0365 ) ( 0.9,1.5045 ) ( 1,2.0515 ) ( 1.1,2.8295 ) ( 1.2,3.69900000000001 ) ( 1.3,4.6295 ) ( 1.4,5.648 ) ( 1.5,6.82400000000001 ) ( 1.6,8.1075 ) ( 1.7,9.441 ) ( 1.8,10.741 ) ( 1.9,11.8865 ) ( 2,13.3355 ) ( 2.1,14.562 ) ( 2.2,15.721 ) ( 2.3,16.818 ) ( 2.4,18.106 ) ( 2.5,19.033 ) ( 2.6,20.051 ) ( 2.7,20.9155 ) ( 2.8,21.688 ) ( 2.9,22.4445 ) ( 3,23.204 )
};
 \addlegendentry{AP ALRT}

  	\addplot[only marks, mark size=2pt, color=black,  mark=triangle] coordinates {
( 0,0.100000000000001 ) ( 0.1,0.0954999999999941 ) ( 0.2,0.115499999999997 ) ( 0.3,0.128999999999998 ) ( 0.4,0.159500000000001 ) ( 0.5,0.194499999999998 ) ( 0.6,0.2395 ) ( 0.7,0.308 ) ( 0.8,0.394500000000001 ) ( 0.9,0.521000000000001 ) ( 1,0.622499999999995 ) ( 1.1,0.830000000000005 ) ( 1.2,1.0065 ) ( 1.3,1.2955 ) ( 1.4,1.5835 ) ( 1.5,1.984 ) ( 1.6,2.4705 ) ( 1.7,3.024 ) ( 1.8,3.65300000000001 ) ( 1.9,4.432 ) ( 2,5.2015 ) ( 2.1,6.161 ) ( 2.2,7.15000000000001 ) ( 2.3,8.167 ) ( 2.4,9.573 ) ( 2.5,10.6795 ) ( 2.6,12.028 ) ( 2.7,13.339 ) ( 2.8,14.5985 ) ( 2.9,15.855 ) ( 3,17.24 )
};
 \addlegendentry{AP Simes}

   	\addplot[only marks, mark size=2pt, color=black,  mark=diamond] coordinates {
( 0,0.0974999999999966 ) ( 0.1,0.0999999999999943 ) ( 0.2,0.115500000000004 ) ( 0.3,0.133000000000003 ) ( 0.4,0.168999999999997 ) ( 0.5,0.201000000000001 ) ( 0.6,0.261000000000003 ) ( 0.7,0.340499999999999 ) ( 0.8,0.457000000000001 ) ( 0.9,0.600500000000004 ) ( 1,0.744500000000002 ) ( 1.1,1.0265 ) ( 1.2,1.2655 ) ( 1.3,1.6325 ) ( 1.4,2.0515 ) ( 1.5,2.608 ) ( 1.6,3.246 ) ( 1.7,4.023 ) ( 1.8,4.819 ) ( 1.9,5.7595 ) ( 2,6.824 ) ( 2.1,7.9055 ) ( 2.2,9.015 ) ( 2.3,10.181 ) ( 2.4,11.5885 ) ( 2.5,12.7385 ) ( 2.6,13.989 ) ( 2.7,15.23 ) ( 2.8,16.3715 ) ( 2.9,17.475 ) ( 3,18.667 )
};
 \addlegendentry{AP mSimes}

	\addplot[only marks, mark size=1.5pt, color=blue,  mark=o] coordinates {
( 0,0.00150000000000006 ) ( 0.1,0.00100000000000477 ) ( 0.2,0.00300000000000011 ) ( 0.3,0.00400000000000489 ) ( 0.4,0.00650000000000261 ) ( 0.5,0.00999999999999801 ) ( 0.6,0.0205000000000055 ) ( 0.7,0.0590000000000046 ) ( 0.8,0.104999999999997 ) ( 0.9,0.192 ) ( 1,0.369999999999997 ) ( 1.1,0.714500000000001 ) ( 1.2,1.1245 ) ( 1.3,1.731 ) ( 1.4,2.439 ) ( 1.5,3.4325 ) ( 1.6,4.5395 ) ( 1.7,5.802 ) ( 1.8,7.0905 ) ( 1.9,8.27249999999999 ) ( 2,9.749 ) ( 2.1,11.034 ) ( 2.2,12.233 ) ( 2.3,13.3745 ) ( 2.4,14.731 ) ( 2.5,15.7495 ) ( 2.6,16.759 ) ( 2.7,17.759 ) ( 2.8,18.559 ) ( 2.9,19.3755 ) ( 3,20.211 )
};
 \addlegendentry{DCT Fisher}

 	\addplot[only marks, mark size=1.5pt, color=blue,  mark=square] coordinates {
( 0,0.00250000000000483 ) ( 0.1,0.00300000000000011 ) ( 0.2,0.00300000000000011 ) ( 0.3,0.00500000000000256 ) ( 0.4,0.0115000000000052 ) ( 0.5,0.0139999999999958 ) ( 0.6,0.0309999999999988 ) ( 0.7,0.0734999999999957 ) ( 0.8,0.131 ) ( 0.9,0.237500000000004 ) ( 1,0.433499999999995 ) ( 1.1,0.795500000000004 ) ( 1.2,1.2115 ) ( 1.3,1.83 ) ( 1.4,2.5605 ) ( 1.5,3.4935 ) ( 1.6,4.612 ) ( 1.7,5.8315 ) ( 1.8,7.10149999999999 ) ( 1.9,8.255 ) ( 2,9.6985 ) ( 2.1,10.9845 ) ( 2.2,12.216 ) ( 2.3,13.414 ) ( 2.4,14.783 ) ( 2.5,15.777 ) ( 2.6,16.871 ) ( 2.7,17.882 ) ( 2.8,18.737 ) ( 2.9,19.5875 ) ( 3,20.4295 )
};
 \addlegendentry{DCT ALRT}

  	\addplot[only marks, mark size=2pt, color=blue,  mark=triangle] coordinates {
( 0,0.0520000000000067 ) ( 0.1,0.0414999999999992 ) ( 0.2,0.052500000000002 ) ( 0.3,0.0720000000000027 ) ( 0.4,0.0790000000000006 ) ( 0.5,0.103999999999999 ) ( 0.6,0.128999999999998 ) ( 0.7,0.174500000000002 ) ( 0.8,0.221499999999999 ) ( 0.9,0.279500000000006 ) ( 1,0.351500000000001 ) ( 1.1,0.485999999999997 ) ( 1.2,0.592500000000001 ) ( 1.3,0.787999999999997 ) ( 1.4,0.977499999999999 ) ( 1.5,1.227 ) ( 1.6,1.557 ) ( 1.7,1.9315 ) ( 1.8,2.3825 ) ( 1.9,2.9895 ) ( 2,3.555 ) ( 2.1,4.2775 ) ( 2.2,5.1255 ) ( 2.3,5.9355 ) ( 2.4,7.135 ) ( 2.5,8.1465 ) ( 2.6,9.3235 ) ( 2.7,10.603 ) ( 2.8,11.8245 ) ( 2.9,13.1185 ) ( 3,14.4735 )
};
 \addlegendentry{DCT Simes}

   	\addplot[only marks, mark size=2pt, color=blue,  mark=diamond] coordinates {
( 0,0.0510000000000019 ) ( 0.1,0.0429999999999993 ) ( 0.2,0.0534999999999997 ) ( 0.3,0.072499999999998 ) ( 0.4,0.0819999999999936 ) ( 0.5,0.107500000000002 ) ( 0.6,0.139499999999998 ) ( 0.7,0.192 ) ( 0.8,0.247999999999998 ) ( 0.9,0.324999999999996 ) ( 1,0.426500000000004 ) ( 1.1,0.597499999999997 ) ( 1.2,0.750999999999998 ) ( 1.3,0.994 ) ( 1.4,1.259 ) ( 1.5,1.631 ) ( 1.6,2.0855 ) ( 1.7,2.627 ) ( 1.8,3.2435 ) ( 1.9,4.002 ) ( 2,4.8235 ) ( 2.1,5.6935 ) ( 2.2,6.6865 ) ( 2.3,7.6595 ) ( 2.4,8.966 ) ( 2.5,10.111 ) ( 2.6,11.292 ) ( 2.7,12.601 ) ( 2.8,13.713 ) ( 2.9,14.8645 ) ( 3,16.172 )
};
 \addlegendentry{DCT mSimes}

	\end{axis}
\end{tikzpicture} ~ \begin{tikzpicture}[scale=0.65]
	\begin{axis}[
	xmin = 0,
	xmax = 3,
	xtick ={0,...,3},
xticklabels ={0,1,2,3},
	ymin = -1,
	ymax = 27,
	xlabel=Signal-to-noise ratio,
		ylabel= $|\mathcal{D}^+| + |\mathcal{D}^-|$,
		height=10cm,
		width=10cm,
		legend style={at={(0.5,-0.15)},
		anchor=north,legend columns=2}
	]
	\addplot[only marks, mark size=1.5pt, color=black,  mark=o] coordinates {
( 0,0 ) ( 0.1,0 ) ( 0.2,0 ) ( 0.3,0 ) ( 0.4,0 ) ( 0.5,0 ) ( 0.6,0 ) ( 0.7,0 ) ( 0.8,0 ) ( 0.9,0 ) ( 1,0 ) ( 1.1,0 ) ( 1.2,0 ) ( 1.3,0 ) ( 1.4,5e-04 ) ( 1.5,0.0015 ) ( 1.6,5e-04 ) ( 1.7,0.001 ) ( 1.8,0.0035 ) ( 1.9,0.006 ) ( 2,0.0135 ) ( 2.1,0.0235 ) ( 2.2,0.044 ) ( 2.3,0.0565 ) ( 2.4,0.0955 ) ( 2.5,0.1405 ) ( 2.6,0.172 ) ( 2.7,0.264 ) ( 2.8,0.3495 ) ( 2.9,0.46 ) ( 3,0.6325 )
};
 \addlegendentry{Fisher}

	\addplot[only marks, mark size=2pt, color=black,  mark=square] coordinates {
( 0,0 ) ( 0.1,0 ) ( 0.2,0 ) ( 0.3,0 ) ( 0.4,0 ) ( 0.5,0 ) ( 0.6,0 ) ( 0.7,0 ) ( 0.8,0 ) ( 0.9,0 ) ( 1,5e-04 ) ( 1.1,0 ) ( 1.2,5e-04 ) ( 1.3,5e-04 ) ( 1.4,0.001 ) ( 1.5,0.003 ) ( 1.6,0.0085 ) ( 1.7,0.0125 ) ( 1.8,0.021 ) ( 1.9,0.045 ) ( 2,0.069 ) ( 2.1,0.102 ) ( 2.2,0.189 ) ( 2.3,0.237 ) ( 2.4,0.336 ) ( 2.5,0.4975 ) ( 2.6,0.6535 ) ( 2.7,0.8615 ) ( 2.8,1.08 ) ( 2.9,1.468 ) ( 3,1.7995 )
};
 \addlegendentry{ALRT}

	\addplot[only marks, mark size=2pt, color=black,  mark=triangle] coordinates {
( 0,0.052 ) ( 0.1,0.0415 ) ( 0.2,0.0515 ) ( 0.3,0.0705 ) ( 0.4,0.078 ) ( 0.5,0.103 ) ( 0.6,0.1275 ) ( 0.7,0.169 ) ( 0.8,0.217 ) ( 0.9,0.269 ) ( 1,0.34 ) ( 1.1,0.4635 ) ( 1.2,0.5605 ) ( 1.3,0.741 ) ( 1.4,0.916 ) ( 1.5,1.141 ) ( 1.6,1.4215 ) ( 1.7,1.741 ) ( 1.8,2.14 ) ( 1.9,2.6295 ) ( 2,3.1495 ) ( 2.1,3.7305 ) ( 2.2,4.455 ) ( 2.3,5.114 ) ( 2.4,6.151 ) ( 2.5,7.076 ) ( 2.6,8.0985 ) ( 2.7,9.237 ) ( 2.8,10.437 ) ( 2.9,11.6615 ) ( 3,12.956 )
};
 \addlegendentry{Simes}

	\addplot[only marks, mark size=2pt, color=black,  mark=diamond] coordinates {
( 0,0.051 ) ( 0.1,0.043 ) ( 0.2,0.0525 ) ( 0.3,0.0725 ) ( 0.4,0.081 ) ( 0.5,0.107 ) ( 0.6,0.1365 ) ( 0.7,0.185 ) ( 0.8,0.24 ) ( 0.9,0.308 ) ( 1,0.409 ) ( 1.1,0.5685 ) ( 1.2,0.6995 ) ( 1.3,0.9155 ) ( 1.4,1.1485 ) ( 1.5,1.4765 ) ( 1.6,1.85 ) ( 1.7,2.285 ) ( 1.8,2.8255 ) ( 1.9,3.443 ) ( 2,4.121 ) ( 2.1,4.841 ) ( 2.2,5.694 ) ( 2.3,6.4585 ) ( 2.4,7.6205 ) ( 2.5,8.6435 ) ( 2.6,9.694 ) ( 2.7,10.9385 ) ( 2.8,12.0695 ) ( 2.9,13.233 ) ( 3,14.487 )
};
 \addlegendentry{mSimes}

	\end{axis}
\end{tikzpicture} 
\begin{tikzpicture}[scale=.6]
	\begin{axis}[
	xmin = 0,
	xmax = 3,
	xtick ={0,...,3},
xticklabels ={0,1,2,3},
	ymin = -1,
	ymax = 27,
	xlabel=Signal-to-noise ratio,
		ylabel= $\ell^+ + \ell^-$,
		height=10cm,
		width=10cm,
		legend style={at={(0.5,-0.15)},
		anchor=north,legend columns=3}
	]
	\addplot[only marks, mark size=1.5pt, color=black,  mark=o] coordinates {
( 0,0.105499999999999 ) ( 0.1,0.160000000000004 ) ( 0.2,0.1845 ) ( 0.3,0.265499999999996 ) ( 0.4,0.424999999999997 ) ( 0.5,0.664999999999999 ) ( 0.6,0.972500000000004 ) ( 0.7,1.4795 ) ( 0.8,2.1325 ) ( 0.9,2.7925 ) ( 1,3.7435 ) ( 1.1,4.8585 ) ( 1.2,6.0135 ) ( 1.3,7.0975 ) ( 1.4,8.44450000000001 ) ( 1.5,9.765 ) ( 1.6,11.0345 ) ( 1.7,12.5775 ) ( 1.8,13.846 ) ( 1.9,15.0035 ) ( 2,16.288 ) ( 2.1,17.53 ) ( 2.2,18.5255 ) ( 2.3,19.7095 ) ( 2.4,20.6365 ) ( 2.5,21.542 ) ( 2.6,22.3205 ) ( 2.7,23.1185 ) ( 2.8,23.816 ) ( 2.9,24.372 ) ( 3,24.919 )
};
 \addlegendentry{AP Fisher}

 	\addplot[only marks, mark size=1.5pt, color=black,  mark=square] coordinates {
( 0,0.101999999999997 ) ( 0.1,0.146499999999996 ) ( 0.2,0.178000000000004 ) ( 0.3,0.238500000000002 ) ( 0.4,0.381500000000003 ) ( 0.5,0.583500000000001 ) ( 0.6,0.861499999999999 ) ( 0.7,1.293 ) ( 0.8,1.875 ) ( 0.9,2.48350000000001 ) ( 1,3.3245 ) ( 1.1,4.354 ) ( 1.2,5.4365 ) ( 1.3,6.453 ) ( 1.4,7.75150000000001 ) ( 1.5,9.02200000000001 ) ( 1.6,10.2935 ) ( 1.7,11.7715 ) ( 1.8,13.059 ) ( 1.9,14.243 ) ( 2,15.561 ) ( 2.1,16.818 ) ( 2.2,17.864 ) ( 2.3,19.074 ) ( 2.4,20.065 ) ( 2.5,21.0485 ) ( 2.6,21.8725 ) ( 2.7,22.72 ) ( 2.8,23.4635 ) ( 2.9,24.098 ) ( 3,24.6615 )
};
 \addlegendentry{AP ALRT}

  	\addplot[only marks, mark size=2pt, color=black,  mark=triangle] coordinates {
( 0,0.100499999999997 ) ( 0.1,0.104999999999997 ) ( 0.2,0.1145 ) ( 0.3,0.122999999999998 ) ( 0.4,0.152999999999999 ) ( 0.5,0.175000000000004 ) ( 0.6,0.220500000000001 ) ( 0.7,0.268999999999998 ) ( 0.8,0.341499999999996 ) ( 0.9,0.402499999999996 ) ( 1,0.518500000000003 ) ( 1.1,0.698499999999996 ) ( 1.2,0.881 ) ( 1.3,1.071 ) ( 1.4,1.3725 ) ( 1.5,1.7445 ) ( 1.6,2.1705 ) ( 1.7,2.6725 ) ( 1.8,3.33199999999999 ) ( 1.9,4.068 ) ( 2,4.8545 ) ( 2.1,5.82299999999999 ) ( 2.2,6.8655 ) ( 2.3,8.072 ) ( 2.4,9.2675 ) ( 2.5,10.6485 ) ( 2.6,11.829 ) ( 2.7,13.2745 ) ( 2.8,14.6955 ) ( 2.9,16.1195 ) ( 3,17.3795 )
};
 \addlegendentry{AP Simes}

   	\addplot[only marks, mark size=2pt, color=black,  mark=diamond] coordinates {
( 0,0.102000000000004 ) ( 0.1,0.109999999999999 ) ( 0.2,0.115499999999997 ) ( 0.3,0.131999999999998 ) ( 0.4,0.170999999999999 ) ( 0.5,0.1935 ) ( 0.6,0.259499999999996 ) ( 0.7,0.3215 ) ( 0.8,0.444000000000003 ) ( 0.9,0.532000000000004 ) ( 1,0.721499999999999 ) ( 1.1,1.00449999999999 ) ( 1.2,1.3075 ) ( 1.3,1.677 ) ( 1.4,2.1995 ) ( 1.5,2.8465 ) ( 1.6,3.5245 ) ( 1.7,4.46149999999999 ) ( 1.8,5.4425 ) ( 1.9,6.493 ) ( 2,7.60749999999999 ) ( 2.1,8.941 ) ( 2.2,10.1065 ) ( 2.3,11.507 ) ( 2.4,12.768 ) ( 2.5,14.1575 ) ( 2.6,15.1695 ) ( 2.7,16.522 ) ( 2.8,17.6385 ) ( 2.9,18.817 ) ( 3,19.823 )
};
 \addlegendentry{AP mSimes}

	\addplot[only marks, mark size=1.5pt, color=blue,  mark=o] coordinates {
( 0,0.00200000000000244 ) ( 0.1,0.0034999999999954 ) ( 0.2,0.00750000000000028 ) ( 0.3,0.0174999999999983 ) ( 0.4,0.035499999999999 ) ( 0.5,0.0805000000000007 ) ( 0.6,0.162999999999997 ) ( 0.7,0.351999999999997 ) ( 0.8,0.651499999999999 ) ( 0.9,1.067 ) ( 1,1.717 ) ( 1.1,2.6185 ) ( 1.2,3.5785 ) ( 1.3,4.6155 ) ( 1.4,5.964 ) ( 1.5,7.1955 ) ( 1.6,8.49 ) ( 1.7,9.9855 ) ( 1.8,11.3085 ) ( 1.9,12.46 ) ( 2,13.804 ) ( 2.1,15.0715 ) ( 2.2,16.0915 ) ( 2.3,17.3025 ) ( 2.4,18.2685 ) ( 2.5,19.2445 ) ( 2.6,20.072 ) ( 2.7,20.9145 ) ( 2.8,21.6885 ) ( 2.9,22.342 ) ( 3,22.929 )
};
 \addlegendentry{DCT Fisher}

 	\addplot[only marks, mark size=1.5pt, color=blue,  mark=square] coordinates {
( 0,0.00300000000000011 ) ( 0.1,0.00450000000000017 ) ( 0.2,0.0114999999999981 ) ( 0.3,0.0220000000000056 ) ( 0.4,0.0420000000000016 ) ( 0.5,0.0904999999999987 ) ( 0.6,0.1845 ) ( 0.7,0.359499999999997 ) ( 0.8,0.660499999999999 ) ( 0.9,1.05 ) ( 1,1.663 ) ( 1.1,2.5175 ) ( 1.2,3.4225 ) ( 1.3,4.3955 ) ( 1.4,5.6665 ) ( 1.5,6.8365 ) ( 1.6,8.1395 ) ( 1.7,9.5665 ) ( 1.8,10.8615 ) ( 1.9,12.0615 ) ( 2,13.406 ) ( 2.1,14.6765 ) ( 2.2,15.7035 ) ( 2.3,16.98 ) ( 2.4,17.96 ) ( 2.5,19.001 ) ( 2.6,19.8685 ) ( 2.7,20.752 ) ( 2.8,21.564 ) ( 2.9,22.2785 ) ( 3,22.887 )
};
 \addlegendentry{DCT ALRT}

  	\addplot[only marks, mark size=2pt, color=blue,  mark=triangle] coordinates {
( 0,0.0510000000000019 ) ( 0.1,0.0534999999999997 ) ( 0.2,0.0599999999999952 ) ( 0.3,0.069500000000005 ) ( 0.4,0.088000000000001 ) ( 0.5,0.103000000000002 ) ( 0.6,0.131999999999998 ) ( 0.7,0.18 ) ( 0.8,0.228000000000002 ) ( 0.9,0.272500000000001 ) ( 1,0.365499999999997 ) ( 1.1,0.521000000000001 ) ( 1.2,0.666499999999999 ) ( 1.3,0.8095 ) ( 1.4,1.08300000000001 ) ( 1.5,1.3945 ) ( 1.6,1.7495 ) ( 1.7,2.1735 ) ( 1.8,2.7255 ) ( 1.9,3.38 ) ( 2,4.1115 ) ( 2.1,4.9315 ) ( 2.2,5.8755 ) ( 2.3,6.94450000000001 ) ( 2.4,8.04649999999999 ) ( 2.5,9.3275 ) ( 2.6,10.473 ) ( 2.7,11.7795 ) ( 2.8,13.155 ) ( 2.9,14.5375 ) ( 3,15.7735 )
};
 \addlegendentry{DCT Simes}

   	\addplot[only marks, mark size=2pt, color=blue,  mark=diamond] coordinates {
( 0,0.0535000000000068 ) ( 0.1,0.0500000000000043 ) ( 0.2,0.0609999999999999 ) ( 0.3,0.069500000000005 ) ( 0.4,0.0899999999999963 ) ( 0.5,0.106999999999999 ) ( 0.6,0.146000000000001 ) ( 0.7,0.199000000000005 ) ( 0.8,0.258000000000003 ) ( 0.9,0.322000000000003 ) ( 1,0.442500000000003 ) ( 1.1,0.645499999999998 ) ( 1.2,0.842000000000006 ) ( 1.3,1.064 ) ( 1.4,1.4485 ) ( 1.5,1.889 ) ( 1.6,2.4055 ) ( 1.7,3.0605 ) ( 1.8,3.8415 ) ( 1.9,4.678 ) ( 2,5.6625 ) ( 2.1,6.7735 ) ( 2.2,7.8415 ) ( 2.3,9.084 ) ( 2.4,10.273 ) ( 2.5,11.5765 ) ( 2.6,12.644 ) ( 2.7,13.984 ) ( 2.8,15.186 ) ( 2.9,16.4555 ) ( 3,17.512 )
};
 \addlegendentry{DCT mSimes}

	\end{axis}
\end{tikzpicture} ~ \begin{tikzpicture}[scale=0.65]
	\begin{axis}[
	xmin = 0,
	xmax = 3,
	xtick ={0,...,3},
xticklabels ={0,1,2,3},
	ymin = -1,
	ymax = 27,
	xlabel=Signal-to-noise ratio,
		ylabel= $|\mathcal{D}^+| + |\mathcal{D}^-|$,
		height=10cm,
		width=10cm,
		legend style={at={(0.5,-0.15)},
		anchor=north,legend columns=2}
	]
	\addplot[only marks, mark size=1.5pt, color=black,  mark=o] coordinates {
( 0,0 ) ( 0.1,0 ) ( 0.2,0 ) ( 0.3,0 ) ( 0.4,0 ) ( 0.5,0 ) ( 0.6,0 ) ( 0.7,0 ) ( 0.8,0 ) ( 0.9,0 ) ( 1,0 ) ( 1.1,0 ) ( 1.2,0 ) ( 1.3,0 ) ( 1.4,5e-04 ) ( 1.5,0 ) ( 1.6,0.001 ) ( 1.7,0.003 ) ( 1.8,0.0045 ) ( 1.9,0.007 ) ( 2,0.0145 ) ( 2.1,0.019 ) ( 2.2,0.037 ) ( 2.3,0.0615 ) ( 2.4,0.0815 ) ( 2.5,0.1265 ) ( 2.6,0.1775 ) ( 2.7,0.2685 ) ( 2.8,0.356 ) ( 2.9,0.46 ) ( 3,0.6275 )
};
 \addlegendentry{Fisher}

	\addplot[only marks, mark size=2pt, color=black,  mark=square] coordinates {
( 0,0 ) ( 0.1,0 ) ( 0.2,0 ) ( 0.3,0 ) ( 0.4,0 ) ( 0.5,0 ) ( 0.6,0 ) ( 0.7,0 ) ( 0.8,0 ) ( 0.9,0 ) ( 1,0 ) ( 1.1,0 ) ( 1.2,0 ) ( 1.3,0 ) ( 1.4,0.0025 ) ( 1.5,0.003 ) ( 1.6,0.0085 ) ( 1.7,0.0125 ) ( 1.8,0.0275 ) ( 1.9,0.0345 ) ( 2,0.063 ) ( 2.1,0.0945 ) ( 2.2,0.1685 ) ( 2.3,0.227 ) ( 2.4,0.333 ) ( 2.5,0.4765 ) ( 2.6,0.6545 ) ( 2.7,0.8675 ) ( 2.8,1.1195 ) ( 2.9,1.4405 ) ( 3,1.8135 )
};
 \addlegendentry{ALRT}

	\addplot[only marks, mark size=2pt, color=black,  mark=triangle] coordinates {
( 0,0.05 ) ( 0.1,0.0535 ) ( 0.2,0.058 ) ( 0.3,0.0685 ) ( 0.4,0.085 ) ( 0.5,0.0995 ) ( 0.6,0.1285 ) ( 0.7,0.1725 ) ( 0.8,0.212 ) ( 0.9,0.2565 ) ( 1,0.3375 ) ( 1.1,0.478 ) ( 1.2,0.6015 ) ( 1.3,0.7105 ) ( 1.4,0.9325 ) ( 1.5,1.152 ) ( 1.6,1.4415 ) ( 1.7,1.737 ) ( 1.8,2.1785 ) ( 1.9,2.613 ) ( 2,3.149 ) ( 2.1,3.7135 ) ( 2.2,4.414 ) ( 2.3,5.1815 ) ( 2.4,6.0695 ) ( 2.5,7.0845 ) ( 2.6,8.0285 ) ( 2.7,9.125 ) ( 2.8,10.359 ) ( 2.9,11.648 ) ( 3,12.91 )
};
 \addlegendentry{Simes}

	\addplot[only marks, mark size=2pt, color=black,  mark=diamond] coordinates {
( 0,0.052 ) ( 0.1,0.05 ) ( 0.2,0.0595 ) ( 0.3,0.067 ) ( 0.4,0.0885 ) ( 0.5,0.103 ) ( 0.6,0.14 ) ( 0.7,0.19 ) ( 0.8,0.2395 ) ( 0.9,0.2975 ) ( 1,0.394 ) ( 1.1,0.575 ) ( 1.2,0.7285 ) ( 1.3,0.8915 ) ( 1.4,1.1915 ) ( 1.5,1.4905 ) ( 1.6,1.895 ) ( 1.7,2.2835 ) ( 1.8,2.83 ) ( 1.9,3.4105 ) ( 2,4.095 ) ( 2.1,4.8275 ) ( 2.2,5.6625 ) ( 2.3,6.5615 ) ( 2.4,7.553 ) ( 2.5,8.6195 ) ( 2.6,9.625 ) ( 2.7,10.7655 ) ( 2.8,11.993 ) ( 2.9,13.257 ) ( 3,14.431 )
};
 \addlegendentry{mSimes}

	\end{axis}
\end{tikzpicture} 
\caption{\label{fig-SM-sim} 
The average sum of the lower bounds on positive and nonpositive parameters (left column) and  number of individual hypotheses rejected (right column), versus the signal-to-noise ratio,  
for the following methods: AP with combining functions Fisher (AP FISHER), ALRT (AP ALRT), Simes (AP SIMES), modified Simes (AP mSIMES); DCT with combining functions Fisher (DCT FISHER), ALRT (DCT ALRT), Simes (DCT SIMES), modified Simes (DCT mSIMES);. $n^+=n^-=15$ in the first row; $n^+=30$ and $n^-=0$ in the second row.  In the right column, the absence of DCT procedures  is due to the fact that they coincide with the AP procedures. 
} 
\end{figure}

\end{document}